\tikzset{
  level/.style   = { ultra thick, blue },
  connect/.style = { dashed, red },
  notice/.style  = { draw, rectangle callout, callout relative pointer={#1} },
  label/.style   = { text width=2cm }
}
\newcommand{\bB}{\mathbbmss{B}}
\newtheorem{theorem}{Theorem}[section]\newtheorem{lemma}[theorem]{Lemma}\newtheorem{proposition}[theorem]{Proposition}\newtheorem{corollary}[theorem]{Corollary}\newtheorem{conjecture}[theorem]{Conjecture}\newtheorem{remark}[theorem]{Remark}
\let\C\relax\newcommand{\C}{\mathbb{C}}\newcommand{\Z}{\mathbb{Z}}\newcommand{\R}{\mathbb{R}}\newcommand{\N}{\mathbb{N}}
\newcommand\cC{\mathcal{C}}\newcommand\cD{\mathcal{D}}\newcommand\cE{\mathcal{E}}\newcommand\cF{\mathcal{F}}\newcommand\cH{\mathcal{H}}\newcommand\cI{\mathcal{I}}\newcommand\cL{\mathcal{L}}\newcommand\cM{\mathcal{M}}\newcommand\cP{\mathcal{P}}\newcommand\cR{\mathcal{R}}\newcommand\cS{\mathcal{S}}\newcommand\cT{\mathcal{T}}\newcommand\cV{\mathcal{V}}
\DeclareMathOperator{\tr}{Tr}\DeclareMathOperator{\ran}{Ran}\DeclareMathOperator{\Ker}{Ker}\DeclareMathOperator{\re}{Re}
\def\d{{\rm d}}
\renewcommand{\ge}{\geqslant}\renewcommand{\le}{\leqslant}
\newcommand{\pa}[1]{\left( #1 \right)} 
\newcommand{\acs}[1]{\left\{ #1 \right\}} 
\newcommand{\seg}[1]{\left[ #1 \right]} 
\newcommand{\ab}[1]{\left|#1\right|} 
\newcommand{\ps}[1]{\left< #1 \right>} 
\newcommand{\proj}[1]{\big| #1 \big> \big< #1 \big|} 
\newcommand{\nor}[2]{ \left| \! \left| #1 \right| \! \right|_{#2} } 
\newcommand\vp{\varphi} 
\newcommand{\ep}{\varepsilon} 
\let\p\relax\newcommand{\p}{\Psi} 
\newcommand{\na}{\nabla} 
\newcommand{\f}[2]{\frac{#1}{#2}} 
\newcommand{\bul}{$\bullet$ \hspace{0.1cm}} 
\newcommand{\mymax}[1]{\underset{\substack{#1}}{\text{\normalfont{max}}}\quad} 
\newcommand{\mymin}[1]{\underset{\substack{#1}}{\text{\normalfont{min}}}\quad} 
\newcommand{\ind}[1]{_{\textup{#1}}} 
\newcommand{\apo}[1]{``#1''} 
\newcommand{\empt}{\varnothing}
\DeclareMathOperator{\supp}{{\rm supp }}
\newcommand{\mysum}[2]{\sum_{\substack{#1}}^{#2}}
\newcommand{\st}{\hspace{0.1cm} \bigr\vert \hspace{0.1cm}}
\newcommand{\mylim}[1]{\underset{\substack{#1}}{\text{\normalfont{lim}}} \hs}
\newcommand{\indic}{\mathds{1}}
\newcommand{\ra}{\rightarrow}
\newcommand{\ro}{\rho}
\newcommand{\expo}[1]{^{\textup{#1}}} 
\newcommand{\ii}{\infty}
\newcommand{\tx}[1]{\textup{#1}} 
\newcommand{\bhs}{\hspace{1cm}}
\newcommand{\hs}{\hspace{0.1cm}}
\newcommand{\sch}{\mathfrak{S}} 
\newcommand{\bpa}[1]{\big( #1 \big)}
\newcommand\restr[2]{#1_{\mkern 1mu \vrule height 2ex\mkern2mu #2}}
\newcommand{\ud}{\frac{1}{2}}
\newcommand{\mysup}[1]{\underset{\substack{#1}}{\text{\normalfont{sup}}}\hs}
\newcommand{\smallsum}{\begingroup\textstyle\sum\endgroup}
\newcommand{\myinf}[1]{\underset{\substack{#1}}{\text{\normalfont{inf}}}\hs}
\newcommand{\bigst}{\hspace{0.1cm} \Big| \hspace{0.1cm}} 
\newcommand{\ex}[1]{^{(#1)}}
\newcommand{\us}[2]{\underset{#1}{#2}}
\DeclareMathOperator{\sgn}{sgn}
\DeclareMathOperator{\diam}{diam}
\DeclareMathOperator{\vect}{Span}
\newcommand{\mat}[1]{\begin{pmatrix} #1 \end{pmatrix}} 
\newcommand{\floor}[1]{\left\lfloor #1 \right\rfloor} 
\newcommand{\ov}[1]{\overline{#1}} 
\newcommand{\mediumint}{\begingroup\textstyle\int\endgroup}
\newcommand{\wra}{\rightharpoonup}
\newcommand{\inv}[1]{\frac{1}{#1}}
\newcommand{\vv}[1]{\bpa{\ssum #1_i}} 
\newcommand{\ssum}{\begingroup\textstyle\sum\endgroup_i}
\newcommand{\ketbra}[2]{\left| #1 \right> \left< #2 \right|}
\def\Td{{\rm T}}
\newcommand{\exc}[1]{E^{(#1)}}
\newcommand{\ger}[1]{G_{\ro}^{(#1)}}
\newcommand{\lpi}{L^p+L^{\ii}}
\newcommand{\expa}[1]{^{(#1)}}
\newcommand{\cans}{\cS\ind{mix}^N} 
\newcommand{\canp}{\cS} 
\newcommand{\rodii}{\kappa} 
\newcommand{\come}{c_{\Omega}}
\newcommand{\wei}{\alpha} 
\newcommand{\weig}{\boldsymbol{\alpha}} 
\newcommand{\edi}[1]{E^{(#1)}\ind{dis}} 
\newcommand{\matix}{\cM_{\vp}} 
\newcommand{\matixx}[1]{\cM_{\vp,#1}} 
\newcommand{\gr}{Q} 
\newcommand{\hn}{H_N}
\newcommand{\fll}[1]{F^{\weig,(#1)}} 
\newcommand{\flln}[2]{F^{\weig_{#2},(#1)}}
\newcommand{\fl}[1]{F\ind{mix}^{\weig,(#1)}}
\newcommand{\fln}[2]{F\ind{mix}^{\weig_{#2},(#1)}}
\newcommand{\gew}[1]{G_{r,\weig}^{(#1)}}
\newcommand{\kerr}{\Ker \bpa{\hn(v)-\exc{k}(v)}}
\newcommand{\kerrr}{\Ker_{\R} \bpa{\hn(v)-\exc{k}(v)}}
\newcommand{\kerw}{\Ker \bpa{\hn^{w=0}(v)-\exc{k}(v)}}
\newcommand{\kerrw}{\Ker_{\R} \bpa{\hn^{w=0}(v)-\exc{k}(v)}}
\newcommand{\spf}[1]{\cV_{N,\partial}^{(#1)}} 
\def\moverlay{\mathpalette\mov@rlay}
\def\mov@rlay#1#2{\leavevmode\vtop{%
   \baselineskip\z@skip \lineskiplimit-\maxdimen
   \ialign{\hfil$\m@th#1##$\hfil\cr#2\crcr}}}
\newcommand{\charfusion}[3][\mathord]{
    #1{\ifx#1\mathop\vphantom{#2}\fi
        \mathpalette\mov@rlay{#2\cr#3}
      }
    \ifx#1\mathop\expandafter\displaylimits\fi}
\newcommand{\cupdot}{\charfusion[\mathbin]{\cup}{\cdot}}
\title[Building inverse potentials]{Building Kohn-Sham potentials\\for ground and excited states}
\author[L. Garrigue]{Louis Garrigue}
\address{CERMICS, \'Ecole des ponts ParisTech, 6 and 8 av. Pascal, 77455 Marne-la-Vallée, France} 
\email{louis.garrigue@enpc.fr}
\date{\today}
\begin{document}
\begin{abstract}
	We analyze the inverse problem of Density Functional Theory using a regularized variational method. First, we show that given $k$ and a target density $\rho$, there exist potentials having $k^{\text{th}}$ bound mixed states which densities are arbitrarily close to $\rho$. The state can be chosen pure in dimension $d=1$ and without interactions, and we provide numerical and theoretical evidence consistently leading us to conjecture that the same pure representability result holds for $d=2$, but that the set of pure-state $v$-representable densities is not dense for $d=3$. Finally, we present an inversion algorithm taking into account degeneracies, removing the generic blocking behavior of standard ones.
\end{abstract}

\maketitle
\flushbottom
\setcounter{tocdepth}{1} 

\section{Introduction}

In 1965, Kohn and Sham postulated the existence of effective one-body potentials which would replace the electronic interaction while keeping the same ground state density, and stated their relations to the exchange-correlation functionals \cite{KohSha65}. Physical quantities of this new effective non-interacting system provide approximations of the exact ones. This led to the developement of very successful techniques enabling to predict properties of microscopic systems in quantum chemistry and physics. The existence of such a potential producing a prescribed ground state density $\ro$ is called the $v$-representability problem, and its search is the inverse problem of Density Functional Theory. There are few works addressing the mathematical aspects of this problem, although several numerical studies were carried out. In \cite{Lieb83b}, Lieb proved that any density can be \textit{approximately} represented by a ground mixed state in some external potential $v$, and introduced a dual variational method enabling to find the Kohn-Sham potential. The ground state $v$-representability problem was studied by variational methods in the cases of classical DFT at positive temperature \cite{ChaLie84} and for quantum lattices \cite{ChaChaRus85}. 

In this document, we address the problem of $v$-representability in the quantum case at zero temperature, with ground or excited states, in pure and mixed settings, both theoretically and numerically. 

In the first part, we present a mathematical investigation. As shown by Lieb \cite{Lieb83b}, the exact inverse potential of a density $\ro \ge 0$ maximizes the functional $v \mapsto \exc{0}(v) - \int_{\R^d} v \ro$ in $(L^p+L^{\ii})(\R^d,\R)$, where $\exc{0}(v)$ is the $N$-particle ground state energy and $p$ is defined in \eqref{dims}. Nevertheless, this functional is not locally coercive in this space as we will see in Section \ref{boil}. To circumvent this ill-posedness, we regularize the problem by discretizing the space of potentials, more precisely we restrict our attention to potentials of the form $\sum_i v_i \wei_i$ where the $\wei_i$ are fixed weight functions and $v_i \in \R$ are real parameters. The discretization amounts to integrating the problematic short-distance degrees of freedom, and implements an ultra-violet cut-off. Our approach enables to show that for any $k \in \N := \{0,1,\dots\}$, the regularizations of the functionals $v \mapsto \exc{k}(v) - \int_{\R^d} v \ro$ are coercive, where $\exc{k}(v)$ denotes the $k\expo{th}$ bound state energy, and this implies the approximate representability of densities by $k\expo{th}$ mixed bound states, with arbitrary precision. For $(d,w)=(1,0)$, where $w$ is the two-body interaction potential, we show that we can take a pure state. Correspondingly, we define pseudo-discrete regularized Levy-Lieb and Lieb functionals by relaxing the condition $\ro_{\p} = \ro$ to $\int_{\R^d} \ro_{\p}\wei_i = \int_{\R^d} \ro \wei_i$ for any $i$, a similar approach was applied to optimal transport in \cite{AlfCoyEhrLom19,CoyaudPhD}, where the numerical efficiency seems promising. 

Computing inverse potentials is used in the Optimal Effective Potential method to develop exchange-correlation functionals which perform better than standard functionals in some configurations \cite[Chapter 6]{EngDre11}. Numerically, this problem received significant attention for $k=0$, in \cite{WuYan03,GonDan18,JenWas18,SchNeu18,KanZimGav19,NaiOhaLia19,KumSinHar19,CalLatGid20,AccBraMar20,WagBakSto14,MorCarGeo20} using the dual formulation, in \cite{JenWas18} using the PDE-constrained optimization, and in \cite{PenLaeTelRug19} using derivatives of the Moreau-Yosida regularized Levy-Lieb functional, where degeneracy is discussed in \cite{ErrPenLaeTel20}. However, degeneracy issues are also critical in the dual approach, except when $(d,k)=(1,0)$, and they were not taken into account in the literature to the best of our knowledge. Indeed, the standard algorithm breaks down when eigenvalues cross or when the inverse potential is degenerate. Hence in the second part, we present an algorithm which converges to a potential having a $k\expo{th}$ bound mixed state with the target density. With $w=0$, numerical results indicate that for $d=2$, densities are $v$-representable by $k\expo{th}$ bound pure states, whereas this depends on the target density for $d = 3$. We also numerically remark that degeneracies are generic for inverse potentials, and as in the SCF procedure \cite{CanMou14} that perturbation of target densities does not lift degeneracies. Finally, we confirm the study \cite{GauBur04}, which indicates that for excited states, many potentials lead to the same density.


\subsection*{Acknowledgement}
I warmly thank Mathieu Lewin, for having advised me during this work, and \'Eric Cancès for useful comments. This project has received funding from the European Research Council (ERC) under the European Union's Horizon 2020 research and innovation programme (grant agreements MDFT No 725528 and EMC2 No 810367). Data sharing not applicable to this article as no datasets were generated or analysed during the current study.

\section{Properties of the dual problem}\label{repsm}
\subsection{Definitions}\label{defss}
Let $d \in \N \backslash \acs{0}$ and in all the document, $\Omega \subset \R^d$ denotes a (bounded or unbounded) connected open set with Lipschitz boundary, representing the space in which our quantum system lives. We do not consider spin degrees of freedom but our results can be extended in this way without complications. We define
\begin{equation}\label{dims}
p = 1 \tx{ if } d = 1, \bhs p > 1 \tx{ if } d = 2, \bhs p = d/2 \tx{ if } d \ge 3.
\end{equation}
In all this work, we consider an even non-negative interaction potential $w \in (L^{p}+L^{\ii})(\R^d,\R_+)$, where we recall that $\pa{L^p + L^\ii}(\Omega)$ is the Banach space of functions $f = f_p + f_\infty$, where $f_p \in L^p(\Omega)$ and $f_\infty \in L^\infty(\Omega)$, endowed with the norm
\begin{align*}
\nor{f}{\lpi(\Omega)} = \myinf{f_p \in L^p(\Omega), f_\infty \in L^\infty(\Omega) \\ f_p + f_\infty = f} \nor{f_p}{L^p(\Omega)} + \nor{f_\infty}{L^\infty(\Omega)}.
\end{align*}
We take external electric potentials $v \in (L^{p}+L^{\ii})(\Omega,\R)$, and consider the self-adjoint $N$-particle Schr\"odinger operator 
\begin{equation}\label{exc}
H_N(v) := \sum_{i=1}^N -\Delta_i  + \sum_{1\leq i < j \leq N} w(x_i-x_j) + \sum_{i=1}^N v(x_i),
\end{equation}
acting on the antisymmetric $N$-particle space $L^2\ind{a}(\Omega^N) := \wedge^N L^2(\Omega)$ with homogeneous Dirichlet boundary conditions. The one-body density of a state $\p \in L^2\ind{a}(\Omega^N)$ is defined as
\begin{align*}
\ro_{\p}(x) := N \int_{\Omega^{N-1}} \ab{\p}^2(x,x_2,\dots,x_N)\d x_2 \cdots \d x_N.
\end{align*}
For vector subspaces $A \subset L\ind{a}^2(\Omega^N)$, we define the sets of mixed states
\begin{align*}
\cans(A,\Omega) := \sch_1\bpa{L\ind{a}^2(\Omega^N)} \cap \acs{\Gamma = \Gamma^*\ge 0, \tr (-\Delta) \Gamma < + \ii, \restr{\Gamma}{A^{\perp}} = 0},
\end{align*}
where $\sch_1(B)$ is the space of trace-class operators on the vector space $B$, and when $\Omega \neq \R^d$, $\tr (-\Delta) \Gamma := \tr (-\Delta_D)^{\ud} \Gamma (-\Delta_D)^{\ud}$, where $-\Delta_D = -\Delta$ is the Dirichlet Laplacian. We will also use $\cans(\Omega) := \cans\bpa{L\ind{a}^2(\Omega^N),\Omega}$. The one-body density of such a mixed state $\Gamma$ is
\begin{align*}
\ro_{\Gamma}(x) := N \int_{\Omega^{N-1}} \Gamma (x,x_2,\dots,x_N ; x,x_2 ,\dots,x_N) \d x_2 \cdots \d x_N,
\end{align*}
where $\Gamma(x_1,\dots,x_N ; y_1,\dots,y_N)$ is the integral kernel of the operator $\Gamma$.

Let us denote by $\cE_v(\p) := \ps{\p,\hn(v)\p}$ the energy functional for pure states, and by $\cE_v(\Gamma) := \tr \hn(v) \Gamma$ the one for mixed states. We recall \cite[Section 12.1]{LieLos01} that the ground ($k=0$) and excited ($k \ge 1$) energies are
\begin{align}\label{defen}
	&\exc{k}(v)  = \hspace{-0.1cm}\mysup{A \subset L\ind{a}^2(\Omega^N) \\ \dim_{\C} A = k} \myinf{\p \in A^{\perp} \\ \int_{\Omega^N} \ab{\p}^2 = 1 \\ \p \in H^1\ind{a}(\Omega^N)} \hspace{-0.3cm} \ps{\p,\hn(v)\p} =\hspace{-0.1cm} \myinf{A \subset L\ind{a}^2(\Omega^N) \\ \dim_{\C} A = k+1} \mymax{\p \in A \\ \int_{\Omega^N} \ab{\p}^2 = 1 \\ \p \in H^1\ind{a}(\Omega^N)} \hspace{-0.5cm} \ps{\p,\hn(v)\p} \nonumber \\
	& \hs\hs\hs\hs\hs = \mysup{A \subset L\ind{a}^2(\Omega^N) \\ \dim_{\C} A = k} \myinf{\Gamma \in \cans(A^{\perp},\Omega) \\ \tr \Gamma = 1} \tr \hn(v) \Gamma =\hspace{-0.25cm} \myinf{A \subset L\ind{a}^2(\Omega^N) \\ \dim_{\C} A = k+1} \mymax{\cans(A,\Omega) \\ \tr \Gamma = 1} \tr \hn(v) \Gamma,
 \end{align}
where $H^1\ind{a}(\Omega^N) := L^2\ind{a}(\Omega^N) \cap H^1_0(\Omega^N)$. We also define the open set of potentials able to bind $N$ particles in a ground ($k=0$) or a $k\expo{th}$ excited state
\begin{align*}
	\spf{k} := \acs{v \in \pa{L^p+L^{\ii}}(\Omega) \bigst \exc{k}(v) < \inf \sigma\ind{ess}(\hn(v))},
\end{align*}
see \cite{Garrigue21} for more properties on it. If $v \in \spf{k}$ and $\Gamma$ is an optimizer of \eqref{defen}, we say that it is a $k\expo{th}$ bound mixed state, and then it is supported on the $k\expo{th}$ eigenspace, that is $\ran \Gamma \subset \Ker\pa{\hn(v)-\exc{k}(v)}$.

 \subsection{The primal problem}
For $\ro \ge 0$ with $\int_\Omega \ro = N$ and $\sqrt{\ro} \in H^1(\Omega)$, the exact ground and excited Levy-Lieb (or pure) and Lieb (or mixed) functionals \cite{Levy79,Lieb83b,Lieb85,LewLieSei20}, are
 \begin{align}\label{lly}
	 F\ex{k}(\ro) & := \mysup{A \subset H^1\ind{a}(\Omega^N) \\ \dim_{\C} A = k}  \myinf{\p \in A^{\perp} \\ \ro_{\p} = \ro} \ps{ \p, H_N\pa{0} \p}, \\
	 F\ex{k}\ind{mix}(\ro)&  := \mysup{A \subset H^1\ind{a}(\Omega^N) \\ \dim_{\C} A = k}  \myinf{\Gamma \in \cans(A^{\perp},\Omega) \\ \ro_{\Gamma} = \ro} \tr H_N\pa{0} \Gamma. \nonumber
 \end{align}
For any $k$, $F\ex{k}\ind{mix}$ is convex and lower semi-continuous \cite{Lieb85} on $L^1 \cap L^{p'}(\Omega)$, where $p' := p/(p-1) \in \R \cup \acs{+\ii}$. Moreover, we have $F\ex{k}\ind{mix} \le F\ex{k}$, $F\ex{k}\ind{mix} \le F\ind{mix}\ex{k+1}$ and $F\ex{k} \le F\ex{k+1}$. We know that in the ground state case $k=0$, they are finite and have optimizers \cite{Lieb83b}, and that they enable to compute the ground state energy in the sense that
 \begin{align*}
	 \exc{0}(v)\hspace{-0.1cm} =\hspace{-0.1cm} \myinf{\ro \in L^1(\Omega,\R_+) \\ \sqrt{\ro} \in H^1(\Omega) \\  \int_\Omega \ro =N} \pa{F\ex{0}(\ro) + \int_\Omega v\ro}\hspace{-0.1cm}= \hspace{-0.1cm} \myinf{\ro \in L^1(\Omega,\R_+) \\ \sqrt{\ro} \in H^1(\Omega) \\ \int_\Omega \ro =N} \pa{F\ind{mix}\ex{0}(\ro) + \int_\Omega v\ro}.
 \end{align*}
 As noted by Lieb in \cite{Lieb85}, for $k \ge 1$ we cannot recover $\exc{k}(v)$ by minimizing $\ro \mapsto F\ex{k}(\ro) + \int_\Omega v \ro$ or $F\ex{k}\ind{mix}(\ro)+ \int_\Omega v \ro$, or even any such functional of $\ro$, because this would lead to a convex functional of $v$ while $\exc{k}$ is not so.

 \subsection{The dual problem}

Take $\ro \in L^1(\Omega,\R_+)$ such that $\sqrt{\ro} \in H^1(\Omega)$ and $\int_\Omega \ro = N$, and take $k \in \N$. For $v \in (L^p+L^{\ii})(\Omega)$ where $p$ is as in \eqref{dims}, the dual functional is
 \begin{align*}
	 \ger{k}(v) := \exc{k}(v) - \int_{\Omega} v \ro, \hs\hs\hs\hs\hs \tx{verifying} \hs\hs\hs\hs 
 \mysup{v \in (L^p+L^{\ii})(\Omega,\R)} \ger{k}(v) = F\ind{mix}\ex{k}(\ro),
 \end{align*}
as showed in \cite{Lieb85}. To prepare the approximate representability of ground and excited densities by potentials, we want to explore the maximization of $\ger{k}$.

\subsection{Solution of the local dual problem}
The potential-to-energy map $v \mapsto \exc{k}(v)$ is neither Fréchet nor Gâteaux differentiable, but it is Dini differentiable, as presented in \cite[Theorem 1.6]{Garrigue21} and \eqref{expg}. Let us denote by ${^+}\delta_v \ger{k}(u)$ the Dini differential in the direction $u$, which is the right derivative of $\lambda \mapsto \ger{k}(v + \lambda u)$ at $\lambda = 0$. The local first order problem is to find the optimal direction(s) in which the functional $\ger{k}$ increases the most, that is solving 
 \begin{align*}
\mysup{u \in (L^p+L^{\ii})(\Omega,\R) \\ \nor{u}{\lpi}=1} {^+}\delta_v \ger{k}(u).
 \end{align*}
 For our analysis, we will see in the proof of Theorem \ref{kspropi} that it will be sufficient to find the maximizing direction $u \in L^p$ of the problem
 \begin{align*}
\mysup{u \in L^p(\Omega,\R) \\ \nor{u}{L^p}=1} {^+}\delta_v \ger{k}(u).
 \end{align*}
We now show that this linearized problem can be \apo{solved}, that is transformed into a simple low-dimensional problem. For a finite dimensional real vector space $Q_{\R} \subset L^2\ind{a}(\Omega^N,\R)$ formed by real-valued wavefunctions, real mixed states with range in $Q_{\R}$ will be identified with symmetric matrices in
 \begin{align*}
	 \cS(Q_{\R}) := \acs{\Gamma \in  \R^{\dim Q_{\R} \times \dim Q_{\R}}, \Gamma = \Gamma^{\tx{T}}}.
 \end{align*}
 For $k \in \N$ and potentials $v \in \spf{k}$, we define the integers $m_k^v, M_k^v \in \N$ by
\begin{align}\label{cdff}
 \exc{m_k^v - 1}(v) < \exc{m_k^v}(v) = \dots = \exc{k}(v) = \dots = \exc{M_k^v}(v) < \exc{M_k^v+1}(v),
 \end{align}
with $\exc{-1}(v) := -\ii$ by convention. We denote by $\Ker_{\R} (\hn(v)-\exc{k}(v))$ the real vector space of real $k\expo{th}$ bound eigenfunctions.


\begin{proposition}[The local problem]\label{sollin}
Take $\Omega$ an open connected domain with Lipschitz boundary. Let $p$ be as in \eqref{dims}, take $w \in \pa{L^p+L^{\infty}}(\R^d)$, $w \ge 0$, $\ro \in L^1(\Omega)$ such that $\sqrt{\ro} \in H^1(\Omega)$ and $\int_\Omega \ro = N$. Take $s \ge p$ with $s > 1$, and take $v \in \spf{k}$. Then $\ro \in L^{\f{s}{s-1}}(\Omega)$ and we have
\begin{align}\label{lao}
	\mysup{u \in L^s(\Omega,\R) \\ \nor{u}{L^s}=1} {^+}\delta_v \ger{k} (u) & = \mymax{Q \subset \Ker_{\R} (\hn(v)-\exc{k}(v)) \\ \dim_{\R} Q = M_k^v-k+1} \mymin{\Gamma \in \canp\pa{Q} \\\Gamma \ge 0, \tr \Gamma =1} \nor{\ro_{\Gamma} - \ro}{L^{\f{s}{s-1}}(\Omega)},
 \end{align}
 and the supremum is attained by
\begin{align}\label{maxu}
u^* =\ab{\f{\ro_{\Gamma^*}-\ro}{\nor{\ro_{\Gamma^*}-\ro}{L^{\f{s}{s-1}}(\Omega)}}}^{\f{1}{s-1}} \sgn(\ro_{\Gamma^*}-\ro), 
 \end{align}
where $\Gamma^*$ is an optimizer of the right hand side of \eqref{lao}.
\end{proposition}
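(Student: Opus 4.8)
The plan is to start from the explicit formula for the half Gateaux differential, reduce the outer variational problem over $u$ to a finite-dimensional one by a minimax argument, and finish with a duality computation that produces the $L^q$ norm together with the explicit maximizer \eqref{maxu}.

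First I would insert the expression for $\dexc{k}$ furnished by \cite[Theorem 1.6]{Garrigue20}. First-order degenerate perturbation theory for the eigenvalue $\exc{k}(v)$, whose real eigenspace $\kerrr$ has dimension $M_k - m_k + 1$, shows that the splitting of $\exc{k}(v+tu)$ at $t=0^+$ is governed by the restriction of the one-body perturbation $\sum_i u(x_i)$ to $\kerrr$. Since the $k\expo{th}$ level sits in position $k - m_k + 1$ from the bottom of this degenerate block, the Courant--Fischer max--min formula selects the subspace dimension $(M_k - m_k + 1) - (k - m_k + 1) + 1 = M_k - k + 1$, and minimizing the linear functional $\Gamma \mapsto \int u\,\ro_\Gamma = \tr\big(\sum_i u(x_i)\big)\Gamma$ over mixed states rather than unit vectors (the minimum is attained at an extreme point, i.e. a pure state) gives
\begin{equation*}
\dexc{k}(u) = \mymax{Q \subset \kerrr \\ \dim_\R Q = M_k - k + 1} \ \mymin{\Gamma \in \canp(Q) \\ \Gamma \ge 0,\ \tr\Gamma = 1} \int_\Omega u\,\ro_\Gamma.
\end{equation*}
Subtracting the linear term $\int u\ro$, which is independent of $Q$ and $\Gamma$, turns this into ${}^+\delta_v \ger{k}(u) = \max_Q \min_\Gamma \int_\Omega u(\ro_\Gamma - \ro)$.

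Next I would take the supremum over $\nor{u}{\lpi}=1$. This outer supremum commutes freely with $\max_Q$, so for each fixed $Q$ the task reduces to exchanging $\sup_u$ with $\min_\Gamma$ in $\sup_u \min_\Gamma \int_\Omega u(\ro_\Gamma - \ro)$. The map $(u,\Gamma) \mapsto \int u(\ro_\Gamma - \ro)$ is affine in $u$ and, because $\ro_\Gamma$ depends linearly on $\Gamma$, affine in $\Gamma$; the unit ball of $\lpi$ is convex while the admissible set $\{\Gamma \in \canp(Q) : \Gamma \ge 0,\ \tr\Gamma = 1\}$ is a convex and compact subset of the finite-dimensional space $\canp(Q)$. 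A Sion-type minimax theorem then justifies the exchange and yields $\min_\Gamma \sup_u \int_\Omega u(\ro_\Gamma - \ro)$. The inner supremum is a dual-norm computation: for fixed $\Gamma$,
\begin{equation*}
\mysup{u \in \lpi \\ \nor{u}{\lpi} = 1} \int_\Omega u(\ro_\Gamma - \ro) = \nor{\ro_\Gamma - \ro}{L^q(\Omega)},
\end{equation*}
and the associated $L^p$--$L^q$ duality optimizer is exactly \eqref{maxu}, which one checks satisfies $\nor{u^*}{L^p}=1$ using $(q-1)p = q$ and attains $\int u^*(\ro_{\Gamma^*} - \ro) = \nor{\ro_{\Gamma^*} - \ro}{L^q}$. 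Assembling the three steps produces the right-hand side of \eqref{lao} together with its maximizer.

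The hard part will be twofold. The most delicate point is the minimax exchange: I must confirm that the bilinear functional is continuous for the relevant topologies and that the compactness of the mixed-state set genuinely compensates for the mere convexity (not compactness) of the $\lpi$ unit ball, so that Sion's theorem applies cleanly. The second care point is that the duality is set in $\lpi$ rather than in plain $L^p$: one has to verify that the supremum is realized inside the unit ball by the $L^p$ function \eqref{maxu} and equals precisely the $L^q$ norm, i.e. that no admissible $L^\infty$-direction outperforms this value. This is where the integrability and decay of $\ro_\Gamma - \ro$, together with the hypothesis $v \in \spf{k}$ ensuring genuine binding of the $k\expo{th}$ excited state, must be used.
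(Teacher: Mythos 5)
Your proposal is correct and follows essentially the same path as the paper's proof: start from the max--min formula for the half Gateaux differential of $\exc{k}$ restricted to the degenerate block, replace unit vectors by mixed states (the paper's Lemma~\ref{enm}, your extreme-point remark), commute the supremum over $u$ with the minimum over $\Gamma$ by a minimax theorem exploiting the affine structure and the compactness of the finite-dimensional mixed-state set, and conclude with the $L^p$--$L^q$ duality that produces \eqref{maxu}. The only differences are cosmetic: the paper cites \cite{Garrigue20} for the expression of the differential rather than rederiving it by degenerate perturbation theory, invokes Zeidler's saddle-point theorem in place of Sion, and disposes of the $\lpi$-versus-$L^p$ issue you flag by observing at the outset that $\ro$ and the eigenfunctions vanish at infinity, so the supremum over the $\lpi$ unit ball coincides with that over the $L^p$ unit ball.
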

A proof is provided in Section \ref{ssub:proofs_sollin_kspropi}, also, see \eqref{ineq:sob_inj} for a presentation of the Sobolev injections at stake for $\ro$. In particular, when $\kerr$ is non-degenerate, we call $\p^{(k)}(v)$ the $k^{\text{th}}$ eigenfunction of $H_N(v)$ (unique up to a change of phasis), and in this case $\ger{k}$ is Fréchet differentiable at $v$ and the problem
\begin{align*}
\mysup{u \in L^s(\Omega,\R) \\ \nor{u}{L^s}=1} \d_v \ger{k} (u) = \nor{\ro_{\p\ex{k}(v)} - \ro}{L^{\f{s}{s-1}}(\Omega)}
\end{align*}
is solved in the unique direction \eqref{maxu}, where $\ro_{\Gamma^*}=\ro_{\p\ex{k}(v)}$.

\subsection{Optimality in the dual problem}
Next, we analyze the optimality conditions. For our problem of searching a potential producing a prescribed density, the following result shows that we have to search among the maximizers of $\ger{k}$.

\begin{theorem}[Optimality in the dual problem]\label{kspropi}
Take $\Omega$ an open connected domain with Lipschitz boundary. Take $w \ge 0$, take a density $\ro \in L^1(\Omega)$, $\ro \ge 0$, $\int_\Omega \ro = N$, $\sqrt{\ro} \in H^1(\Omega)$, and consider a binding $v \in \spf{k}$.

	$i)$ The following assertions are equivalent

	\bhs $a)$ there is a $k^{\tx{th}}$ bound mixed state $\Gamma$ of $v$ such that $\ro_{\Gamma} = \ro$
	
	\bhs $b)$ $v$ is a local maximizer of $\ger{k}$

	\bhs $c)$ $v$ is a global maximizer of $\ger{k}$

	$ii)$ If $v$ maximizes $\ger{k}$, then it maximizes $\ger{\ell}$ for all $\ell \in \acs{m_k^v, \dots,k}$. Moreover, if $k \ge (M_k^v+m_k^v)/2$, then $\ro_{\p} = \ro$ for any normalized $\p \in \Ker \bpa{\hn(v)-\exc{k}}$. If $v$ is a local minimizer, then $k > (M_k^v+m_k^v)/2$.

	$iii)$ If $v$ maximizes $\ger{k}$ and $\dim \Ker \bpa{\hn(v)-\exc{k}(v)} \in \acs{1,2}$, then $v$ has a $k\expo{th}$ bound pure state $\p$ such that $\ro_{\p} = \ro$.

	$iv)$ For $d = 1$ and $w=0$, if $v$ maximizes $\ger{k}$, then there exists a pure state $\p \in \kerr$ such that $\ro_{\p} = \ro$.
\end{theorem}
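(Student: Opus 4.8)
The unifying principle I would use is that, by Proposition~\ref{sollin}, the best local ascent of $\ger{k}$ at $v$ equals the right-hand side of \eqref{lao}, namely $\max_Q \min_\Gamma \nor{\ro_\Gamma-\ro}{L^q}$ where $Q$ ranges over the $(M_k-k+1)$-dimensional subspaces of $\kerrr$ and $\Gamma$ over states on $Q$; thus "$v$ has no ascent direction'' is a representability statement. I prove $i)$ as a cycle. For local maximizer $\Rightarrow$ representability: a local maximizer forces ${}^{+}\delta_v\ger{k}(u)\le 0$ for all $u$, so the supremum in \eqref{lao} vanishes, and being a maximum of nonnegative quantities this gives $\min_\Gamma\nor{\ro_\Gamma-\ro}{L^q}=0$ for the optimal $Q^*$; the resulting $\Gamma^*$, supported on $Q^*\subset\kerr$, is exhibited as a genuine $k^{\text{th}}$ excited mixed state by taking $A^*$ to be the $m_k$ strictly lower eigenstates together with a degenerate complement of $Q^*$. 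For representability $\Rightarrow$ global maximizer I would invoke the strong-duality identity $\sup_v\ger{k}=F\ind{mix}\ex{k}(\ro)$: if $\Gamma$ is a $k^{\text{th}}$ excited mixed state with $\ro_\Gamma=\ro$, then $\ger{k}(v)=\tr \hn(v)\Gamma-\int v\ro=\tr \hn(0)\Gamma$, and the point to check is that this value equals the dual optimum $F\ind{mix}\ex{k}(\ro)$, which follows from $\Gamma$ being a saddle point of the energy. Global $\Rightarrow$ local is trivial, closing the cycle.

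For $ii)$ the downward propagation is a room-counting argument: the state of $i)$ is supported on a subspace of dimension $M_k-k+1$ inside the degenerate eigenspace of dimension $D\df M_k-m_k+1$, and for $\ell\le k$ one has $\ell-m_k\le k-m_k=D-(M_k-k+1)$, so there is room to pick an $\ell$-dimensional $A$ orthogonal to $\operatorname{supp}\Gamma$; the same $\Gamma$ is then an $\ell^{\text{th}}$ excited mixed state of density $\ro$, and $i)$ gives that $v$ maximizes $\ger{\ell}$. The rigidity statement is where the real work lies: when $v$ maximizes $\ger{k}$, the vanishing of the \emph{maximum} over $Q$ forces $\min_\Gamma\nor{\ro_\Gamma-\ro}{L^q}=0$ for \emph{every} admissible $Q$, i.e.\ $\ro$ is a state-density on every $(M_k-k+1)$-dimensional subspace of the eigenspace. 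I expect this to be the main obstacle: once $M_k-k+1\le(D+1)/2$ (equivalently $k\ge(M_k+m_k)/2$), one must upgrade this uniform representability to the conclusion that all transition densities are trivial, so that $\ro_\p=\ro$ for every normalized $\p$ in the eigenspace. I would attack it through the inertia of the Hermitian density super-operator combined with the positivity $\ro_\p\ge 0$ and the mass constraint $\int\ro_\p=N$; note that a single test potential $u$ only yields the inertia bounds $n_\pm\le M_k-k$, which is \emph{not} enough, so the joint constraint over all $u$ together with positivity is essential. The local-minimizer assertion I would treat symmetrically, using that a minimizer makes the directional derivatives in both $\pm u$ nonnegative; the analogous analysis through the inf--max representation of $\exc{k}$ in \eqref{cdff} replaces the admissible dimension by the complementary $k-m_k+1$ and produces the strict inequality $k>(M_k+m_k)/2$.

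Parts $iii)$ and $iv)$ are purifications upgrading the mixed state of $i)$ to a pure one. For $iii)$, if $\dim\kerr=1$ the unique normalized eigenfunction is the only trace-one state, hence has density $\ro$. If $\dim\kerr=2$, I spectrally decompose the optimal real state $\Gamma=\lambda\,|\psi_1\rangle\langle\psi_1|+(1-\lambda)\,|\psi_2\rangle\langle\psi_2|$ with real orthonormal $\psi_1,\psi_2$, so the transition density $\ro_{\psi_1\psi_2}$ is real; setting $\psi\df\sqrt{\lambda}\,\psi_1+e^{i\alpha}\sqrt{1-\lambda}\,\psi_2$ gives $\ro_\psi=\ro+2\sqrt{\lambda(1-\lambda)}\,\cos\alpha\,\ro_{\psi_1\psi_2}$, and the single cross term is annihilated by the phase $\alpha=\pi/2$, producing a pure complex $k^{\text{th}}$ excited state with $\ro_\psi=\ro$. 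This is exactly why $\dim\le 2$ is needed: three or more degenerate states carry several independent cross terms that no single phase can remove.

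For $iv)$, with $d=1$ and $w=0$ the one-body operator $-\Delta+v$ has simple spectrum, so the $N$-body eigenfunctions are Slater determinants $\psi_I$ of orbitals with strictly ordered energies $\epsilon_1<\epsilon_2<\cdots$. Two determinants differing by a single orbital have different energies, hence degenerate determinants differ in at least two orbitals, and by the Slater--Condon rules their one-body transition densities vanish. Consequently, on the degenerate eigenspace every mixed state has density $\sum_I p_I\,\ro_{\psi_I}$ with $p$ a probability vector (only diagonal terms survive), while a pure state $\sum_I c_I\psi_I$ has density $\sum_I|c_I|^2\ro_{\psi_I}$; the reachable pure and mixed densities therefore both coincide with the convex hull of the determinant densities. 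Since $i)$ furnishes a mixed state of density $\ro=\sum_I p_I\ro_{\psi_I}$, the pure state $\psi\df\sum_I\sqrt{p_I}\,\psi_I\in\kerr$ satisfies $\ro_\psi=\ro$, proving $iv)$.
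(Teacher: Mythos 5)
Your parts $i)$, $iii)$ and $iv)$ are essentially sound and close in spirit to the paper. For $i)$ you run the same cycle (duality for representability $\Rightarrow$ global maximum, Proposition~\ref{sollin} for local maximum $\Rightarrow$ representability), and your explicit choice of the subspace $A^*$ (the $m_k$ strictly lower eigenstates plus a complement of the support of $\Gamma^*$ inside the degenerate level) certifying that $\Gamma^*$ is a genuine $k^{\text{th}}$ excited mixed state is a detail the paper leaves implicit. Your $iii)$, via the spectral decomposition of the real $2\times 2$ state and the phase $\alpha=\pi/2$ killing the single cross term, is a cleaner route to the same set equality that the paper obtains by matching the parametrizations $(a,b,c)$ and $(\cos t,\sin t,\cos\theta)$. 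Your $iv)$ rests on the vanishing of one-body transition densities between degenerate determinants differing in at least two orbitals, which is exactly the mechanism behind \eqref{croc} and the paper's notion of coincidental degeneracy; you then purify by $\sum_I\sqrt{p_I}\,\psi_I$ instead of exhibiting a single Slater determinant as in Proposition~\ref{ksprops}~$iii)$, which is all the statement requires and is arguably more robust. Your room-counting proof of the first claim of $ii)$ is also a valid alternative to the paper's monotonicity ${^+}\delta_v \ger{m}\le{^+}\delta_v\ger{n}$ for $m\le n$.

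The genuine gap is the second half of $ii)$. You leave the rigidity claim (that $k\ge(M_k+m_k)/2$ forces $\ro_{\p}=\ro$ for \emph{every} normalized $\p$ in the eigenspace) and the local-minimizer claim as an announced ``attack strategy'' through inertia counts, correctly observing that a single test potential is insufficient, but you never close the argument. The idea you are missing is the reflection identity ${^+}\delta_v\ger{k}(-u)=-\,{^+}\delta_v\ger{M_k+m_k-k}(u)$, which is read off directly from \eqref{expg}. Combined with the monotonicity in the index and the hypothesis $M_k+m_k-k\le k$, it yields the sandwich $0\le{^+}\delta_v\ger{M_k+m_k-k}(u)\le{^+}\delta_v\ger{k}(u)\le 0$ for every direction $u$, hence Fréchet differentiability of $\ger{k}$ at $v$ with $\d_v\ger{k}=0$, which is the starting point from which the paper concludes. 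The same identity with the reversed inequality $k\le(M_k+m_k)/2$ handles the local minimizer: one again gets $\d_v\ger{k}=0$, hence by \eqref{lao} a representing $k^{\text{th}}$ excited mixed state, hence by $i)$ that $v$ is a global maximizer, contradicting local minimality and forcing $k>(M_k+m_k)/2$. Without this identity, the per-direction spectral bounds you propose only pin the middle eigenvalues of the quadratic form $\p\mapsto\int u\,\ro_{\p}$ on the eigenspace for each fixed $u$, which, as you yourself note, does not suffice to annihilate all transition densities.
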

A proof is provided in Section \ref{ssub:proofs_sollin_kspropi}. In $iii)$, we take $\exc{\ell}(v) := -\ii$ for $\ell \le -1$ by convention. When $k=0$ and $p>\max(2d/3,2)$, the maximizer is unique by the Hohenberg-Kohn theorem, and the equivalences do not need to assume $v \in \spf{0}$. Since pure states are mixed states, when we search pure states of $v$ representing $\ro$, we also need to maximize $\ger{k}$. Once the set $\cV\expo{max}_{\ro,k}$ of maximizers is found, one can finally compute
\begin{align}\label{fpure}
	\myinf{v \in \cV\expo{max}_{\ro,k}} \hspace{0.2cm} \mymin{\p \in \kerr \\ \int_{\Omega^N} \ab{\p}^2=1} \nor{\ro_{\p} - \ro}{L^{\f{p}{p-1}}(\Omega)},
\end{align}
which vanishes if and only if $\ro$ is pure-state representable.

Moreover, we conjecture that $\ger{k}$ has no local minimum and that for maximizing $v$'s, $k = m_k^v$, although we were not able to prove this result.

 In case where $w=0$ and $k=0$, the maximizing potential $v =: v\ind{ks}(\ro)$ is called the (mixed) Kohn-Sham potential \cite{KohSha65} for $\ro$. We use the term \apo{Kohn-Sham potential} for any $k \in \N$. When $d=3$, $k=0$ and $\ro$ is a ground state density of $\hn(u)$ for the Coulomb interaction $w=\ab{\cdot}^{-1}$, then $v\ind{ks}(\ro) -u-\ro * \ab{\cdot}^{-1}$ is called the exchange-correlation potential in the physics and quantum chemistry literature.

\subsection{Lower bound}%
\label{sub:lower_bound}

We remark that for $w \ge 0$, we have
\begin{align}\label{ineq:lower_bound_G}
- \int_{\Omega} v_+ \ro - L_{1,d} \int_{\Omega} \ab{v_-}^{1+\f d2} \le \ger{k}(v)
\end{align}
where $L_{1,d}$ is the Lieb-Thirring constant \cite{LieThi76}. This comes from the fact that $\exc{k}(v) \ge \exc{k}_{w=0}(v)$, then writing $\exc{k}_{w=0}(v)$ as a sum on negative eigenvalues and finally using the Lieb-Thirring inequality.

\subsection{Ill-posedness of the dual problem}
Let us search for a maximizing potential of $\ger{0}$. We consider a maximizing sequence $v_n$. If we are able to prove that $v_n$ converges weakly to some $v \in (L^p+L^{\ii})(\Omega)$, $v$ would be a maximizer by weak upper semi-continuity of $\ger{0}$. But we are not even able to prove that $v_n$ weakly converges locally.

Moreover we now show that $\ger{k}$ is ill-posed in the sense that it is not locally coercive in $L^p(\R^d)$ spaces, where we take $\Omega = \R^d$. Take $\ro \in L^1(\R^d,\R_+)$ having mass $\int_{\R^d} \ro = N$ be a target density which we want to represent by a potential. First of all, as a consequence of \cite[Theorem 3.8]{Lieb83b} and of $\ger{0} \le \ger{k}$, one needs to assume that $\sqrt{\ro} \in H^1(\R^d)$, otherwise $\ger{k}$ is not bounded from above. Now take $\alpha \ge 0$, $p \ge 1$, $\ro$ continuous at the origin, and a potential $v \in (L^1 \cap L^p)(\R^d,\R)$ with compact support. Consider the sequence $v_n(x) := n^{\alpha} v(nx)$, then $\nor{v_n}{L^p(\R^d)} = n^{\alpha - \f{d}{p}} \nor{v}{L^p(\R^d)}$ and $n^{d- \alpha}\int_{\R^d} v_n \ro \rightarrow \ro(0) \int_{\R^d} v$. We take $\alpha > d/p$ so that $\nor{v_n}{L^p(\R^d)} \rightarrow +\infty$, recall that we also need $v \in (L^q+L^{\infty})(\R^d,\R)$ where $q$ is as in \eqref{dims}. We then provide two kinds of counterexamples, the first one is when $v \ge 0$, and the second one when $v \le 0$.

When $v \ge 0$, $\exc{k}(v_n) = 0$ hence $\ger{k}(v_n) = - \int_{\R^d} v_n \ro$, we want $\alpha \ge d$ so that $\ger{k}(v_n)$ remains bounded, hence we take $\alpha = d$ and $p>1$. 

When $v \le 0$, \eqref{ineq:lower_bound_G} becomes $\ger{k}(v_n) \ge - L_{1,d} n^{\alpha\pa{1+\f d2} -d} \int_{\R^d} \ab{v}^{1 + \f d2}$ and we choose $\alpha = d/\pa{1+\f d2}$ and $p > 1 + \f d2$ so that $\ger{k}(v_n)$ remains bounded.

\section{Regularization}
We saw in Theorem \ref{kspropi} that to $v$-represent a density $\ro$ with pure or mixed states, we need to maximize $\ger{k}$, but we also saw that this problem is ill-posed in $L^p$ spaces. Hence we regularize it in this section, which will make it coercive.

\subsection{Pseudo-discrete regularizations of Levy-Lieb and Lieb functionals}


 We now relax the density constraint. Let us consider a subset $I \subset \N$ and a set $\weig = \pa{\wei_i}_{i \in I}$ of weight functions forming a partition of unity for $\Omega$, that is $\sum_{i \in I} \wei_i = \indic_{\Omega}$, where $\wei_i \in L^{\ii}(\Omega,\R_+)$. For $r \in \ell^1(I,\R_+) \cap \acs{ \sum_{i \in I} r_i = N}$, we introduce the regularized Levy-Lieb and Lieb functionals
 \begin{align*}
	 \fll{k}(r) & := \mysup{A \subset H^1\ind{a}(\Omega^N) \\ \dim_{\C} A = k}  \myinf{\p \in A^{\perp} \\ \int_\Omega \wei_i \ro_{\p} = r_i \hs \forall i \in I} \ps{ \p, H_N\pa{0} \p}, \\
	 \fl{k}(r)&  := \mysup{A \subset H^1\ind{a}(\Omega^N) \\ \dim_{\C} A = k}  \myinf{\Gamma \in \cans(A^{\perp},\Omega) \\  \int_\Omega \wei_i \ro_{\Gamma} = r_i \hs \forall i \in I} \tr H_N\pa{0} \Gamma,
 \end{align*}
and we define them to be $+\ii$ when for any $A\subset H^1\ind{a}(\Omega^N)$ such that $\dim A = k$, the minimizing sets are empty. We know that $\fl{0}$ is convex \cite{Lieb83b}. Consider now the assumption
\begin{align}\label{tigh}
\mylim{R \ra +\ii} \mysum{i \in I \\ \supp \wei_i \cap B_R^{\tx{c}} \neq \empt}{} r_i = 0.
 \end{align}

\begin{theorem}[Existence of minimizers in the ground states case]\label{exiss}
Take $\Omega$ an open connected domain with Lipschitz boundary.	Take $w \ge 0$ and $r \in \ell^1(I,\R_+)$ such that $\sum_{i \in I} r_i = N$, and $\weig = (\wei_i)_{i \in I}$ a partition of unity for $\Omega$. Under the tightness condition \eqref{tigh}, $\fll{0}(r)$ and $\fl{0}(r)$ have at least one minimizer when they are finite.
\end{theorem}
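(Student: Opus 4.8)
The plan is to run the direct method of the calculus of variations on each functional; the two arguments share the same skeleton, so I describe the pure functional $\fll{0}(r)$ in detail and then indicate the changes for $\fl{0}(r)$. Since $k=0$ the outer supremum is over $A=\acs{0}$ and $\fll{0}(r)$ is a plain infimum of $\p \mapsto \ps{\p,\hn\pa{0}\p}$ over the admissible set $\acs{\int \wei_i \ro_{\p} = r_i \ \forall i \in I}$. As $\fll{0}(r)<\ii$ by hypothesis this set is nonempty, so I fix a minimizing sequence $\p_n$ in it. Summing the constraints against the partition of unity $\sum_{i} \wei_i = \indic_{\Omega}$ gives $\int \ro_{\p_n} = \sum_i r_i = N$, hence $\int \ab{\p_n}^2 = 1$: the moment constraints already encode normalization.

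The a priori bound is where $w \ge 0$ buys compactness: the interaction term being nonnegative, $\nor{\nabla \p_n}{L^2}^2 \le \ps{\p_n,\hn\pa{0}\p_n} \ra \fll{0}(r)$, so $\p_n$ is bounded in $H^1\ind{a}(\Omega^N)$. Passing to a subsequence, $\p_n \rightharpoonup \p$ weakly in $H^1\ind{a}$ and hence strongly in $L^2\ind{loc}$ by Rellich; antisymmetry is a closed linear condition so it passes to the weak limit and $\p \in H^1\ind{a}(\Omega^N)$.

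The hard part, and the sole role of the tightness hypothesis \eqref{tigh}, is to upgrade local $L^2$ convergence to global $L^2$ convergence, which is what keeps the constraints alive in the limit. Bounding the multiplication operators $\indic_{\Omega^N} - \indic_{B_R^N} \le \sum_{j=1}^N \indic_{\acs{x_j \notin B_R}}$ and using the symmetry of $\ab{\p_n}^2$,
\[ \int_{\Omega^N \setminus B_R^N} \ab{\p_n}^2 \ \le\ N \int_{\acs{x_1 \notin B_R}} \ab{\p_n}^2 \ =\ \int_{B_R^{\tx{c}}} \ro_{\p_n} \ =\ \sum_{i \in I} \int_{B_R^{\tx{c}}} \wei_i \ro_{\p_n} \ \le \sum_{i \,:\, \supp \wei_i \cap B_R^{\tx{c}} \neq \empt} r_i, \]
since each $\int_{B_R^{\tx{c}}} \wei_i \ro_{\p_n} \le \int_{\Omega} \wei_i \ro_{\p_n} = r_i$ and only weights meeting $B_R^{\tx{c}}$ contribute. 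By \eqref{tigh} the right-hand side tends to $0$ as $R \ra \ii$, uniformly in $n$. This uniform tightness combined with local strong convergence forces $\p_n \ra \p$ strongly in $L^2(\Omega^N)$; in particular $\int \ab{\p}^2 = 1$ and $\ro_{\p_n} \ra \ro_{\p}$ in $L^1(\Omega)$, so pairing with $\wei_i \in L^{\ii}$ yields $\int \wei_i \ro_{\p} = r_i$ for every $i$, i.e. $\p$ is admissible.

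It remains to check lower semicontinuity of the energy. The kinetic part is weakly lower semicontinuous on $H^1$; for the interaction I pass to an a.e.-convergent subsequence coming from the strong $L^2$ convergence and apply Fatou's lemma, legitimate precisely because $w \ge 0$. Hence $\ps{\p,\hn\pa{0}\p} \le \liminf_n \ps{\p_n,\hn\pa{0}\p_n} = \fll{0}(r)$, and since $\p$ is admissible this is an equality, so $\p$ minimizes. For $\fl{0}(r)$ the same scheme applies to a minimizing sequence $\Gamma_n \in \cans(\Omega)$ with $\tr\Gamma_n = 1$: from $w \ge 0$ one gets $\tr(-\Delta)\Gamma_n \le \tr \hn\pa{0}\Gamma_n$ bounded, hence weak-$*$ compactness of $\Gamma_n$ towards some $0 \le \Gamma \in \cans(\Omega)$, with local strong convergence of the densities $\ro_{\Gamma_n}$ supplied by the kinetic bound; the same operator inequality now reads $\tr\bpa{(\indic_{\Omega^N}-\indic_{B_R^N})\Gamma_n} \le \int_{B_R^{\tx{c}}} \ro_{\Gamma_n} \le \sum_{i \,:\, \supp \wei_i \cap B_R^{\tx{c}} \neq \empt} r_i \ra 0$, which pins down $\tr\Gamma = 1$ and the constraints, while weak-$*$ lower semicontinuity of $\Gamma \mapsto \tr \hn\pa{0}\Gamma$ (again using $w \ge 0$) finishes the proof.
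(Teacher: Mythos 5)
Your proof is correct and follows essentially the same route as the paper: direct method, kinetic-energy bound from $w\ge 0$, the identical estimate $\int_{B_R^{\tx{c}}}\ro_{\p_n}\le\sum_{\supp\wei_i\cap B_R^{\tx{c}}\neq\empt}r_i$ to extract tightness from \eqref{tigh}, strong $L^2$ convergence, passage to the limit in the constraints, and lower semicontinuity of the energy. The only cosmetic differences are that the paper additionally invokes the Hoffmann-Ostenhof inequality (to track $\sqrt{\ro_{\p_n}}$ in $H^1$, not needed for bare existence) and packages the mixed-state compactness via geometric convergence in Fock space \cite{Lewin11}, where you argue directly with weak-$*$ compactness plus tightness; both are fine.
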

A proof is provided in Section \ref{ssub:proofs_exiss_proproi}. For a given $\ro \in L^1(\Omega,\R_+)$, we define
\begin{align*}
r_{\ro} := \pa{\int_\Omega \wei_i \ro}_{i \in I} \in \ell^1(I,\R_+).
\end{align*}
This sequence contains the partial information on the density $\ro$ which we are going to retain. Since the optimizing set in the definition of $F\ex{k}(\ro)$ is included in the one of $\fll{k}(r_{\ro})$, for any $\ro \ge 0$ with $\sqrt{\ro} \in H^1(\Omega)$ and $\weig$ as defined above we have
\begin{align*}
	\fll{k}(r_{\ro}) \le F\ex{k}(\ro) \bhs  \tx{ and } \bhs \fl{k}(r_{\ro}) \le F\ex{k}\ind{mix}(\ro).
\end{align*}
In particular, $\fll{0}(r_{\ro})$ and $\fl{0}(r_{\ro})$ are finite.

For $k=0$, our approximate Levy-Lieb and Lieb functionals converge to the exact ones when the integrated weights tend to carry all the information on the density. 

\begin{theorem}[Convergence to the exact model]\label{proproi} Take $\Omega \subset \R^d$ an open connected domain with Lipschitz boundary. Take $w \in (L^p+L^{\ii})(\R^d)$, $w \ge 0$, with $p$ as in \eqref{dims}. Consider a density $\ro \in L^1(\Omega,\R_+)$ such that $\sqrt{\ro} \in H_0^1(\Omega)$ and $\int_\Omega \ro = N$. We assume that $\weig_n=  (\wei_i^n)_{i \in I_n}$, where $\wei_i^n \in L^{\ii}(\Omega)$, is a sequence of weights forming a partition of unity for $\Omega$, and such that for any $f \in \cC^{\ii}\ind{c}(\Omega)$, we have
\begin{align}\label{hypo}
	\myinf{ g_n \in \vect ( \wei_i^n)_{i \in I_n}} \nor{f - g_n}{\pa{L^{p}+L^{\ii}}(\Omega)} \longrightarrow 0
\end{align}
when $n \ra +\ii$. We also assume that
\begin{align}\label{tig}
\mylim{R \ra +\ii} \; \mysup{n \in \N} \; \mysum{i \in I \\ \supp \wei_i^n \cap B_R^{\tx{c}} \neq \empt}{} \int_\Omega \ro \wei_i^n = 0.
 \end{align}
Then
\begin{align*}
	\mylim{n \ra +\ii} \flln{0}{n}\pa{r_{\ro}} = F\ex{0}(\ro), \bhs \mylim{n \ra +\ii} \fln{0}{n}\pa{r_{\ro}} = F\ind{mix}\ex{0}(\ro).
\end{align*}
	Let $\p_n$ be a sequence of approximate minimizers for $\flln{0}{n}(r_{\ro})$, that is, such that $\cE_0(\p_n) \le \flln{0}{n}(r_{\ro}) + \ep_n$ where $\ep_n \ra 0$ when $n \ra +\ii$ and $\int_\Omega \wei_i \ro_{\p_n} = \int_\Omega \wei_i \ro$ for any $i \in I$. Then $\p_n \ra \p\ind{exact}$ strongly in $H^1(\Omega^N)$ up to a subsequence, where $\p\ind{exact}$ is a minimizer for $F\ex{0}(\ro)$. If $\Gamma_n$ is a sequence of approximate minimizers for $\fln{0}{n}\pa{r_{\ro}}$, then $\Gamma_n \ra \Gamma\ind{exact}$ strongly in the kinetic energy space $\sch_{1,1}$ up to a subsequence, where $\Gamma\ind{exact}$ is a minimizer for $F\ind{mix}(\ro)$.
\end{theorem}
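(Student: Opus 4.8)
\emph{Proof proposal.} I would organize the argument as a $\Gamma$-convergence statement in which the lower bound is the substantive part. The upper bound is free: for each $n$ the admissible set of the exact problem $F\ex{0}(\ro)$ (those $\p$ with $\ro_{\p}=\ro$) is contained in that of $\flln{0}{n}(r_{\ro})$ (those $\p$ with $\int \wei_i^n \ro_{\p}=\int \wei_i^n \ro$ for all $i$), so $\flln{0}{n}(r_{\ro}) \le F\ex{0}(\ro)=F(\ro)$, and likewise $\fln{0}{n}(r_{\ro}) \le F\ind{mix}(\ro)$; taking $\limsup_n$ gives one half of each equality. It then suffices to build from the approximate minimizers a limiting state that is admissible for the exact problem and bounds the $\liminf$ from below. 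I present the pure case; the mixed case is parallel, replacing $\p_n$ by $\Gamma_n$, $\cE_0(\p_n)$ by $\tr \hn(0)\Gamma_n$, and weak $H^1$ compactness by weak compactness in the kinetic-energy space $\sch_{1,1}$.

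\emph{Compactness of the approximate minimizers.} From $\cE_0(\p_n) \le \flln{0}{n}(r_{\ro}) + \ep_n \le F(\ro)+\ep_n$ the energies are bounded, and since $w \ge 0$ the interaction $\sum_{i<j}\int w(x_i-x_j)\ab{\p_n}^2$ is nonnegative, so the kinetic energy $\sum_i \nor{\nabla_i \p_n}{L^2}^2$ is bounded and $\p_n$ is bounded in $H^1\ind{a}(\Omega^N)$. Extract a subsequence with $\p_n \rightharpoonup \p\ind{exact}$ weakly in $H^1$. By the Hoffmann-Ostenhof inequality $\nor{\nabla\sqrt{\ro_{\p_n}}}{L^2}^2 \le \sum_i \nor{\nabla_i\p_n}{L^2}^2$, so $\sqrt{\ro_{\p_n}}$ is bounded in $H^1(\Omega)$ and, by Sobolev embedding with the choice of $p$ in \eqref{dims}, the densities $\ro_{\p_n}$ are bounded in $(L^1 \cap L^q)(\Omega)$, where $q=p/(p-1)$. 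For tightness, the $\wei_i^n \ge 0$ form a partition of unity and the constraint gives $\int \wei_i^n \ro_{\p_n}=\int \wei_i^n \ro$, whence
\begin{align*}
\int_{B_r^{\tx{c}}} \ro_{\p_n} = \sum_{i}\int_{B_r^{\tx{c}}} \wei_i^n \ro_{\p_n} \le \sum_{i:\, \supp \wei_i^n \cap B_r^{\tx{c}} \neq \empt} \int \wei_i^n \ro,
\end{align*}
which tends to $0$ as $r \to +\ii$ uniformly in $n$ by \eqref{tig}. Together with Rellich's local strong convergence this upgrades $\p_n \to \p\ind{exact}$ to strong convergence in $L^2(\Omega^N)$, so $\int \ab{\p\ind{exact}}^2=1$ and $\int \ro_{\p\ind{exact}}=N$.

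\emph{Passing the density constraint to the limit.} Fix $f \in \cC^\ii\ind{c}(\Omega)$ and, using \eqref{hypo}, choose $g_n \in \vect (\wei_i^n)_{i \in I_n}$ with $\nor{f - g_n}{\lpi} \to 0$. The linear constraint extends to $\int g_n \ro_{\p_n}=\int g_n \ro$, so
\begin{align*}
\int f \pa{\ro_{\p_n} - \ro} = \int (f - g_n)\ro_{\p_n} + \int (g_n - f)\ro,
\end{align*}
and each term is bounded by $\nor{f - g_n}{\lpi}$ times a $(L^1 \cap L^q)$-norm of $\ro_{\p_n}$, respectively of $\ro$, both uniformly bounded, hence both tend to $0$. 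Since strong $L^2$ convergence of $\p_n$ gives $\ro_{\p_n} \to \ro_{\p\ind{exact}}$ in $L^1(\Omega)$, we obtain $\int f \ro_{\p\ind{exact}}=\int f \ro$ for all test functions $f$, that is $\ro_{\p\ind{exact}}=\ro$. Thus $\p\ind{exact}$ is admissible for $F\ex{0}(\ro)$, giving $\cE_0(\p\ind{exact}) \ge F(\ro)$, while weak lower semicontinuity of $\cE_0$ gives $\cE_0(\p\ind{exact}) \le \liminf_n \cE_0(\p_n) \le \liminf_n \flln{0}{n}(r_{\ro})$. Combined with the upper bound this forces $\flln{0}{n}(r_{\ro}) \to F(\ro)$ and shows $\p\ind{exact}$ minimizes $F\ex{0}(\ro)$.

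\emph{Strong convergence and the main obstacle.} Equality of the limits yields $\cE_0(\p_n) \to \cE_0(\p\ind{exact})$. Since $\p_n \to \p\ind{exact}$ strongly in $L^2$ and $w \in \lpi$ is form-compact relative to $-\Delta$, the interaction term converges; subtracting it shows the kinetic energies converge, and together with weak $H^1$ convergence this yields $\p_n \to \p\ind{exact}$ strongly in $H^1(\Omega^N)$. In the mixed case the bound on $\tr(-\Delta)\Gamma_n$ gives compactness in $\sch_{1,1}$, the density $\ro_{\Gamma_n}$ is handled identically, and convergence of $\tr \hn(0)\Gamma_n$ to $\tr \hn(0)\Gamma\ind{exact}$ promotes the weak limit to strong convergence in $\sch_{1,1}$. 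The main obstacle is the constraint passage: one must use the fineness condition \eqref{hypo} and the tightness condition \eqref{tig} in tandem — the first to turn finitely many integrated-weight constraints into the pointwise identity $\ro_{\p\ind{exact}}=\ro$, the second to stop mass escaping to infinity on unbounded $\Omega$ — and this hinges on the uniform $(L^1 \cap L^q)$ control of the densities obtained in the compactness step.
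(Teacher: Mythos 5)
Your proposal is correct and follows essentially the same route as the paper: the same a priori bound $\flln{0}{n}(r_{\ro})\le F\ex{0}(\ro)$ from nesting of constraint sets, the same Hoffmann--Ostenhof/tightness compactness step using \eqref{tig}, the same passage to the limit of the constraint by approximating a test function $f$ via \eqref{hypo} and pairing $\lpi$ against the uniformly controlled densities, and the same upgrade to strong $H^1$ (resp.\ $\sch_{1,1}$) convergence from convergence of the energy quadratic form. The only cosmetic difference is that you phrase the duality bound through the $(L^1\cap L^q)$-norm of the densities where the paper writes it through $\nor{\sqrt{\ro}}{H^1}$-type constants; these are equivalent by Sobolev embedding.
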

A proof is provided in Section \ref{ssub:proofs_exiss_proproi}. The space $\sch_{1,1}$ is the set of operators $A$ of $L\ind{a}^2(\Omega^N)$ endowed with the norm $\nor{A}{\sch_{1,1}} = \tr \ab{(-\Delta_D+1)^{\ud} A (-\Delta_D+1)^{\ud}}$.

Our assumption \eqref{tig} is used to control the decay at infinity. If all the $\wei_i^n$ have a compact support of diameter bounded by $\delta$ independent of $i$ and $n$, then 
 \begin{align*}
	 \mysum{i \in I \\ \supp \wei_i^n \cap B_R^{\tx{c}} \neq \empt}{} \int_\Omega \ro \wei_i^n \le \int_{\Omega \cap \{\ab{x} \ge R - \delta\}} \ro \us{R \ra +\ii}{\longrightarrow} 0
 \end{align*}
 and \eqref{tig} is satisfied. In \cite[(3.3.4)]{AlfCoyEhrLom19}, the authors use an inequality condition, simpler than \eqref{tig}.

If $\wei_i^n = \indic_{\Omega_i^n}$ is a sequence of partitions of $\Omega = \cup_{i \in \N} \Omega_i^n$ where $\Omega_i^n$ are convex, and $\sup_{i \in \N} \diam \Omega_i^n \ra 0$ when $n \ra +\ii$, then the assumption \eqref{hypo} is verified by Lemma~\ref{lemama} below. Assumption \eqref{tig} is verified as well. Again by Lemma~\ref{lemama}, if we further assume that $\ro$ is Lipschitz continuous, we have an explicit bound on the convergence of densities
\begin{align*}
	\nor{\ro-\ro_{\p_n}}{\bpa{L^1 \cap L^{q}}(\Omega)} \le c_d  \pa{ \nor{\sqrt{\ro}}{H^1(\Omega)}^2 + \sup_{n \in \N} \nor{\sqrt{\ro_{\p_n}}}{H^1(\Omega)}^2}\sup_{i \in \N} \diam \Omega_i^n,
\end{align*}
where $c_d$ only depends on $d$, and $q$ is as in \eqref{cdc}. Note that $\nor{\sqrt{\ro_{\p}}}{H^1(\Omega)}$ and $\nor{\sqrt{\ro_{\p_n}}}{H^1(\Omega)}$ are controlled by $F(\ro)$ due to the Hoffmann-Ostenhof inequality.

A typical choice for the $\wei_i^n$ is given by the partition of unity finite element method \cite{MelBab96,BabMel97}.

\subsection{Regularization of the dual problem}\label{lama}

Correspondingly to the previous part, we change the exact model by discretizing the space of potentials. We consider a sequence of weights $\weig = (\wei_i)_{i \in I}$ and take $r \in \ell^1(I,\R_+)$. The dual problem is the maximization of
\begin{align*}
	G\ex{k}_{r,\weig}(v):= \ger{k}\pa{\sum_{i \in I} v_i \wei_i} = \exc{k}\pa{\sum_{i \in I} v_i \wei_i} - \sum_{i \in I} v_i r_i,
\end{align*}
over the space $\ell^{\ii}(I,\R)$ of potential coefficients $v = (v_i)_{i \in I}$. We have
\begin{align*}
	\exc{0}\pa{\sum_{i \in I} v_i \wei_i} & = \myinf{ r \in \ell^1(I,\R_+) \\ \sum_{i \in I} r_i = N}  \pa{ \fl{0}(r) + \sum_{i \in I} v_i r_i}, \\
	\mysup{v \in \ell^{\ii}(I,\R)}  G\ex{k}_{r,\weig}(v) & = \fl{k}(r),
\end{align*}
as in the exact models, and by the same proofs. Again by the same proof as for the lower semi-continuity of the exact Lieb functional \cite[Theorem 3.6]{Lieb83b}, $G\ex{0}_{r,\weig}$ is weakly upper semi-continuous in the $\ell^{\ii}(I,\R)$ topology. Moreover, if $H_N\pa{\sum_{i \in I} v_i \wei_i}$ has a $k\expo{th}$ bound state $\p_v$, then
\begin{align*}
G\ex{k}_{r,\weig}(v) = \cE_0\pa{\p_v} + \sum_{i \in I} v_i\pa{- r_i + \int_\Omega \ro_{\p_v} \wei_i }.
\end{align*}

\subsubsection{Gauge invariance}

The gauge we are dealing with is the choice of a reference for energies, corresponding to the transformation $V \ra V + c$ for a constant $c \in \R$. The exact dual functional $V \mapsto \exc{k}(V) - \int_\Omega V \ro$ is gauge invariant, and since we want our approximate functional to be so as well, we are naturally led to take 
 \begin{align}\label{cdt}
 \sum_{i \in I} \wei_i= 1 \tx{ on } \Omega, \bhs\bhs \sum_{i \in I} r_i = N.
 \end{align}
 The last condition is of course fulfilled for $r = r_{\ro}$, which is the interesting situation.

\begin{remark}
	Let us explain why the previous conditions \eqref{cdt} are necessary to ensure gauge invariance. Let $v \in \ell^{\ii}(I,\R)$ be such that $H_N\bpa{\sum_{i \in I} v_i \wei_i}$ has a $k\expo{th}$ bound state, which we denote by $\p_v$. Take $c \in \R$, we have
\begin{align*}
	\exc{k}\pa{\sum_{i \in I} (v_i+c) \wei_i} & \le \cE_{\Sigma_{i \in I} (v_i+c) \wei_i}(\p_v) \\
						  &= \exc{k}\pa{\sum_{i \in I} v_i \wei_i} \hspace{-0.1cm} + c\hspace{-0.1cm}\int_\Omega \sum_{i \in I} \wei_i \ro_{\p_v},
\end{align*}
and hence
\begin{align*}
G\ex{k}_{r,\weig}(v+c) \le G\ex{k}_{r,\weig}(v) + c \pa{- \sum_{i \in I} r_i +  \int_{\Omega}\ro_{\p_v} \sum_{i \in I} \wei_i}.
\end{align*}
To have a gauge invariant theory, we want to have 
 \begin{align*}
\int_{\Omega} \pa{N^{-1} \sum_{i \in I} r_i - \sum_{i \in I} \wei_i} \ro_{\p_v} = 0,
 \end{align*}
otherwise $G\ex{k}_{r,\weig}(v+c) \ra - \ii$ for $c \ra +\ii$ or $c \ra -\ii$. This requirement should not depend on $v$, hence we need $\sum_{i \in I} r_i = N \sum_{i \in I} \wei_i$ a.e on $\Omega$. We are thus naturally led to assume \eqref{cdt}.
\end{remark}

\subsubsection{Uniqueness}

A Hohenberg-Kohn theorem adapted to our situation shows that the multivalued map $v \mapsto r_{\ro_{\p_{v}}}$, where $\p_{v}$ is a ground state of $H_N\bpa{\sum v_i \wei_i}$, is essentially injective. Hence if $G\ex{0}_{r,\weig}$ has a maximum, it is unique.

\begin{theorem}[Hohenberg-Kohn]\label{ada}
Let $\Omega \subset \R^d$ be an open and connected set with Lipschitz boundary, and consider homogeneous Dirichlet boundary conditions. Let $p > \max(2d/3,2)$, and take an interaction $w \in (L^p+L^{\ii})(\R^d,\R)$. Let $v,u \in \ell^{\ii}(I,\R)$ and $\weig = (\wei_i)_{i \in I}$ where $\wei_i \in L^{\ii}(\Omega,\R_+)$, be such that $H_N\bpa{\sum_{i \in I} v_i \wei_i}$ and $H_N\bpa{\sum_{i \in I}u_i \wei_i}$ have at least one ground state each, which we respectively denote by $\p_v$ and $\p_u$. If $\int_\Omega \wei_i \ro_{\p_v} = \int_\Omega \wei_i \ro_{\p_u}$ for any $i \in I$, then $v = u +c$ for some constant $c \in \R$.
\end{theorem}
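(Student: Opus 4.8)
The plan is to run the Hohenberg--Kohn variational argument, but to exploit the special structure of discretized potentials so that the usual non-degeneracy worry collapses to a single application of unique continuation. Write $V \df \sum_{i \in I} v_i \wei_i$ and $U \df \sum_{i \in I} u_i \wei_i$. Since $v,u \in \ell^{\ii}(I,\R)$ and the $\wei_i \ge 0$ satisfy $\sum_{i \in I}\wei_i = \indic_{\Omega}$, we have $\ab{V}, \ab{U} \le \nor{v}{\ell^{\ii}}$ a.e., so both external potentials are bounded and $\hn(V)$, $\hn(U)$ are self-adjoint with the same form domain; their ground energies $E_V \df \exc{0}(V)$ and $E_U \df \exc{0}(U)$ are attained at $\p_v$, $\p_u$. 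The starting observation is that the hypothesis $\int \wei_i \ro_{\p_v} = \int \wei_i \ro_{\p_u}$ for every $i$, after multiplication by $v_i - u_i$ and summation, upgrades to the exact identity $\int_{\Omega}(V-U)(\ro_{\p_u}-\ro_{\p_v}) = 0$. This is precisely where discretization helps: the cross term that is merely signed in the classical proof vanishes identically here.

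Next I would use $\p_u$ as a trial state for $\hn(V)$ and $\p_v$ as a trial state for $\hn(U)$ in the Rayleigh--Ritz principle. Writing $\hn(V)-\hn(U) = \sum_{j=1}^N (V-U)(x_j)$ and using the definition of the one-body density, this gives $E_V \le \ps{\p_u, \hn(V)\p_u} = E_U + \int_{\Omega}(V-U)\ro_{\p_u}$ together with $E_U \le E_V + \int_{\Omega}(U-V)\ro_{\p_v}$. Adding the two inequalities and inserting the vanishing cross term forces both to be equalities. In particular $\ps{\p_u, \hn(V)\p_u} = E_V$, so the normalized $\p_u$ minimizes the Rayleigh quotient of $\hn(V)$ and is a ground state of $\hn(V)$ as well as of $\hn(U)$; that is, $\hn(V)\p_u = E_V \p_u$ and $\hn(U)\p_u = E_U \p_u$.

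Subtracting these two eigenequations yields $\bpa{\sum_{j=1}^N (V-U)(x_j) - (E_V - E_U)}\p_u = 0$ a.e. on $\Omega^N$. This is the main obstacle and the only genuinely analytic input: I would invoke the strong unique continuation property for the $N$-body Schrödinger operator, valid exactly under the hypothesis $p > \max(2d/3,2)$ on $w$ and using that $\Omega$, hence $\Omega^N$, is connected, to conclude that the nonzero ground state $\p_u$ cannot vanish on a set of positive measure. Consequently $\sum_{j=1}^N (V-U)(x_j) = E_V - E_U$ for a.e. $(x_1,\dots,x_N) \in \Omega^N$.

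It remains to read off the potential. Fixing $x_2,\dots,x_N$ generically and varying $x_1$ in this last identity shows that $(V-U)(x_1)$ equals a constant $c \in \R$ for a.e. $x_1 \in \Omega$, i.e. $\sum_{i \in I}(v_i-u_i)\wei_i = c = c\sum_{i \in I}\wei_i$ a.e. on $\Omega$. Since the weights are linearly independent, this forces $v_i - u_i = c$ for every $i$, namely $v = u + c$. The only nonroutine step is the unique continuation input of the third paragraph; everything else is variational bookkeeping made rigid by the exact cancellation of the cross term, which itself is the payoff of restricting to potentials of the form $\sum_i v_i \wei_i$ and matching only the integrals $\int \wei_i \ro$.
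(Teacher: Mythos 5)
Your proof is correct and is essentially the proof the paper intends (the paper only points to the standard Hohenberg--Kohn argument in the form of \cite[Theorem 2.1]{Garrigue19}): the matched moments $\int \wei_i \ro_{\p_v} = \int \wei_i \ro_{\p_u}$ make the two Rayleigh--Ritz trial-state inequalities collapse to equalities, so $\p_u$ is a common ground state, and subtracting the two Schr\"odinger equations plus unique continuation and integration over $x_2,\dots,x_N$ yields $\sum_{i\in I}(v_i-u_i)\wei_i = c$ a.e.\ on $\Omega$. The only point worth flagging is the very last step, where you pass from this identity to $v_i - u_i = c$ for all $i$ by invoking linear independence of the $\wei_i$, a hypothesis not explicitly stated in the theorem but needed for the conclusion at the level of coefficients.
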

The proof follows from the standard Hohenberg-Kohn theorem \cite{HohKoh64,Garrigue18} in the form of \cite[Theorem 2.1]{Garrigue19}.

\subsubsection{Coercivity}\label{boil}

The main goal of this section is to recover coercivity for the discretized dual problem, in order to make it well-posed.

If there is some $i \in I$ such that $r_i = 0$, denoting by $e_i$ the $i^{\tx{th}}$ degree of freedom of the potentials, when $c \ra +\ii$ we expect that $G\ex{k}_{r,\weig}(v + c e_i) \ra \exc{k}_D\bpa{\sum_{j \neq i} v_j \wei_j}$, where $\exc{k}_D\bpa{\sum_{j \neq i} v_j \wei_j}$ is finite and is the $k\expo{th}$ bound state energy of the system living in $\Omega \backslash \supp \wei_i$ with Dirichlet boundary conditions. This shows that 
\begin{align*}
r_i > 0 \hs\hs\hs\hs \forall i \in I
\end{align*}
is a necessary condition for $G\ex{k}_{r,\weig}$ to be coercive. 

We define 
\begin{align}\label{ccc}
	c_{\Omega} := -\f{\exc{k}(0)}{N} \le 0,
\end{align}
where $\exc{k}(0)$ is the $k\expo{th}$ energy level of $N$ interacting particles without external potential. It satisfies $\exc{k}(c_{\Omega} \indic_{\Omega}) = 0$ and it is non-positive because $w \ge 0$. It vanishes when $\Omega = \R^d$ for instance.  
 We can choose the gauge we want, so we will take potentials $v$ such that $\exc{k}\bpa{\sum_{i \in I} v_i \wei_i} = 0$ for convenience. 
  Our variational space of potentials can thus be
\begin{align*}
	\acs{v \in \ell_r^{1}(I,\R) \st \exc{k}\bpa{\smallsum_{i \in I} v_i \wei_i} = 0},
\end{align*}
where $\nor{v}{\ell^1_r} := \sum_{i \in I} \ab{v_i} r_i$. 

Now we can state our main result for the discretized model.
\begin{theorem}[Well-posedness of the dual problem]\label{cococo} Take $\Omega \subset \R^d$ an open connected domain with Lipschitz boundary. Take a non-negative interaction $w \in (L^{p}+L^{\ii})(\R^d,\R_+)$ where $p$ is as in \eqref{dims}.

\bul \tx{(Coercivity)} Let $\weig$ be a partition of unity of $\Omega$, with $\wei_i \in L^{\ii}(\Omega,\R_+)$, such that we have $R > 0$ for which
\begin{align*}
	\pa{\supp \wei_i} \backslash \cup_{j \in I, j \neq i} \supp \wei_j
\end{align*}
	contains a ball of radius $R$, uniformly in $i \in I$. Let $r \in \ell^1(I,\R_+)$ be such that $\sum_{i \in I} r_i = N$ and $r_i > 0$ for all $i \in I$. For any $v \in  \ell_r^{1}(I,\R)$ such that $\exc{k}\bpa{\sum_{i \in I} v_i \wei_i} = 0$, we have 
\begin{align}\label{cob}
	\boxed{G\ex{k}_{r,\weig}(v) \le -\min\pa{ 1, \f{\sum_{v_i \ge c_{\Omega}}r_i}{\sum_{v_i < c_{\Omega}}r_i}} \nor{v - c_{\Omega}}{\ell^1_{r}} + c_R,}
\end{align}
	where $c_R$ depends neither on $v$ nor on $r$, and $\come$ is defined in \eqref{ccc}. In particular when $I$ is finite, $G\ex{k}_{r,\weig}$ is coercive in $\ell^1_r(I,\R) = \R^{\ab{I}}$ hence it has at least one maximizer $v \in \ell^{1}_r(I,\R)$, unique if $k=0$ and $p>\max(2d/3)$.

	\bul \tx{(Existence of an optimizer)} Make the previous assumptions, and moreover assume that $I$ is finite and $\Omega$ bounded, $v$ being the maximizing potential. There is an $N$-particle $k\expo{th}$ bound mixed state $\Gamma_v \in \cans(\Omega)$ of $\hn\pa{\sum_{i \in I} v_i \wei_i}$ such that $\int_\Omega \ro_{\Gamma_v} \wei_i = r_i$ for all $i \in I$, and such that $\gew{k}(v) = \cE_0\pa{\Gamma_v} = \fl{k}(r)$. 
\end{theorem}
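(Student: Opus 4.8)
The plan is to establish the coercivity bound \eqref{cob} first, then deduce existence of a maximizer by a standard coercivity-plus-upper-semicontinuity argument, and finally extract the optimizing mixed state from the optimality theory already developed.

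\emph{Coercivity.} The strategy is to bound $\gew{k}(v)$ from above by splitting the index set $I$ according to whether $v_i$ lies above or below the reference level $\come$. Working in the gauge $\exc{k}\bpa{\sum_i v_i \wei_i}=0$, I would use the variational characterization \eqref{defen} together with a well-chosen trial state to control the energy from below by the contribution of the coordinates with small potential. Concretely, the geometric hypothesis that each $\pa{\supp \wei_i}\backslash \cup_{j\neq i}\supp\wei_j$ contains a ball of radius $R$ uniform in $i$ lets me localize $N$ particles into $N$ distinct such balls sitting on indices where $v_i$ is smallest; the Dirichlet kinetic energy on a ball of radius $R$ is bounded by a constant $c_R$ depending only on $R$ (and $d$, $N$, $w$), which produces the additive constant $c_R$ independent of $v$ and $r$. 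Subtracting the linear term $\sum_i v_i r_i = \nor{v}{\ell^1_r}$-type contribution and carefully comparing the mass $\sum_{v_i\ge\come} r_i$ against $\sum_{v_i<\come} r_i$ yields the claimed factor $\min\bpa{1,(\sum_{v_i\ge\come}r_i)/(\sum_{v_i<\come}r_i)}$ in front of $\nor{v-\come}{\ell^1_r}$. The role of $\come$ is exactly that $\exc{k}(\come\indic_\Omega)=0$ matches our gauge normalization, so shifting by $\come$ is the natural recentering that makes both positive and negative excursions of $v$ cost energy.

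\emph{Existence of a maximizer.} Once \eqref{cob} holds and $I$ is finite, the sublevel sets $\acs{v : \gew{k}(v)\ge\lambda}$ are bounded in $\ell^1_r(I,\R)=\ell^1(I,\R)$, hence bounded in the finite-dimensional space $\R^{|I|}$, so a maximizing sequence has a convergent subsequence. By the weak upper semicontinuity of $\gew{0}_{r,\weig}$ (noted earlier in Section \ref{lama}, and holding for general $k$ by the same argument via \eqref{defen}), the limit is a maximizer. Uniqueness for $k=0$ and $p>\max(2d/3,2)$ follows from the adapted Hohenberg-Kohn Theorem \ref{ada}, since two maximizers would share the same integrated density $r_{\ro_{\p_v}}$ and hence differ by a constant, which the gauge fixing eliminates.

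\emph{Existence of the optimizing mixed state.} With $\Omega$ bounded, $\hn\bpa{\sum_i v_i\wei_i}$ has compact resolvent, so the spectrum is purely discrete and every energy $\exc{k}$ is attained by genuine eigenstates; in particular $v\in\spf{k}$ automatically. Applying the equivalences of Theorem \ref{kspropi}$(i)$ to the maximizer $v$ —- where the density constraint $\ro_\Gamma=\ro$ is here replaced by the relaxed integrated constraint $\int\ro_\Gamma\wei_i=r_i$, which is precisely the form of optimality condition produced by the local problem of Proposition \ref{sollin} —- gives a $k\expo{th}$ excited mixed state $\Gamma_v$ supported on $\kerr$ with $\int\ro_{\Gamma_v}\wei_i=r_i$ for all $i$. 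The identity $\gew{k}(v)=\cE_0(\Gamma_v)$ then follows from $\exc{k}\bpa{\sum_i v_i\wei_i}=\tr\hn\Gamma_v$ and the gauge normalization, and the value equals $\fl{k}(r)$ by the duality $\sup_v\gew{k}(v)=\fl{k}(r)$ stated in Section \ref{lama}.

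The main obstacle I anticipate is the coercivity estimate, specifically producing the exact two-sided ratio factor rather than a cruder one-sided bound: one must simultaneously use the ball-localization to get a trial state cheaply on the low-potential indices \emph{and} account for the possibility that most of the mass $r$ sits on high-potential indices, which is what forces the $\min$ with the ratio $(\sum_{v_i\ge\come}r_i)/(\sum_{v_i<\come}r_i)$. Getting the constant $c_R$ genuinely independent of both $v$ and $r$ requires that the localized trial state interact trivially across disjoint balls, which the compact-support separation hypothesis on the weights is designed to guarantee.
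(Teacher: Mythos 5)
Your overall architecture (trial-state upper bound on $\exc{k}$, then gauge shift by $\come$, then finite-dimensionality for existence, then Theorem \ref{kspropi} for the optimizing mixed state) matches the paper's, but the heart of the coercivity estimate — the choice of trial state — is wrong, and this is a genuine gap. You propose to ``localize $N$ particles into $N$ distinct balls sitting on indices where $v_i$ is smallest.'' Such a state puts density of integral $1$ against each of the $N$ chosen weights $\wei_i$, so its potential energy is $\sum_{i \in S} v_i$ with $S$ the $N$ lowest indices, and the resulting bound is $\gew{k}(v) \le c_R + \sum_{i\in S} v_i - \sum_{i\in I} v_i r_i$. This does not control $\nor{v-\come}{\ell^1_r}$: take $v_i=\come$ for all $i$ except one index $i_0$ with $v_{i_0}=\come-M$ and $r_{i_0}>1$; then your bound reads $c_R - M(1-r_{i_0}) \to +\ii$ as $M\to+\ii$, while \eqref{cob} requires decay like $-M r_{i_0}(N-r_{i_0})/N$. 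There is simply no mechanism in your construction for the masses $r_i$ to enter the potential energy of the trial state, hence no way for the ratio $\sum_{v_i\ge\come}r_i/\sum_{v_i\sle\come}r_i$ to appear. The paper's construction is different in an essential way: for a set $Q\subset I$ it takes the coherent superposition $\p_{j,Q}=(\sum_{i\in Q}r_i)^{-1/2}\sum_{i\in Q}\sqrt{r_i}\,\Phi_j(X-Y_i)$ of states placing \emph{all} $N$ particles in the single ball $B_R(y_i)$, so that $\int\wei_i\ro_{\p_{j,Q}}=Nr_i/\sum_{\ell\in Q}r_\ell$ is proportional to $r_i$ (amplified by $N/\sum_Q r\ge1$). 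This yields $\gew{k}(v)\le c_R+\f{N}{\sum_{i\in Q}r_i}\sum_{i\in Q}v_ir_i-\sum_{i\in I}v_ir_i$, and choosing $Q=\acs{v_i\sle\come}$ after the gauge shift produces exactly the min-with-ratio factor. Note also that for $k\ge1$ one needs $k+1$ such trial functions $\Phi_0,\dots,\Phi_k$ with disjoint supports in $B_R$ to feed the inf-max over $(k+1)$-dimensional subspaces in \eqref{defen}; a single trial state only bounds $\exc{0}$.

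Two smaller points. First, your appeal to ``weak upper semicontinuity of $\gew{0}$\dots holding for general $k$ by the same argument'' contradicts the paper's explicit remark that $\gew{k}$ is \emph{not} upper semicontinuous or concave for $k\ge1$; existence for finite $I$ instead follows from coercivity plus ordinary continuity of $v\mapsto\exc{k}\bpa{\sum_iv_i\wei_i}$ on the finite-dimensional space. Second, your treatment of the existence of the optimizing mixed state and of uniqueness via Theorem \ref{ada} is consistent with what the paper does.
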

We provide a proof in Section \ref{ssub:proof_cococo}. In \eqref{cob}, we use the convention that $\min\pa{ 1, \f{\sum_{v_i \ge c_{\Omega}}r_i}{\sum_{v_i < c_{\Omega}}r_i}} = 1$ when $v \ge \come$. Here are some remarks. 

\textit{(i)} By Theorem \ref{kspropi}, if
 \begin{align*}
	 \dim \Ker \pa{H_N\pa{\sum_{i \in I} v_i \wei_i} -\exc{k}\pa{\sum_{i \in I} v_i \wei_i}} \in \acs{1,2},
 \end{align*}
then $\Gamma_v$ can be chosen to be pure, and $\cE_0\pa{\Gamma_v}= \fl{0}(r) = \fll{0}(r)$.

\textit{(ii)} The weight functions $\wei_i$ can have overlapping supports, but our assumption essentially says that the inside part is not too small. In the case $I$ infinite, it is not clear whether the bound \eqref{cob} implies that $G\ex{k}_{r,\weig}$ is coercive. However, when $I$ is finite, we have $\sum_{v_i < c_{\Omega}}r_i \le N$ and $\sum_{v_i \ge c_{\Omega}}r_i \ge \min r$ so \eqref{cob} yields
\begin{align*}
G\ex{k}_{r,\weig}(v) \le -\f{\min r}{N} \nor{v}{\ell^1_r} + c
\end{align*}
for any $v \in \ell^1_r(I,\R) = \ell^1(I,\R)$, where $c = \come \min r /N + c_R$ is independent of $v$, and $\min r > 0$ thus $G\ex{k}_{r,\weig}$ is coercive in the $\ell^1_{r}$ norm. 

\textit{(iii)} Our bound \eqref{cob} does not pass to the continuous model because then $R \ra 0$ and $c_R \ra +\ii$.


\textit{(iv)} The pair $(v,\Gamma_v)$ is a saddle point of the Lagrangian
 \begin{align}\label{lagr}
 \cL\pa{v,\Gamma} = \cE_0 \pa{\Gamma} + \sum_{i \in I} v_i \pa{ - r_i + \int_\Omega \ro_{\Gamma} \wei_i}
 \end{align}
 and $v$ is a Lagrange multiplier. 

\textit{(v)} We recall that the exact existence was proved for the quantum theory on a $\Z^d$ lattice \cite{ChaChaRus85}, and in the classical case at positive temperature \cite{ChaLie84}.

\textit{(vi)} Since the sum of coercive functionals is also coercive, the similarly regularized dual functional of ensemble DFT \cite{GroOliKoh88,CerSenRobFro22}, which is a weighted sum of $G\ex{k}_{r,\weig}$ over $k$, is coercive for any weights, and a $v$-representability result for ensemble DFT similar to Theorem \ref{canot} holds.


\subsection{Building Kohn-Sham potentials}

The problem of $v$-representability is, given a density $\ro \in L^1(\Omega,\R_+)$, $\int_\Omega \ro = N$, $\sqrt{\ro} \in H^1$, and $\ab{\acs{\ro = 0}\cap \Omega}=0$, to find a potential $v$ having a $k\expo{th}$ bound state $\p_v$ satisfying $\ro_{\p_v} = \ro$. We will call it the inverse potential. When $w = 0$ and $k=0$, it is called the Kohn-Sham potential \cite{KohSha65}.

\subsubsection{The mixed states case}

In the mixed states setting (at zero temperature) and using the Bishop-Phelps theorem, Lieb showed in \cite[Theorem 3.10, Theorem 3.11, Theorem 3.14]{Lieb83b}, that any such $\ro$ can be approached to any precision in $L^1 \cap L^{d/(d-2)}$ by a $v$-representable ground mixed state density. We can state a similar result for any $k$ using our variational approach.

\begin{corollary}[Constructive approximate $v$-representability in the mixed states setting]\label{canot} Let $\Omega \subset \R^d$ be a connected open set with Lipschitz boundary. Let $\ro \in L^1(\Omega,\R_+)$ be such that $\sqrt{\ro} \in H_0^1(\Omega)$ and $\ab{ \acs{ \ro = 0}\cap \Omega} = 0$, and $k \in \N$. There exists a sequence $v_n \in L^{\ii}(\Omega,\R)$ with compact support such that $H_N\pa{v_n}$ has a mixed $k\expo{th}$ bound state $\Gamma_{v_n}$ with $\cE_0\pa{\Gamma_{v_n}} \le F\ex{k}\ind{mix}(\ro)$ and $\ro_{\Gamma_{v_n}} \ra \ro$ strongly in $(L^1 \cap L^{q})(\Omega)$, where $q$ is as in \eqref{cdc}. 

	If moreover $k=0$, $\Gamma_{v_n} \ra \Gamma_{\ii}$ strongly in $\sch_{1,1}$ up to a subsequence, where $\Gamma_{\ii}$ is a minimizer of $F\ex{0}\ind{mix}(\ro)$. Furthermore, $\sqrt{\ro_{\Gamma_{v_n}}} \ra \sqrt{\ro}$ strongly in $H^1(\Omega)$, and $\cE_0\pa{\Gamma_{v_n}} \ra F\ex{0}\ind{mix}(\ro)$.
\end{corollary}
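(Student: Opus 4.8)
The plan is to combine the well-posedness result of Theorem~\ref{cococo} with a refinement of the partition, letting the mesh tend to zero while exhausting $\Omega$ by bounded domains, and to read off the density convergence from the explicit estimate recorded after Theorem~\ref{proproi} via Lemma~\ref{lemama}. Concretely, for each $n$ I would fix a bounded open set $\Omega_n \subset \Omega$ with $\Omega_n \nearrow \Omega$ and a \emph{finite} partition of unity $\weig_n = (\wei_i^n)_{i \in I_n}$ of $\Omega_n$ built from convex pieces of diameter at most $h_n$ with $h_n \to 0$, arranged so that the interior-ball condition of Theorem~\ref{cococo}, the approximation property \eqref{hypo}, and the tightness \eqref{tig} all hold. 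Setting $r_i^n \df \int \wei_i^n \ro$, so that $r^n$ is exactly $r_\ro$ for the partition $\weig_n$ (after the gauge normalization ensuring $\sum_i r_i^n = N$), the hypothesis $\ab{\acs{\ro=0}\cap\Omega}=0$ forces $\ro>0$ a.e., hence $r_i^n > 0$ for every $i$. Theorem~\ref{cococo} then yields a maximizing coefficient vector $v^{(n)}$ and an associated $k\expo{th}$ excited mixed state $\Gamma_n$ of $H_N\bpa{\sum_i v_i^{(n)} \wei_i^n}$ with $\int \ro_{\Gamma_n} \wei_i^n = r_i^n$ for all $i$ and $\cE_0(\Gamma_n) = \gew{k}(v^{(n)}) = \fln{k}{n}(r^n)$. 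Extending by zero, I set $v_n \df \sum_{i \in I_n} v_i^{(n)} \wei_i^n \in L^\ii(\Omega,\R)$, which has compact support since $I_n$ is finite and the weights are compactly supported.

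For the density convergence I would use the bound $\fln{k}{n}(r^n) \le F\ex{k}\ind{mix}(\ro)$, recorded after Theorem~\ref{exiss}, which gives at once the energy inequality $\cE_0(\Gamma_n) \le F\ex{k}\ind{mix}(\ro)$ and, through the kinetic part of $\cE_0$ together with the Hoffmann--Ostenhof inequality, a bound on $\nor{\sqrt{\ro_{\Gamma_n}}}{H^1}$ uniform in $n$. Since $\Gamma_n$ matches the coarse-grained density, $\int \wei_i^n \ro_{\Gamma_n} = \int \wei_i^n \ro$ on the fine partition, Lemma~\ref{lemama} yields
\begin{align*}
	\nor{\ro - \ro_{\Gamma_n}}{\bpa{L^1 \cap L^q}(\Omega)} \le c_d \bpa{\nor{\sqrt{\ro}}{H^1} + \sup_n \nor{\sqrt{\ro_{\Gamma_n}}}{H^1}} h_n \lra 0,
\end{align*}
the loss of mass near infinity being controlled uniformly by \eqref{tig}. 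This proves $\ro_{\Gamma_n} \to \ro$ strongly in $(L^1 \cap L^q)(\Omega)$.

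For the refinements when $k=0$, I would appeal to Theorem~\ref{proproi} directly: because $\cE_0(\Gamma_n) = \fln{0}{n}(r_\ro)$, each $\Gamma_n$ is an exact, hence approximate, minimizer of the regularized Lieb functional, so the theorem gives $\Gamma_n \to \Gamma_\ii$ strongly in $\sch_{1,1}$ up to a subsequence towards a minimizer $\Gamma_\ii$ of $F\ex{0}\ind{mix}(\ro)$, together with $\sqrt{\ro_{\Gamma_n}} \to \sqrt{\ro}$ strongly in $H^1(\Omega)$ and $\cE_0(\Gamma_n) \to F\ex{0}\ind{mix}(\ro)$.

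The step I expect to be the main obstacle is reconciling the finite-index, bounded-domain hypotheses of Theorem~\ref{cococo} with the possibly unbounded $\Omega$ and the fixed total mass $N$ on which the statement is phrased. This forces a genuine two-parameter (mesh $h_n \to 0$, truncation $\Omega_n \nearrow \Omega$) diagonal argument in which the normalization $\sum_i r_i^n = N$ must be restored at each step, for instance by rescaling $r^n$ by $N/\int_{\Omega_n}\ro$, while the uniform tightness \eqref{tig} guarantees that the resulting perturbation, and the mass escaping to infinity, both vanish in the limit. A secondary point is to verify that Lemma~\ref{lemama} applies to the chosen weights, which may require reducing to partitions into convex indicator sets for which that estimate is established.
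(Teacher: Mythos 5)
Your strategy is the same as the paper's: exhaust $\Omega$ by bounded sets $\Omega_n$, partition them into small convex cells (the paper uses cubes $C_i^n$ of side $2/n$ with $\wei_i^n = \indic_{C_i^n}$), apply Theorem~\ref{cococo} to the rescaled coefficients $c_n r_{\ro\indic_{\Omega_n}}$ with $c_n = N/\int_{\Omega_n}\ro$, and obtain density convergence from Lemma~\ref{lemama} and the $k=0$ refinements from Theorem~\ref{proproi}. All of that matches the paper.

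There is, however, one genuine gap, and you misdiagnose where the difficulty sits. Theorem~\ref{cococo} produces a $k^{\tx{th}}$ excited mixed state of the operator $\hn\bpa{\sum_i v_i^{(n)}\wei_i^n}$ acting on $L^2\ind{a}(\Omega_n^N)$ \emph{with Dirichlet conditions on $\partial\Omega_n$}, whereas the corollary asserts that $\Gamma_{v_n}$ is a $k^{\tx{th}}$ excited state of $H_N(v_n)$ on the full, possibly unbounded, domain $\Omega$. Extending the potential by zero, as you propose, does not transfer this property: the full-domain operator has a different spectrum — for $k\ge 1$ there may be lower-lying states supported in $\Omega\setminus\Omega_n$ where $v_n$ vanishes, and for unbounded $\Omega$ the essential spectrum starts at a finite threshold, so the $k^{\tx{th}}$ eigenvalue of the Dirichlet problem on $\Omega_n$ need not be the $k^{\tx{th}}$ eigenvalue of $H_N(v_n)$, nor even lie below $\inf\sigma\ind{ess}$. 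The paper resolves this by \emph{digging a pit}: it replaces the potential by $V_n = -\lambda_n\indic_{\Omega_n} + \sum_i v_i^{(n)}\wei_i^n$ with $\lambda_n$ large, so that the low-lying spectrum of the full-domain operator is generated by states localized in $\Omega_n$ and approximates the Dirichlet spectrum on $\Omega_n$ (shifted by $-N\lambda_n$), preserving the identification of $\Gamma_n$ as a $k^{\tx{th}}$ excited state and its density. The two-parameter diagonal argument and the normalization $c_n\to 1$ that you flag as the main obstacle are comparatively benign; the missing idea is this spectral localization step. A further small point in the same vein: your inequality $\fln{k}{n}(r^n)\le F\ex{k}\ind{mix}(\ro)$ requires a word of justification, since the constraint involves the rescaled $c_n r_{\ro\indic_{\Omega_n}}$ rather than $r_\ro$ itself, so an exact minimizer of $F\ex{k}\ind{mix}(\ro)$ is not directly admissible for the regularized problem on $\Omega_n$.
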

We provide a proof in Section \ref{ssub:proof_canot}. Although the existence part of Theorem \ref{cococo} holds only for bounded open sets $\Omega$, Corollary \ref{canot} holds even when $\Omega$ is unbounded. The proof uses Theorem \ref{cococo} on a sequence of growing bounded sets $\Omega_n$ with well-chosen weight functions $\wei_i^n$. We conjecture that if there exists a potential which exactly produces $\ro$, this sequence $v_n$ converges to this exact inverse potential, in a suitable sense.

However, any $\ro \ge 0$ such that $\sqrt{\ro} \in H^1(\Omega)$ and $\int_\Omega \ro = N$ is not necessarily exactly $v$-representable. For instance if $\ro$ decreases more than exponentially, then the Kohn-Sham sequence $v_n$ would not converge in an $L^s(\Omega)$ space where $s \in [1,+\ii]$, it would become very large as $\ab{x} \ra +\ii$. It will nevertheless probably converge locally.

A consequence of Corollary \ref{canot} and Theorem \ref{kspropi} $iv)$ is the density of non-interacting pure $v$-representable densities.
\begin{corollary}\label{corw}
Take $d=1$ and $w=0$. The set
\begin{align*}
\acs{\ro_{\p_v} \st v \in \spf{k}, \p_v \in \kerr, \int_{\Omega^N} \ab{\p_v}^2= 1}
\end{align*}
is dense in the set of densities $\acs{\ro \in L^1(\Omega,\R_+) \st \int_\Omega \ro = N}$, equipped with the $L^1$ distance.
\end{corollary}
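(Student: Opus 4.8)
The plan is to combine Corollary~\ref{canot} (which provides approximate mixed-state representability for any $k$, here with $d=1$, $w=0$) with Theorem~\ref{kspropi}~$iv)$ (which, in dimension one and without interaction, promotes a maximizer of $\ger{k}$ to a \emph{pure} representing state). The target is to approximate an arbitrary density $\ro \in L^1(\R,\R_+)$ with $\int \ro = N$ by densities $\ro_{\p_v}$ coming from pure excited eigenstates of genuine potentials.

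\medskip

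First I would reduce to the favourable class of densities. The cleanest $\ro$ to handle are those with $\sqrt{\ro} \in H^1(\R)$ and $\abs{\acs{\ro=0}} = 0$, since these are exactly the hypotheses of Corollary~\ref{canot}. So the first step is a density argument purely at the level of target densities: given any $\ro \in L^1(\R,\R_+)$ with $\int \ro = N$, I would approximate it in $L^1$ by densities $\ro^{(m)}$ that are bounded below by a strictly positive continuous function on each compact set, smooth, with $\sqrt{\ro^{(m)}} \in H^1$ and $\int \ro^{(m)} = N$; a standard mollify-then-truncate-then-renormalize construction does this, and the $d=1$ setting makes the $H^1$ control of $\sqrt{\ro^{(m)}}$ easy. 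It therefore suffices to approximate each such $\ro^{(m)}$ by pure-state-representable densities.

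\medskip

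Next, for a fixed regular target $\ro$ (with $\abs{\acs{\ro=0}}=0$), I would invoke Corollary~\ref{canot} to produce a sequence $v_n \in L^\ii(\R,\R)$ of compactly supported potentials whose $k\expo{th}$ excited mixed states $\Gamma_{v_n}$ satisfy $\ro_{\Gamma_{v_n}} \to \ro$ in $L^1 \cap L^q$. The crucial upgrade is that in the one-dimensional non-interacting case the mixed state can be replaced by a pure one. Rather than appealing to the mixed convergence directly, the right route is through the variational characterization: the $v_n$ in Corollary~\ref{canot} arise (via Theorem~\ref{cococo} on growing bounded domains $\Omega_n$ with discretized weights) as maximizers of the regularized dual functional $\gew{k}$. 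For each such maximizer, Theorem~\ref{kspropi}~$iv)$ applies (since $d=1$, $w=0$) and yields a \emph{pure} state $\p_n \in \Ker\bpa{\hn(v_n)-\exc{k}(v_n)}$ with $\int\abs{\p_n}^2 = 1$ whose density $\ro_{\p_n}$ equals the relevant integrated target $r_{\ro}$ on each weight, i.e.\ matches $\ro$ through the discretization. As the discretization is refined along the construction of Corollary~\ref{canot} (weights of shrinking diameter, controlled by Theorem~\ref{proproi} and the explicit diameter bound stated after it), the pure-state densities $\ro_{\p_n}$ converge to $\ro$ in $L^1$. Since each $v_n$ lies in $\spf{k}$ with a pure eigenstate $\p_n$, each $\ro_{\p_n}$ is a member of the set whose density we must establish.

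\medskip

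The main obstacle is the logical junction between Corollary~\ref{canot} and Theorem~\ref{kspropi}~$iv)$: I must ensure that the approximating potentials I use are genuinely \emph{maximizers} of the (regularized) dual functional and lie in $\spf{k}$, so that the pure-representability conclusion of Theorem~\ref{kspropi}~$iv)$ is available, and simultaneously that the pure density $\ro_{\p_n}$ (not merely the mixed density $\ro_{\Gamma_{v_n}}$) converges to $\ro$. The delicate point is that Theorem~\ref{kspropi}~$iv)$ controls only the existence of \emph{some} pure eigenstate matching $\ro$ at the maximizing potential of the \emph{discretized} problem, where the match is through the integrated weights $\int \wei_i \ro$, not pointwise. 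Hence I would thread the pure-state extraction through the finite-dimensional discretized maximizer at each level $n$ (as in the proof of Theorem~\ref{cococo} combined with remark~\textit{(i)}), then let the mesh refine; the convergence $\ro_{\p_n} \to \ro$ then follows from the same weight-refinement estimates used in Corollary~\ref{canot}, giving the claimed density of the non-interacting pure $v$-representable densities.
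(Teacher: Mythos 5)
Your proposal is correct and follows essentially the same route as the paper, which states Corollary~\ref{corw} precisely as the combination of Corollary~\ref{canot} with Theorem~\ref{kspropi}~$iv)$ (applied, as you note, in its adaptation to the discretized maximizers of $\gew{k}$ arising in the construction of Corollary~\ref{canot}, with the convergence of the pure densities controlled by the weight-refinement estimate of Lemma~\ref{lemama}). Your preliminary $L^1$-approximation of a general density by ones with $\sqrt{\ro}\in H^1$ and $\ab{\acs{\ro=0}}=0$ is a necessary step that the paper leaves implicit, and you have correctly identified and handled the one delicate junction.
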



\subsubsection{The pure states case}\label{purek}


The inequality $F\ex{k}\ind{mix}(\ro) < F\ex{k}(\ro)$ implies that $\ro$ is not $v$-representable with pure $k\expo{th}$ bound states. To continue, we make a conjecture.
\begin{conjecture}[Continuity of the Levy-Lieb and Lieb functionals]\label{conjcont}
	Take an open connected domain $\Omega \subset \R^d$ with Lipschitz boundary. Take densities $\ro,\ro_n \in L^1(\Omega,\R_+)$ such that $\sqrt{\ro}, \sqrt{\ro_n} \in H^1(\Omega)$. If $\sqrt{\ro_n} \ra \sqrt{\ro}$ in $H^1(\Omega)$, then $F\ex{0}(\ro_n) \ra F\ex{0}(\ro)$ and $F\ex{0}\ind{mix}(\ro_n) \ra F\ex{0}\ind{mix}(\ro)$.
\end{conjecture}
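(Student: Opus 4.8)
The plan is to split the statement into its two semicontinuity halves, treating the Levy--Lieb functional $F\ex{0}$ and the Lieb functional $F\ex{0}\ind{mix}$ by the same trial-state construction, the mixed case being obtained by conjugating density matrices rather than wavefunctions. For the lower bound, I would first note that $\sqrt{\ro_n}\ra\sqrt{\ro}$ in $H^1(\R^d)$ forces $\ro_n\ra\ro$ strongly in $L^1(\R^d)$, and, through the Sobolev embedding of $H^1$, in the relevant $L^q$ space. For $F\ex{0}\ind{mix}$ this strong convergence is in particular weak convergence in the topology in which the functional is lower semicontinuous, so $F\ex{0}\ind{mix}(\ro)\le\liminf_n F\ex{0}\ind{mix}(\ro_n)$. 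For $F\ex{0}$ I would argue by compactness: along a subsequence realizing the liminf with finite energy, the minimizers $\p_n$ of $F\ex{0}(\ro_n)$ are bounded in $H^1(\Omega^N)$ since $w\ge 0$; a weak limit $\p$ satisfies $\ro_{\p}=\lim_n\ro_{\p_n}=\lim_n\ro_n=\ro$ by local strong $L^2$ convergence, and weak lower semicontinuity of the energy gives $F\ex{0}(\ro)\le\cE_0(\p)\le\liminf_n F\ex{0}(\ro_n)$.

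The substance is the upper bound, for which the plan is to transport a near-minimizer of $F\ex{0}(\ro)$ onto the density $\ro_n$. I would choose a diffeomorphism $\Phi_n$ of $\R^d$ pushing $\ro_n$ onto $\ro$, that is $\ab{\det D\Phi_n(x)}\,\ro\bpa{\Phi_n(x)}=\ro_n(x)$, and set
\begin{align*}
	\p_n(x_1,\dots,x_N)\df\prod_{i=1}^N\ab{\det D\Phi_n(x_i)}^{\ud}\,\p\bpa{\Phi_n(x_1),\dots,\Phi_n(x_N)}.
\end{align*}
A direct change of variables shows $\ro_{\p_n}=\ro_n$ and that $U_{\Phi_n}\colon\p\mapsto\p_n$ is unitary on $L\ind{a}^2(\Omega^N)$. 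The energy $\cE_0(\p_n)$ then decomposes into a kinetic part, in which the derivatives falling on $\Phi_n$ and on the Jacobian factors are controlled as soon as $D\Phi_n\ra I$, and an interaction part $\int w\bpa{\Phi_n\iv(y_i)-\Phi_n\iv(y_j)}\ab{\p}^2$, which tends to $\int w(y_i-y_j)\ab{\p}^2$ by dominated convergence once $\Phi_n\iv\ra\mathrm{id}$ uniformly and $w$ is split along $L^p+L^{\ii}$. Taking the limsup and then optimizing over near-minimizers yields $\limsup_n F\ex{0}(\ro_n)\le F\ex{0}(\ro)$. The mixed case is verbatim with $\p_n$ replaced by $\Gamma_n\df U_{\Phi_n}\Gamma U_{\Phi_n}^*$.

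The hard part, and the reason the statement is posed as a conjecture, lies entirely in producing $\Phi_n$ with $D\Phi_n\ra I$ uniformly out of the sole hypothesis $\sqrt{\ro_n}\ra\sqrt{\ro}$ in $H^1$. In dimension $d=1$ this is harmless: the monotone map built from the cumulative distribution functions of $\ro_n$ and $\ro$ is explicit, and $H^1$ convergence of the square roots does control it to first order. For $d\ge2$ one is forced to rely on Dacorogna--Moser or optimal-transport regularity, which demand convergence of the densities in a Hölder or higher norm to deliver $C^1$ closeness of the maps; but $H^1$ convergence of square roots does not embed into $C^0$ when $d\ge3$, and the possible vanishing of $\ro$ degrades the regularity of any transport map still further. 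Bridging this gap between the natural $H^1$ control of the kinetic energy and the $C^1$ control required by the change of variables is exactly where I expect the main difficulty to reside.
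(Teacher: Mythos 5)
The statement you are addressing is labelled a \emph{conjecture} in the paper: no proof is given there, and the text only uses it as a hypothesis (together with Lieb's example of a density with $F\ex{0}\ind{mix}(\ro)\sle F\ex{0}(\ro)$) to argue that pure-state $v$-representable densities are not dense for $d\ge 3$. So there is no proof of record to compare against, and the only question is whether your proposal actually closes the problem. It does not, and you say so yourself: the entire upper-bound half rests on producing diffeomorphisms $\Phi_n$ transporting $\ro_n$ onto $\ro$ with quantitative convergence to the identity, and the hypothesis $\sqrt{\ro_n}\ra\sqrt{\ro}$ in $H^1(\R^d)$ supplies nothing of the sort for $d\ge 2$, nor near the zero set of $\ro$. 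Note also that the regularity you need is slightly more than the $C^1$ closeness you ask for: the kinetic energy of your trial state $\p_n$ involves gradients of the Jacobian factors $\ab{\det D\Phi_n(x_i)}^{1/2}$, hence second derivatives of $\Phi_n$, so the transport map must converge in (at least a weighted) $C^2$ sense. Even your claim that $d=1$ is ``harmless'' is optimistic: the monotone rearrangement has derivative $\ro_n/(\ro\circ\Phi_n)$, which degenerates wherever $\ro$ is small, and $H^1$ convergence of $\sqrt{\ro_n}$ does not prevent the ratio from blowing up there; one would need strict positivity with quantitative lower bounds, or an additional cutoff-and-error argument near $\acs{\ro=0}$.

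The lower-semicontinuity half of your plan is essentially sound and standard: $\sqrt{\ro_n}\ra\sqrt{\ro}$ in $H^1$ gives $\ro_n\ra\ro$ in $L^1\cap L^{d/(d-2)}$ by Sobolev embedding, which is the right topology for Lieb's lower semicontinuity of $F\ex{0}\ind{mix}$ (it is a supremum of the affine functionals $\ro\mapsto\exc{0}(v)-\int v\ro$ over $v\in L^{d/2}+L^{\ii}$), and the compactness argument for $F\ex{0}$ works because tightness of $\ro_{\p_n}=\ro_n$ is inherited from the strong $L^1$ convergence. But this half is not where the conjecture lives. What your write-up correctly isolates --- and leaves open --- is the upper bound, i.e.\ the construction of low-energy trial states with exactly the perturbed density; this is precisely the open content of the conjecture, so the proposal should be read as a plausible strategy with a named, unresolved gap rather than as a proof.
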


Conjecture \ref{conjcont} would imply that the set of pure-state $v$-representable ground densities is not dense in $L^1(\Omega,\R_+)$ when $d \ge 3$. Indeed, consider a density $\ro$ such that $F\ex{0}\ind{mix}(\ro) < F\ex{0}(\ro)$, the existence of such densities is presented in \cite[Theorem 3.4 (ii)]{Lieb83b} for $d=3$ but similar examples hold for any $d \ge 3$. Then by Conjecture \ref{conjcont} there exists $R > 0$ such that $F\ex{0}\ind{mix}(\chi) < F\ex{0}(\chi)$ for any positive $\sqrt{\chi} \in \bB_R(\sqrt{\ro})$, where we considered the ball $\bB_R(\sqrt{\ro}) \subset H^1(\Omega,\R)$. Hence $\bB_R(\sqrt{\ro})$ is an open (in the set of non-negative square functions) set of densities which are not pure-state $v$-representable.

However, with a different method which is not variational, it might still be possible to represent those densities, with excited states. As presented in \cite{FreLev82} for instance, the inverse potential can be seen as a Lagrange multiplier corresponding to the Euler-Lagrange equation of the Levy-Lieb functional. We give here a result for the discretized problem which only works for $N=1$.

\begin{theorem}[Pure excited $v$-representability, $N=1$]\label{juju}
	Take $N=1$, let $\Omega \subset \R^d$ be a connected bounded open domain with Lipschitz boundary, consider a finite partition of unity $(\wei_i)_{i \in I}$ for $\Omega$, and $r \in \ell^1(I,\R_+)$, $\sum_{i \in I} r_i = 1$ and such that $r_i > 0$ for any $i \in I$. There exist $v \in \ell^{\ii}(I,\R)$ and a pure one-particle ground or excited state $\p_r \in H^1\ind{a}(\Omega)$ of $-\Delta + \sum_{i \in I} v_i \wei_i$ such that for all $i \in I$, $\int_\Omega \ro_{\p_r} \wei_i = r_i$.
\end{theorem}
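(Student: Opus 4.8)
The plan is to exploit the $N=1$ structure, where the interaction is vacuous and a state is a single normalized $\p \in H^1\ind{a}(\Omega)$ with density $\ro_{\p}=\ab{\p}^2$, and to recast the statement as a surjectivity property of a finite-dimensional ``moment map''. For $v \in \R^I$ write $V_v \df \sum_{i \in I} v_i \wei_i \in L^{\ii}(\Omega)$ and let $\p_v$ be the ground state of $-\Delta + V_v$ on $L^2(\Omega)$ with Dirichlet conditions. I first reduce everything to the case $k=0$: it suffices to produce $v$ with $\int \wei_i \ro_{\p_v} = r_i$ for all $i$, since $\p_v$ is then a pure ground state realizing the prescribed integrated density (the freedom to use an excited state, which for $N=1$ also varies continuously in $v$ away from crossings, is a safety valve if the ground-state construction degenerates for a pathological partition). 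I would therefore study
\[
	\Phi : \R^I \longrightarrow \Sigma, \qquad \Phi(v) \df \pa{ \textstyle\int_{\Omega} \wei_i \ro_{\p_v} }_{i \in I}, \qquad \Sigma \df \acs{ r \in \R^I : r_i > 0,\ \textstyle\sum_i r_i = 1},
\]
and the goal is to show that $\Phi$ is onto the open simplex $\Sigma$, so that every admissible $r$ is attained.

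Several properties come for free from the $N=1$ structure. Since $\Omega$ is a bounded connected Lipschitz domain, $-\Delta + V_v$ has compact resolvent and, by the Perron--Frobenius and maximum principles, a simple ground state $\p_v$ that can be taken strictly positive a.e.; hence $\p_v$ depends continuously (indeed real-analytically, by Kato perturbation theory for the analytic type-(A) family $v \mapsto -\Delta + V_v$) on $v$ in $L^2$, so $\Phi$ is continuous, and $\int \wei_i \ab{\p_v}^2 > 0$ for each $i$ since $\wei_i \ge 0$ with $\int \wei_i > 0$, so $\Phi$ indeed lands in $\Sigma$. Adding a constant to $V_v$ leaves $\p_v$ unchanged, so $\Phi(v + c\mathbf{1}) = \Phi(v)$ and $\Phi$ descends to the quotient by the gauge line $\R\mathbf{1}$. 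I would assume the $\wei_i$ linearly independent (otherwise the parametrization and Hohenberg--Kohn below degenerate); then $v \mapsto V_v$ is injective and the gauge-reduced domain $\R^I / \R\mathbf{1}$ has dimension $\ab{I}-1 = \dim \Sigma$. Finally, the adapted Hohenberg--Kohn theorem (Theorem \ref{ada}), applied to the bounded potentials $V_v$, shows that $\int \wei_i \ro_{\p_v} = \int \wei_i \ro_{\p_u}$ for all $i$ forces $V_v = V_u$, i.e.\ $v = u + c\mathbf{1}$; thus the induced map $\bar\Phi$ on $\R^I/\R\mathbf{1}$ is injective.

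The topological conclusion is then short. The map $\bar\Phi : \R^{\ab{I}-1} \to \Sigma$ is a continuous injection between manifolds of equal dimension $\ab{I}-1$, so by invariance of domain its image is open in $\Sigma$. It remains to show the image is also closed in $\Sigma$, after which connectedness of $\Sigma$ forces $\bar\Phi$ onto, proving the theorem with $k=0$. Closedness is exactly properness: if $v^{(m)}$ leaves every compact set of $\R^I/\R\mathbf{1}$, then $\Phi(v^{(m)})$ must approach $\partial\Sigma$, i.e.\ some coordinate tends to $0$. This is the step I expect to be the main obstacle. The clean energy input is $\int V_v \ab{\p_v}^2 \le \exc{0}(V_v)$ combined with $V_v \ge v_i \wei_i$, which gives $\int \wei_i \ro_{\p_v} \le \exc{0}(V_v)/v_i$ whenever $v_i > 0$; bounding $\exc{0}(V_v)$ from above by a fixed test function localized where $V_v$ stays bounded then yields the decay of that coordinate. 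The hard part will be making this uniform: writing $v^{(m)} = t_m w^{(m)}$ in the gauge $\min_i v_i = 0$ with $t_m \to \ii$ and $w^{(m)} \to w^\ast$ (nonconstant by linear independence), one must show that the ground state concentrates on the region where $V_{w^\ast}$ attains its essential infimum and that on that region at least one weight $\wei_i$ is negligible, so that $\Phi(v^{(m)})_i \to 0$. It is precisely here that the boundedness of $\Omega$, the finiteness of $I$, and a non-degeneracy of the partition — that no weight can be entirely screened by the others on the minimizing region — enter, and where the restriction to $N=1$ is essential, since it is the positivity and simplicity of the single-particle ground state that make the whole moment-map picture valid.
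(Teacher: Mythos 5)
Your strategy is genuinely different from the paper's, but it has a real gap at exactly the point you flag yourself: the properness of $\bar\Phi$ (equivalently, that a sequence $v^{(m)}$ escaping to infinity modulo the gauge line forces some coordinate of $\Phi(v^{(m)})$ to tend to $0$) is not established, and it is where all the difficulty of this route lives. This step is essentially the coercivity statement of Theorem \ref{cococo}, which the paper only proves under an additional geometric hypothesis on the weights --- that each $\pa{\supp \wei_i} \backslash \cup_{j \neq i} \supp \wei_j$ contains a ball of uniform radius $R$, so that explicit trial states can be planted in the ``private'' part of each support. Theorem \ref{juju} makes no such assumption on the partition of unity, so your open--closed argument cannot be closed as written: without some non-screening condition, nothing prevents $\Phi(v^{(m)})$ from converging to an interior point of $\Sigma$ while $v^{(m)}$ diverges. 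Two further points: you silently strengthen the statement by insisting on the \emph{ground} state (the theorem deliberately allows excited states, and the paper's proof produces a state that is a priori only some eigenfunction), and you add a linear-independence hypothesis on the $\wei_i$ that is not in the statement.

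For comparison, the paper avoids the dual/properness question entirely by working on the primal side: it takes a minimizer $\p$ of the regularized Levy--Lieb functional $\fll{0}(r)$ (which exists by Theorem \ref{exiss}), computes that the constraint map $\p \mapsto \pa{\int \ro_{\p}\wei_i}_i$ has closed-range differential $\vp \mapsto 2\re\ps{\vp, V(v)\p}$, and invokes a Lagrange multiplier theorem to get either $\bpa{-\Delta + V(v)}\p = 0$ weakly or the degenerate alternative $V(v)\p = 0$ with $v \neq 0$. The hypothesis $N=1$ enters only to exclude the degenerate case, via $0 = v_i \int \ro_{\p}\wei_i = v_i r_i$ and $r_i > 0$; the multiplier $v$ is then the sought potential and $\p$ is one of its eigenfunctions since $\Omega$ is bounded. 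Your moment-map picture is an attractive alternative and, if the properness step were supplied (say under the support condition of Theorem \ref{cococo}), it would even yield the stronger ground-state conclusion together with uniqueness of $v$ modulo constants; but as it stands the key analytic step is missing.
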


A proof is provided in Section \ref{ssub:proof_canot}. Applying the last result for an increasing sequence of $\weig$'s ($\weig_n \subset \weig_{n+1}$), we get the corresponding approximate representability, as we obtained Corollary \ref{canot}. For $N=1$ the limit potential must be Bohm's potential $\Delta \sqrt{\ro} / \sqrt{\ro}$ and the state must be the ground state. We conjecture that Theorem \ref{juju} holds for any $N$, a sufficient condition being that minimizers $\p$ of our approximate Levy-Lieb functionals are such that $\ab{\acs{\p=0}}=0$.


\begin{conjecture}\label{conjos}
	Any minimizer $\p$ of $\fll{0}(r)$ satisfies 
 \begin{align*}
	\ab{\acs{X \in \Omega^N \st \p(X) = 0}}=0.
 \end{align*}
\end{conjecture}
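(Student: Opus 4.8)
The plan is to show that any minimizer $\p$ of $\fll{0}(r)$ is an eigenfunction of a Schr\"odinger operator whose added external potential is bounded, and then to invoke a strong unique continuation principle, which prevents a nonzero solution from vanishing on a set of positive measure. Existence of $\p$ is granted by Theorem~\ref{exiss}. The moment constraints rewrite as $\ps{\p,A_i\p}=r_i$ with $A_i\df\sum_{j=1}^N\wei_i(x_j)$ bounded multiplication operators, and since $\sum_{i\in I}\wei_i=\indic_{\Omega}$ gives $\sum_{i\in I}A_i=N$ on $\Omega^N$, the condition $\sum_{i\in I}r_i=N$ forces $\nor{\p}{L^2}=1$ automatically, so no separate normalization constraint is needed.

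First I would derive the Euler--Lagrange equation. Lagrange multipliers $v_i\in\R$ associated to the (finitely many) scalar constraints yield $\hn(0)\p=\sum_{i\in I}v_iA_i\p=\bpa{\sum_{j=1}^N v(x_j)}\p$ with $v\df\sum_{i\in I}v_i\wei_i\in L^{\ii}(\Omega,\R)$; equivalently $\p$ is an eigenfunction of $\hn(-v)$ with eigenvalue $0$, i.e. a Schr\"odinger operator with bounded external potential. The existence of these multipliers rests on a constraint qualification: a degeneracy $\sum_{i\in I}c_iA_i\p=0$ would mean $\sum_{j=1}^N u(x_j)=0$ on $\acs{\p\neq 0}$ with $u\df\sum_{i\in I}c_i\wei_i$, and since $\acs{\p\neq0}$ has positive measure this forces $u=0$, ruling out the abnormal multiplier and producing a genuine eigenfunction equation. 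Elliptic regularity then places $\p$ in $H^2_{\mathrm{loc}}(\Omega^N)$ and reduces the problem to $-\Delta\p+W\p=0$ on $\Omega^N$, where $W\df\sum_{1\le i<j\le N}w(x_i-x_j)-\sum_{j=1}^N v(x_j)$ carries the interaction singularities.

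Finally I would apply unique continuation: a nonzero solution of $-\Delta\p+W\p=0$ cannot vanish on a set of positive Lebesgue measure. For $W\in L^{\ii}_{\mathrm{loc}}$ this is a classical consequence of strong unique continuation (de Figueiredo--Gossez, building on Jerison--Kenig): at density points of the zero set $\p$ vanishes together with its gradient, and the doubling/frequency-function estimates upgrade this to vanishing of infinite order, which is incompatible with unique continuation unless $\p\equiv0$; since $\nor{\p}{L^2}=1$ we conclude $\ab{\acs{\p=0}}=0$. I expect the interaction to be the main obstacle: for Coulomb-type $w$ the potential $W$ is not in $L^{dN/2}_{\mathrm{loc}}(\R^{dN})$, so one cannot directly quote the critical-exponent $L^{dN/2}$ unique continuation theorem and must instead use the stronger results available for many-body operators with Kato-class or Coulomb singularities. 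Establishing unique continuation through these interaction singularities is the crux; the constraint-qualification point above is a secondary technicality.
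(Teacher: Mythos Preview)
The statement is a \emph{conjecture} in the paper; no proof is given, and the surrounding discussion explains why the natural approach --- the one you outline --- does not close. The gap is in your constraint qualification step, which you call a ``secondary technicality'' but is in fact the heart of the matter.

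You claim that the abnormal case $\bpa{\sum_{j=1}^N u(x_j)}\p=0$ with $u=\sum_i c_i\wei_i\neq 0$ is ruled out because ``$\acs{\p\neq 0}$ has positive measure forces $u=0$''. This implication is false. With $N=2$ and the $\wei_i$ indicator functions of a partition, take $u=\indic_A-\indic_B$ for two distinct cells $A,B$; then $u(x_1)+u(x_2)=0$ on $(A\times B)\cup(B\times A)$, a set of positive measure, while $u\not\equiv 0$. Thus knowing only that $\acs{\p=0}$ has \emph{less than full} measure does not exclude the abnormal multiplier; what would exclude it is precisely $\ab{\acs{\p=0}}=0$, the very statement to be proved. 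The paper makes this circularity explicit: in the proof of Theorem~\ref{juju} the Lagrange multiplier theorem yields either the Schr\"odinger equation $\hn(V(v))\p=0$ or the degenerate relation $\vv{V(v)}\p=0$ with $v\neq 0$, and the degenerate branch can be eliminated only for $N=1$; for $N\ge 2$ the paper states that Conjecture~\ref{conjos} is exactly the missing ingredient needed to do so. Your argument therefore assumes its conclusion. The unique continuation issues for singular $w$ that you flag as the ``crux'' are real but secondary: they become relevant only once an eigenfunction equation is in hand, and obtaining that equation without circularity is the actual obstruction.
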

This conjecture is related to a unique continuation property. In the Hohenberg-Kohn theorem, one considers minimizers of the energy $\cE_v$, satisfying Schr\"odinger's equation, and this implies $\ab{\acs{\p=0}}=0$ by unique continuation \cite{Garrigue19}. Here, this is a converse property in the sense that we consider minimizers of $\fll{0}(r)$, and the property $\ab{\acs{\p=0}}=0$ of minimizers, that we want to show, would imply that they satisfy Schr\"odinger's equation (see the proof of Theorem \ref{juju}).







\section{The dual problem when $w=0$}\label{secopt} 
In this section, we provide properties on the dual problem in the Kohn-Sham non-interacting case. The decomposition of the eigenfunctions into Slater determinants will enable to get more information on the local problem and on the Euler-Langrange optimality conditions.

\subsection{Definitions}\label{defdu}
We first state some definitions. 

\bul We consider the exact continuous model \eqref{exc} with potentials in $L^p+L^{\ii}$ and $p$ as in \eqref{dims}. We saw in Section \ref{repsm} that the discrete version of $\ger{k}$ is coercive. We denote by
 \begin{align}\label{defd}
	 \cD_N\ex{k}(v) := \Ker_{\R} \bpa{\hn(v) - \exc{k}(v)}
 \end{align}
the real vector eigenspace of the $N$-body operator $\hn(v)$, associated to $\exc{k}(v)$. We recall that the eigenvalues are counted with their multiplicities.

\bul Let $\cR\ind{max}$ be the set of densities such that $\ger{0}$ has a maximizer. We can define the degeneracy of a density
\begin{align*}
	\deg : \begin{array}{rcl}
		\cR\ind{max} & \longrightarrow & \N \backslash \acs{0} \\
		\ro & \longmapsto & \dim_{\R} \Ker_{\R} \bpa{\hn(v_{\ro}) - \exc{0}(v_{\ro})}, \\
\end{array}
\end{align*}
where $v_{\ro}$ is the unique maximizer of $\ger{0}$. This map is expected to have a rich structure.

\bul In the case of $w=0$, that is when the model is an effective one-body one, we distinguish two types of degeneracies in the $N$-body problem. Let us denote by $(E_i)_{i}$ and $(\vp_i)_{i}$ the real eigenvalues and a corresponding orthonormal familly of eigenfunctions of $-\Delta + v$, where $i \mapsto E_i$ is non-decreasing. The eigenfunctions of the many-body problem $\sum_{j=1}^N (-\Delta_j + v(x_j))$ are the antisymmetrized tensor products $\wedge_{i \in I} \vp_i$, where $\ab{I} = N$, having energies $\sum_{i \in I} E_i$, hence 
 \begin{align*}
	\cD\ex{k}_N(v) = \vect_{\R} \acs{\wedge_{i \in I} \vp_i \st I \in \cI\ind{tot}}
 \end{align*}
 where $\cI\ind{tot}$ is a set of $N$-tuples, and $(\wedge_{i \in I} \vp_i)_{I \in \cI\ind{tot}}$ is an orthonormal basis of $\cD\ex{k}_N(v)$. We say that the $N$-body degeneracy has a \textit{coincidental} degeneracy when several sums of energy levels \apo{accidentally} superpose while at least one energy level is different, that is when there are $I,J \subset \cI\ind{tot}$ such that $\sum_{i \in I} E_i = \sum_{i \in J} E_i$ and there is $m \in \acs{1,\dots,N}$ such that the $m\expo{th}$ (order by energy) elements of $I$ and $J$, denoted by $I_m$ and $J_m$ verify $E_{I_m} \neq E_{J_m}$. We illustrate it on the left panel of Figure \ref{degfig}. When there exists $j \in \cI\ind{tot} \backslash I$ such that $E_j = E_i$, there is a one-body level which is partially occupied, the $N$-body degeneracy comes from a one-body degeneracy, and we say that the $N$-body degeneracy is \textit{essentially one-body}, as illustrated in the middle panel of Figure \ref{degfig}. We remark that coincidental and essentially one-body degeneracies can coexist, as examplified on the right panel of Figure \ref{degfig}.

\begin{figure}
\begin{tikzpicture}[scale=0.4]
\draw[level]   (-0.5,0)  -- (0.5,0);
\draw[level]   (-0.5,1.5)  -- (0.5,1.5);
\draw[level]   (-0.5,3)  -- (0.5,3);
\draw[level]   (-0.5,4.5)  -- (0.5,4.5);
	\draw (0,0.5) -- (0,-0.5)  (0,4)--(0,5);
\end{tikzpicture}
	\hspace{0.75cm} 
\begin{tikzpicture}[scale=0.4]
	\draw[color=white] (0,0.5) -- (0,-0.5)  (0,4)--(0,5);
\draw[level]   (-0.5,0)  -- (0.5,0);
\draw[level]   (-0.5,1.5)  -- (0.5,1.5);
\draw[level]   (-0.5,3)  -- (0.5,3);
\draw[level]   (-0.5,4.5)  -- (0.5,4.5);
	\draw (0,1) -- (0,2)  (0,2.5)--(0,3.5);
\end{tikzpicture}
	\hspace{0.3cm} 
\unskip\ \vrule\
	\hspace{0.3cm} 
\begin{tikzpicture}[scale=0.4]
\draw[level]   (-0.5,0)  -- (0.5,0);
\draw[level]   (-0.5,1.5)  -- (0.5,1.5);
\draw[level]   (-1.25,3)  -- (-0.25,3);
\draw[level]   (1.25,3)  -- (0.25,3);
\draw[level]   (-0.5,4.5)  -- (0.5,4.5);
	\draw (0,0.5) -- (0,-0.5)  (-0.75,2.5) -- (-0.75,3.5);
\end{tikzpicture}
	\hspace{0.75cm} 
\begin{tikzpicture}[scale=0.4]
\draw[level]   (-0.5,0)  -- (0.5,0);
\draw[level]   (-0.5,1.5)  -- (0.5,1.5);
\draw[level]   (-1.25,3)  -- (-0.25,3);
\draw[level]   (1.25,3)  -- (0.25,3);
\draw[level]   (-0.5,4.5)  -- (0.5,4.5);
	\draw (0,0.5) -- (0,-0.5)  (0.75,2.5) -- (0.75,3.5);
\end{tikzpicture}
	\hspace{0.3cm} 
\unskip\ \vrule\
	\hspace{0.3cm} 
\begin{tikzpicture}[scale=0.4]
\draw[level]   (-0.5,0)  -- (0.5,0);
\draw[level]   (-0.5,1.5)  -- (0.5,1.5);
\draw[level]   (-1.25,3)  -- (-0.25,3);
\draw[level]   (1.25,3)  -- (0.25,3);
\draw[level]   (-0.5,4.5)  -- (0.5,4.5);
	\draw (0,0.5) -- (0,-0.5)  (0,4) -- (0,5);
\end{tikzpicture}
	\hspace{0.75cm} 
\begin{tikzpicture}[scale=0.4]
\draw[level]   (-0.5,0)  -- (0.5,0);
\draw[level]   (-0.5,1.5)  -- (0.5,1.5);
\draw[level]   (-1.25,3)  -- (-0.25,3);
\draw[level]   (1.25,3)  -- (0.25,3);
\draw[level]   (-0.5,4.5)  -- (0.5,4.5);
	\draw (0,1) -- (0,2)  (-0.75,2.5) -- (-0.75,3.5);
\end{tikzpicture}
	\hspace{0.75cm} 
\begin{tikzpicture}[scale=0.4]
\draw[level]   (-0.5,0)  -- (0.5,0);
\draw[level]   (-0.5,1.5)  -- (0.5,1.5);
\draw[level]   (-1.25,3)  -- (-0.25,3);
\draw[level]   (1.25,3)  -- (0.25,3);
\draw[level]   (-0.5,4.5)  -- (0.5,4.5);
	\draw (0,1) -- (0,2)  (0.75,2.5) -- (0.75,3.5);
\end{tikzpicture}
	\caption{Three kinds of $N$-body degeneracies: coincidental, essentially one-body, and both.}\label{degfig}
\end{figure}
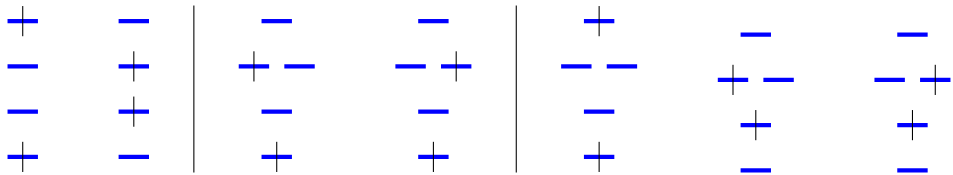

\bul Given a discrete set $S = \acs{s_i}_{i \in \N} \subset \R$ bounded below and such that $s_i \le s_j$ for $i < j$, we define, for each $j \in \N$,
 \begin{align}\label{defmu}
	 \mu_j(S) := s_j.
 \end{align}
 For a $n \times n$ diagonalizable matrix $M$ having real eigenvalues, we define $\mu_j(M) := \mu_j(D_M)$ where $D_M$ is the set of eigenvalues.

 \bul Consider that $(\p_I)_{I \in \cI\ind{tot}}$ is a basis of $\cD\ex{k}_N(v)$, where $\p_I = \wedge_{i \in I} \vp_i$, $\vp_i$ being orthonormal orbitals of $-\Delta + v$ as defined before. Let us define the \apo{inner} orbitals $I\ind{in} := \acs{i \in \N \st \forall I \in \cI\ind{tot}, i \in I}$, which are present in all the many-body functions of $\cD\ex{k}_N(v)$, they necessarily fill their energy levels. We define the \apo{inner} density 
 \begin{align*}
	 \ro\ind{in} := \mysum{i \in \N \\ i \in I \hs \forall I \in \cI\ind{tot}}{} \vp_i^2.
 \end{align*}
We now drop these \apo{inner} orbitals and only consider those which can change on $\cD\ex{k}_N(v)$. Now we define the set $\cI\ind{out} := \acs{ I \backslash I\ind{in} \st I \in \cI\ind{tot}}$ of $(N - \ab{I\ind{in}})$-body wavefunctions. The disjoint union of sets is denoted by $\cupdot$. For $k \in \N$, we define $I\ex{k} := \acs{i \in I \st \cE_v(\vp_i) = E_{k}}$ the subset of orbitals belonging to the $k\expo{th}$ one-body level, where $\cE_v(\vp) = \ps{\vp,H_1(v)\vp}$. For $I,J \in \cI\ind{out}$, we define the elements of the function-valued matrix of \apo{one-body correlations}
\begin{align}\label{matcor}
	\pa{\matix}_{IJ} = \delta_{IJ} \sum_{i \in I} \vp_i^2 +  \vp_{i}\vp_{j}\delta_{\substack{\exists \ell,i,j \in \N \\ I \cupdot J =I\ex{\ell} \cupdot J\ex{\ell} = \acs{i,j} \\   I\ex{t} = J\ex{t} \forall t \neq \ell}},
\end{align}
where $\delta_A = 1$ if and only if the condition $A$ is satisfied. For instance when the only degeneracy is essentially one-body, and comes from a one-body level $\vect (\vp_{\ell + i})_{1 \le i \le D}$ with degeneracy $D$, filled with one particle, then $\matix = \mat{\vp_{\ell+1} & \dots & \vp_{\ell+D}}^{\tx{T}} \otimes \mat{\vp_{\ell+1} & \dots & \vp_{\ell+D}}$. 
 In the same situation but when $D=3$ and when there are two particles in the one-body-level,
\begin{align*}
\matix = \begin{pmatrix} 
\phi_1^2 + \phi_2^2 &  \phi_2 \phi_3 & \phi_1\phi_3\\
 \phi_2 \phi_3 & \phi_1^2 + \phi_3^2 &  \phi_1\phi_2\\
 \phi_1\phi_3 &  \phi_1\phi_2 & \phi_2^2 + \phi_3^2 \\
\end{pmatrix},
 \end{align*}
 where $\phi_i := \vp_{\ell+i}$.

\subsection{Local problem}
By the next lemma, we can say that coincidental degeneracies \apo{do not correlate} the many-body eigenstates in our problem.

 \begin{lemma}[Local problem when $w=0$]\label{localp}
Take a density $\ro \in L^1(\Omega)$, $\ro \ge 0$, $\int_\Omega \ro = N$, $\sqrt{\ro} \in H^1(\Omega)$, consider a binding $v \in \spf{k}$, and take $w=0$. 

$i)$ We have 
\begin{align*}
	 {^+} \delta_v \ger{k} (u) = \mu_{k-m_k^v} \pa{\int_{\Omega} u \bpa{(\ro\ind{in} - \ro) \indic + \cM_{\vp}}}
 \end{align*}

	 $ii)$ The matrix $\matix = \cM_1 \oplus \dots \oplus \cM_T$ is block diagonal, where the blocks correspond to the different essentially one-body degeneracies. 

	 $iii)$ In the case of only coincidental degeneracies, with $\p_1, \dots, \p_{\dim \cD\ex{k}_N(v)}$ being an orthonormal basis of $\kerr$ composed of Slater determinants, then
 \begin{align*}
 {^+} \delta_v \ger{k} (u) = \mu_{k-m_k^v} \pa{ \int_\Omega u(\ro_{\p_i} - \ro)}_{1 \le i \le \dim \cD\ex{k}_N(v)}.
 \end{align*}
\end{lemma}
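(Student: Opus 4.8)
The plan is to start from the expression of the half Gateaux differential of $\ger{k}$ provided by \cite[Theorem 1.6]{Garrigue20} and \eqref{expg}. Writing $\widehat u \df \sum_{j=1}^N u(x_j)$ for the one-body multiplication operator attached to $u$ and $\Pi$ for the orthogonal projector onto the real eigenspace $\cD\ex{k}_N(v) = \kerrr$, degenerate first order perturbation theory for the ordered eigenvalue $\exc{k}$ (the pointwise-in-$u$ form of the Courant--Fischer max-min already appearing in Proposition~\ref{sollin}) reads
\begin{align*}
	{^+}\delta_v \ger{k}(u) = \mu_{k-m_k}\pa{\Pi \widehat u\, \Pi} - \int_{\Omega} u\ro,
\end{align*}
where $k-m_k$ is the position of $\exc{k}$ inside its degenerate cluster $\exc{m_k}(v) = \dots = \exc{M_k}(v)$. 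Since adding a multiple of the identity shifts the whole spectrum of a real symmetric matrix, I would move the scalar $\int u\ro$ inside to get ${^+}\delta_v \ger{k}(u) = \mu_{k-m_k}\pa{\Pi \widehat u\, \Pi - \pa{\textstyle\int_{\Omega} u\ro}\indic}$, reducing everything to the identification of $\Pi\widehat u\,\Pi$ in a convenient basis.

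For $i)$, I would compute $\Pi\widehat u\,\Pi$ in the orthonormal Slater basis $\p\expo{tot}_I = \wedge_{i\in I}\vp_i$, $I\in\cI\ind{tot}$, of $\cD\ex{k}_N(v)$ via the Slater--Condon rules for a one-body operator. Each diagonal entry is $\sum_{i\in I}\int u\vp_i^2$; isolating the inner orbitals $I\ind{in}$, present in every configuration, yields the same contribution $\int u\ro\ind{in}$ on each diagonal, that is a term $\pa{\int u\ro\ind{in}}\indic$, while the outer orbitals contribute $\int u\,(\matix)_{II}$. An off-diagonal entry $\ps{\p\expo{tot}_I, \widehat u\,\p\expo{tot}_J}$ vanishes unless $I$ and $J$ differ by exactly one orbital, in which case it equals the single-excitation element $\pm\int u\vp_i\vp_j$, with $\acs{i}=I\setminus J$, $\acs{j}=J\setminus I$, the sign being the fermionic reordering parity. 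The crucial observation is that $\p\expo{tot}_I$ and $\p\expo{tot}_J$ share the total energy $\sum_{i\in I}E_i=\sum_{j\in J}E_j=\exc{k}(v)$, so a one-orbital difference forces $E_i=E_j$, i.e. $i$ and $j$ belong to the same one-body level; this is exactly the nonvanishing condition of $\matix$ in \eqref{matcor}. Fixing the ordering and phases of the Slater basis then identifies these entries with $\int u\,(\matix)_{IJ}$, giving $\Pi\widehat u\,\Pi = \int u\pa{\ro\ind{in}\indic + \matix}$, and subtracting the scalar shift produces $(\ro\ind{in}-\ro)\indic + \matix$ inside $\mu_{k-m_k}$, which is $i)$.

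For $ii)$ and $iii)$ I would exploit this matrix structure. The off-diagonal couplings of $\matix$ join two configurations of $\cI\ind{out}$ only when they differ by a single within-level orbital swap; endowing $\cI\ind{out}$ with the graph whose edges are these swaps, $\matix$ becomes block diagonal along the connected components, each giving a block $\cM_t$, and two configurations in distinct components differ by at least two orbitals, which is exactly the signature of a coincidental rather than an essentially one-body degeneracy, hence $ii)$. For $iii)$, if only coincidental degeneracies occur then no one-body level is partially occupied, so any two distinct $I,J\in\cI\ind{out}$ differ by at least two orbitals (a single difference would create a degenerate occupied level, i.e. an essentially one-body degeneracy); all off-diagonal entries then vanish, $\matix$ is diagonal with entries $\sum_{i\in I\setminus I\ind{in}}\vp_i^2$, and adding $\ro\ind{in}$ gives $(\ro\ind{in}\indic + \matix)_{II} = \ro_{\p_I}$, the density of the Slater determinant $\p_I$. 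The matrix inside $\mu_{k-m_k}$ is thus diagonal with entries $\int u(\ro_{\p_i}-\ro)$, and $\mu_{k-m_k}$ of a diagonal matrix is the $(k-m_k)$-th smallest of its entries, which is $iii)$.

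The conceptual crux will be the energy-conservation argument separating coincidental degeneracies (difference of at least two orbitals, no coupling) from essentially one-body ones (single swap, nonzero coupling). The technical points demanding care are the one-sided degenerate eigenvalue perturbation with the correct ordered index $k-m_k$, for which I rely on \cite{Garrigue20}, and the phase and ordering bookkeeping in the Slater--Condon off-diagonal elements needed to match the signs encoded in the definition \eqref{matcor} of $\matix$.
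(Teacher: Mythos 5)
Your proposal is correct and follows essentially the same route as the paper: identify ${^+}\delta_v \ger{k}(u)$ with $\mu_{k-m_k}$ of the compressed operator $P\vv{u}P$ on $\kerrr$, compute its matrix elements in the Slater basis (your Slater--Condon phrasing is exactly the paper's direct computation of $\int_{\R^{d(N-1)}}\p_I\p_J$ in \eqref{croc}), and use the energy-conservation argument to show that a single-orbital difference must occur within one degenerate one-body level, which yields the identification with $\matix$, the block structure in $ii)$, and the diagonality in $iii)$.
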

We provide a proof in Section \ref{ssub:proof_secopt}. We recall that $\mu$ is defined in \eqref{defmu}. The block-diagonalization shows that the degeneracies which complexify the problem are the essentially one-body degeneracies, not the coincidental ones. In $iii)$, the normalized direction $u^*$ maximizing ${^+} \delta_v \ger{k} (u)$ over the unit ball of $L^p(\Omega)$ is $\ab{\f{\ro_{\p_i}-\ro}{\nor{\ro_{\p_i}-\ro}{L^{\f{p}{p-1}}(\Omega)}}}^{\f{1}{p-1}} \sgn(\ro_{\p_i}-\ro)$, for the corresponding $i$.

\subsection{Optimality}

\begin{proposition}[Euler-Lagrange inequations when $w=0$]\label{ksprops}

Take a density $\ro \in L^1(\Omega)$, $\ro \ge 0$, $\int_\Omega \ro = N$, $\sqrt{\ro} \in H^1(\Omega)$, consider a binding $v \in \spf{k}$, take $w=0$, and assume that $v$ maximizes $\ger{k}$.

	$i)$ We have, a.e in $\Omega$,
\begin{align}\label{codl}
	\ro\ind{in} +  \mu_{k-m_k^v} \pa{  \matix } \le \ro \le \ro\ind{in} + \mu_{M_k^v-k} \pa{ \matix }.
 \end{align}

	$ii)$ If there is only one \apo{outer} particle, that is $N = 1 + \int_\Omega \ro\ind{in}$, then $k = m_k^v$ and 
\begin{align*}
	\ro\ind{in} \le \ro \le \ro\ind{in} + \sum_{i \in G} \vp_i^2,
 \end{align*}
	where $G$ is such that $(\vp_i)_{i \in G}$ is a basis of the one-body degenerate level $\Ker_{\R} (-\Delta + v - E_{\ell})$ producing the degeneracy of the $N$-body level.


	$iii)$ If $\kerr$ has only coincidental degeneracies, then it has a $k\expo{th}$ bound pure Slater state $\p = \wedge_{i \in I} \vp_i$ such that $\ro_{\p} = \ro$. 
\end{proposition}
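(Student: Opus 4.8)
The three statements all descend from the first-order optimality of a maximizer. Since $v$ maximizes $\ger{k}$, moving in any direction cannot increase the functional, so ${^+}\delta_v \ger{k}(u) \le 0$ for every $u \in \lpi$. I would feed this into Lemma~\ref{localp} $i)$, which gives ${^+}\delta_v \ger{k}(u) = \mu_{k-m_k}\bpa{\int_\Omega u A}$ with $A \df (\ro\ind{in}-\ro)\indic + \matix$. Replacing $u$ by $-u$ and using $\mu_j(-M) = -\mu_{D-1-j}(M)$ for a $D\times D$ symmetric matrix $M$ with $D = M_k-m_k+1$, together with $D-1-(k-m_k) = M_k-k$, the single inequality splits into $\mu_{k-m_k}\bpa{\int u A} \le 0$ and $\mu_{M_k-k}\bpa{\int u A}\ge 0$ for all $u$. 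To pass to the pointwise bound \eqref{codl} I would localize: at a Lebesgue point $x_0$ of the locally integrable matrix-valued map $A$, test against $u=\indic_{B_\ep(x_0)}$; by positive homogeneity $\mu_{k-m_k}\bpa{\int_{B_\ep}A}=\ab{B_\ep}\,\mu_{k-m_k}\bpa{\ab{B_\ep}\iv\int_{B_\ep}A}\le 0$, and letting $\ep\to 0$ with continuity of $\mu_{k-m_k}$ gives $\mu_{k-m_k}(A(x_0))\le 0$, and symmetrically $\mu_{M_k-k}(A(x_0))\ge 0$, a.e. Since $A$ differs from $\matix$ by the scalar $(\ro\ind{in}-\ro)$ times the identity, $\mu_j(A)=\mu_j(\matix)+\ro\ind{in}-\ro$, and rearranging yields exactly $\ro\ind{in}+\mu_{k-m_k}(\matix)\le\ro\le\ro\ind{in}+\mu_{M_k-k}(\matix)$, proving $i)$.

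For $ii)$, with a single outer particle the outer configurations are the single orbitals of one degenerate one-body level $\vect(\vp_i)_{i\in G}$, so by \eqref{matcor} the matrix is rank one, $\matix=\phi\phi^{\tx{T}}$ with $\phi=(\vp_i)_{i\in G}$, whose eigenvalues are $0$ (multiplicity $\ab{G}-1$) and $\sum_{i\in G}\vp_i^2$. Plugging into $i)$, the upper bound is $\ro\le\ro\ind{in}+\mu_{M_k-k}(\matix)$; if $k>m_k$ then $M_k-k<\ab{G}-1$ forces $\mu_{M_k-k}(\matix)=0$, hence $\ro\le\ro\ind{in}$ a.e., contradicting $\int(\ro-\ro\ind{in})=N-(N-1)=1>0$. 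Thus $k=m_k$, and then $i)$ collapses to $\ro\ind{in}\le\ro\le\ro\ind{in}+\sum_{i\in G}\vp_i^2$.

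For $iii)$, coincidental degeneracy means (Lemma~\ref{localp} $ii)$) that all blocks of $\matix$ are $1\times1$, i.e.\ $\matix$ is diagonal with entries $\sigma_I=\sum_{i\in I}\vp_i^2$, the outer density of the Slater $\p_I$. By Theorem~\ref{kspropi} $i)$ there is a representing mixed state $\Gamma$ supported on $\kerr$; since distinct coincidentally degenerate configurations differ in at least two orbitals their transition densities vanish, so $\ro=\ro\ind{in}+\sum_{I}\Gamma_{II}\sigma_I$ is a convex combination of the Slater densities. The task is to concentrate this combination on a single determinant $\p=\wedge_{i=1}^N\vp_i$. Using $\sup_{\nor{u}{\lpi}=1}{^+}\delta_v\ger{k}(u)=0$ together with \eqref{lao} shows that every coordinate subspace $Q$ spanned by $M_k-k+1$ configurations contains a representing state, so $\ro-\ro\ind{in}$ lies in $\mathrm{conv}\{\sigma_I\}_{I\in S}$ for every such $S$; intersecting this with the pointwise order-statistic sandwich of $i)$, I would argue that $\ro-\ro\ind{in}$ coincides a.e.\ with a single $\sigma_{I^*}$, producing the pure Slater state.

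The main obstacle is exactly this last concentration step in $iii)$: the first-order conditions furnish only a mixed (convex) representation plus a pointwise sandwich between two order statistics, and upgrading these to equality with one \emph{fixed} configuration $I^*$, rather than a point-dependent one, is delicate. I expect the decisive input to be the decoupled diagonal structure combined with a measurability and continuity argument on the level sets $\{\sigma_I=\ro-\ro\ind{in}\}$, ruling out that genuine mixing of several configurations survives at the maximizer.
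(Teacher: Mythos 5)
Your parts $i)$ and $ii)$ are correct and follow the paper's own route: the two-sided order-statistic inequality obtained by testing $u$ and $-u$ in Lemma \ref{localp} $i)$, localized at Lebesgue points (the paper uses $u_n = n^d \indic_{B_{1/n}(x)}$ where you use normalized balls, which is the same computation), and for $ii)$ the rank-one structure of $\matix$ followed by integration of the upper bound in \eqref{codl}.

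For $iii)$ there is a genuine gap, which you have correctly located yourself, but the situation is worse than ``delicate'': the ingredients you assemble provably cannot suffice. Take two coincidentally degenerate Slater determinants $\p_1,\p_2$ spanning $\kerr$ with $k=m_k$, and suppose $\ro = \tfrac{1}{2}\pa{\ro_{\p_1}+\ro_{\p_2}}$ with $\ro_{\p_1}\neq\ro_{\p_2}$. Then the unique admissible coordinate subspace has dimension $M_k-k+1=2$, so the convex-hull condition you extract from Theorem \ref{kspropi} is just $\ro\in\mathrm{conv}\acs{\ro_{\p_1},\ro_{\p_2}}$, and the pointwise sandwich of $i)$ reads $\min_I\sigma_I\le\ro-\ro\ind{in}\le\max_I\sigma_I$, which is automatic for a convex combination; yet no single determinant represents $\ro$. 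So ``mixed representation plus sandwich'' is strictly weaker than the claim, and no measurability or level-set argument can bridge this. The paper's proof goes instead through Lemma \ref{localp} $iii)$: with only coincidental degeneracies $\matix$ is diagonal, so ${^+}\delta_v \ger{k}(u)$ is the $(k-m_k)$\textsuperscript{th} order statistic of the \emph{scalar} family $\bpa{\int u(\ro_{\p_I}-\ro)}_I$. Optimality in the directions $u$ and $-u$ then forces, for every $u$, at least $k-m_k+1$ of these numbers to be nonpositive and at least $k-m_k+1$ to be nonnegative; when $2(k-m_k+1)$ exceeds $\dim\kerr$, i.e. $k\ge(M_k+m_k)/2$, at least one entry vanishes for each $u$, and since a real vector space is never a finite union of proper subspaces, a single \emph{fixed} configuration $I^*$ satisfies $\int u(\ro_{\p_{I^*}}-\ro)=0$ for all $u$, that is $\ro_{\p_{I^*}}=\ro$. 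This finite-union-of-hyperplanes step is exactly what converts ``for each $u$ some configuration works'' into ``one configuration works for every $u$'', and it is the idea missing from your write-up. (Even so, the counting only closes for $k\ge(M_k+m_k)/2$; the paper's one-line proof of $iii)$ leaves the remaining cases implicit, so your unease about this point is not unfounded.)
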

We provide a proof in Section \ref{ssub:proof_secopt}. We remark that the larger $k-m_k^v \ge 0$, the more constraining are the Euler-Lagrange inequalities.

In particular, when $k=0$, there are only essentially one-body degeneracies and $i)$ becomes
\begin{align*}
\ro\ind{in} +  \min \sigma \pa{  \matix } \le \ro \le \ro\ind{in} +\max \sigma  \pa{ \matix }.
 \end{align*}

 Finally, we numerically find that when all the different partially filled one-body levels have dimensions 2, then there is a pure $k\expo{th}$ bound state representing $\ro$ at optimality. We believe that the sets of mixed states densities and pure densities are equal in such configuration.

\section{Numerical simulations}\label{simus}

In this section, we implement the dual problem and compute inverse potentials. The algorithm is presented for $w=0$, but its extension to any $w$ can be easily adapted from this presentation.

\subsection{Definition of the problem}

We do not use the framework of the discretized space of potentials introduced previously. It was developped to regularize the dual problem and show approximate $v$-representability of densities, but it is not useful for simulations since the expensive step is the computation of eigenstates, and it is faster to treat the problem with the full potential space directly. Instead we use a finite plane waves basis, which corresponds to the Fourier dual of the model studied in \cite{ChaChaRus85}, and hence periodic boundary conditions. Our spatial length will be denoted by $L > 0$, which will be equal to $1$ or $5$ in our applications. We implemented the algorithm in Julia \cite{Julia}, using the LOBPCG algorithm extract from DFTK \cite{DFTK20}. We consider
\begin{align*}
\hn^{w=0}(v) := \sum_{i=1}^N \pa{-\Delta_i + v(x_i)}.
\end{align*}
We use the same notations as in the continuous case, but they have to be taken in their discrete versions. The plane waves basis
\begin{align*}
e_k(x) = \f{e^{i 2\pi k \cdot x}}{L^d}
\end{align*}
where $x \in [0,L]^d$ and $\acs{-K,\dots,K}^d \subset \Z^d$ where $K \in \N$ is a cutoff, which can be taken to be $25$ for our figures because we took regular densities $\ro$. Given some density $\ro$ belonging to the set
\begin{align}\label{tden}
\acs{ \ro \in L^1(\Omega) \st \ro \ge 0, \mediumint \ro = N},
\end{align}
and called the target density, our goal is to find a potential $v$ such that $\hn^{w=0}(v)$ has at least one $k^{\tx{th}}$ bound state and such that there exists a mixed state $\Gamma_{v,k}$ with range on $\Ker_{\R} \bpa{\hn^{w=0}(v)-\exc{k}(v)}$ such that
\begin{align*}
\ro_{\Gamma_{v,k}} = \ro.
\end{align*}
At the discrete level, the existence is justified by \cite{ChaChaRus85} for $k=0$, and at the continuous level by our previous results for $k \ge 1$. We recall that for $k=0$ the searched potential is unique by the Hohenberg-Kohn theorem \cite{ChaChaRus85,HohKoh64}. We would also like to know whether the set of $v$-representable pure state densities
\begin{multline}\label{rep}
\bigg\{ \ro_{\p^{(k)}_v} \st v \in (L^p+L^{\ii})(\Omega),  \\
 \p^{(k)}_v \in \Ker \bpa{\hn^{w=0}(v)-\exc{k}(v)}, \int_{\Omega^N} \ab{\p^{(k)}_v}^2 = 1\bigg\}
\end{multline}
is dense in the set of densities \eqref{tden}. We previously saw in Corollary \ref{corw} that it is dense when $d=1$, but we conjectured that it is not so when $d=3$ (Conjecture \ref{conjcont}).  

More explicitely, since $w=0$, we are led to study the one-body operator $-\Delta + v$. Following the notations defined in Section \ref{defdu}, its eigenvectors are denoted by $\vp_i$ and the energies by $E_i = \int_\Omega \ab{\na \vp_i}^2 + \int_\Omega v \ab{\vp_i}^2$. The $N$-body $k\expo{th}$ bound states $\p_I = \wedge_{i \in I} \vp_i$ of $\hn^{w=0}(v)$ are labeled by $I \in \cI\ind{tot}$, the corresponding density is $\ro_{\p_I} = \sum_{i \in I} \ab{\vp_i}^2$ and the energy is $\exc{k}(v) = \sum_{i \in I} E_i$.


\subsection{Algorithm}

As we can see with Theorems \ref{kspropi} and \ref{cococo}, we need to maximize $\ger{k}$, which is a well-posed problem. Indeed, since in our simulations the system lives in a bounded set and since the space is discretized, $\ger{k}$ is coercive. Once we found its maximizers, knowing whether there exists a pure state density equal to $\ro$ boils down to computing \eqref{fpure}.

We apply a gradient ascent algorithm on $\ger{k}$ to maximize this function.

\subsubsection{Starting point}

When $N$ is small, that is $N \lesssim 3$, we start from Bohm's potential
\begin{align*}
v\ind{Bohm} := \f{\Delta \sqrt{\ro}}{\sqrt{\ro}},
\end{align*}
which produces exactly $\ro$ when $N=1$ and $k=0$. When $N$ is large, that is $N \gtrsim 4$, we start from Thomas-Fermi's potential
\begin{align}\label{eq:TF}
v\ind{TF} :=  - c\ind{TF} \ro^{\f 2d}, \qquad  c\ind{TF} := 4\pi^2 \pa{\f{d}{\ab{S^{d-1}}}}^{\f 2d},
\end{align}
which produces exactly $\ro$ when $N$ is large \cite{FouLewSol18}. The arbitrary threshold $4$ will be justified in Section \ref{sub:TF}.

\subsubsection{Ascent direction}

We start by computing the first $A$ eigenfunctions $(\vp_i)_{1 \le i \le A}$ of $-\Delta + v$, where $A \ge N+k$. The exact minimal needed value of $A$ can be larger than $N+k$ in case of degeneracies. Then we compute all the energy configurations $\sum_{i \in I} E_i$ for $I \subset \acs{1,\dots,A}$ such that $\ab{I} = N$, and we store those energies in the non-decreasing order. This gives us the $N$-body spectrum of $\hn(v)$, $\exc{k}(v)$ being the $(k+1)\expo{th}$ number of this list, and we deduce $\Ker_{\R} \bpa{\hn^{w=0}(v)-\exc{k}}$ by taking the configurations having energies close to $\exc{k}(v)$ as will be detailed later.

A direction of steepest ascent is given by \eqref{lao} and \eqref{maxu}, where we will take $p=2$ for simplicity, but it would be interesting to study the dependence of convergence with respect to this exponent. However, also for simplicity, we will not take a steepest ascent direction, but one solving
\begin{align}\label{paso}
	\mysup{u \in L^2(\Omega,\R) \\ \nor{u}{L^2}=1} {^+}\delta_v G\ex{m_k^v} (u) & = \mymin{\Gamma \in \canp\bpa{\cD\ex{k}_N(v)} \\  \Gamma \ge 0, \tr \Gamma =1} \nor{\ro_{\Gamma} - \ro}{L^2(\Omega)}.
\end{align}
The supremum is attained by
\begin{align*}
u^* =\f{\ro_{\Gamma^*}-\ro}{\nor{\ro_{\Gamma^*}-\ro}{L^2(\Omega)}}, 
\end{align*}
where $\Gamma^*$ is an optimizer of the right hand side of \eqref{paso}. This is justified because ${^+}\delta_v G\ex{m_k^v} \le {^+}\delta_v G\ex{k}$ so in this direction $u^*$, we still have 
 \begin{align*}
	 0 \le {^+}\delta_v G\ex{k} (u^*).
 \end{align*}
This scheme should lead to a maximum because local maximas of $G\ex{k}$ are global by Theorem \ref{kspropi}. Moreover, we experimentally remark that by using the direction given by \eqref{paso} or the steepest ascent direction given by \eqref{lao}, having $k \neq m_k^v$ is very rare, and for almost converged potentials, we always have $k = m_k^v$. In some situations, the min/max problem \eqref{lao} is not necessarily light to compute, it complexifies the implementation, only marginally accelerates the convergence, and is not convenient for implementing temperature, further justifying the use of the direction \eqref{paso}.

In case of degeneracies, which happens \apo{most of the time}, if we choose a direction $u = \pa{\ro_{\Gamma}-\ro}/\nor{\ro_{\Gamma}-\ro}{L^2(\Omega)}$ where $\Gamma$ is a randomly choosed mixed state of $\kerrw$, the algorithm starts to diverge. Hence optimizing over directions is necessary, except when $(d,k)=(1,0)$ because of the non-degeneragy theorem.

\subsubsection{Temperature}

In order to smooth out the behavior of the algorithm and improve its convergence, we introduce a \apo{temperature} effect. We define the set of $N$-body Slater functions of $v$ built on eigenfunctions
 \begin{align*}
	 \cF(v) := \acs{ \wedge_{i \in I} \vp^i \st \ab{I} = N, I \subset \N \backslash \acs{0}}
 \end{align*}
and consider the problem
 \begin{align}\label{passt}
 \cP(v) := \mymin{\Gamma \in \canp( \cF(v) ) \\ \Gamma \ge 0, \tr \Gamma = 1}  e^{\ud \bpa{\f{\cE_v\pa{\Gamma}-\exc{k}(v)}{T}}^2} \indic_{\ab{\cE_v\pa{\Gamma}-\exc{k}(v)} \le \cT}\int_{\Omega} \pa{\ro_{\Gamma} - \ro}^2,
 \end{align}
where $T$ and $\cT$ are fictitious temperatures. This problem \eqref{passt} is solved by an optimal damping algorithm (ODA) \cite{CanBri00,CanBri00a,Cances00b}. We define $\ro_n := \ro_{\Gamma^*}$ where $\Gamma^*$ is a minimizer of $\cP(v_n)$ and $n$ is the $n\expo{th}$ iteration step.
 In Appendix 3, we provide the computations needed in the implementation of this part of the algorithm.

First, the cut-off $\indic_{\ab{\cE_v\pa{\Gamma}-\exc{k}(v)} \le \cT}$ considerably lowers the dimension of the optimization set $\cF(v)$, dropping configurations having energies too far from the relevant one. Then, the smoothing factor $e^{- \ud \bpa{\f{\cE_v\pa{\Gamma}-\exc{k}(v)}{T}}^2}$ enables to take into account many-body states which do not exactly have energy $\exc{k}(v)$ but are close, addressing degeneracies in a continuous way. The absence of this last factor raises divergence issues.

Let $T_n,\cT_n$ be the temperatures at step $n$. We take 
 \begin{align*}
	 \cT_0 = \f{E_{N+k}^{v_0} - E_0^{v_0}}{B(N+k)}   
 \end{align*}
where $E_i^{v_0}$ is the $i\expo{th}$ eigenenergy of the one-body operator $-\Delta + v_0$ and $B$ is a parameter. We take $T_n = \cT_n/D$ for any $n \in \N$, with $D$ being a parameter, and progressively decrease them by choosing $\cT_{n} = \alpha^{\floor{n/M}} \cT_0$, where $\alpha < 1$, which we call the cooling factor, and $M \ge 1$. We remarked that cooling \apo{by steps} with the factor $\alpha^{\floor{n/M}}$, rather than with $\alpha^{n/M}$, improves the convergence. In practice, we find that $D=B=10$, $M=5$ and $\alpha = 3/4$ are good trade-offs.

\subsubsection{Line search optimization in the direction found}

The previous procedure provided us an ascent direction, we now want to optimize the step in this direction. We define
\begin{align*}
	v(\lambda) := v_n + \lambda \f{\ro_n-\ro}{\nor{\ro_n-\ro}{L^2(\Omega)}},
\end{align*}
and take parameters $\mu > 1$ and $\nu_0 > 0$. If $\cP(v(\nu_n)) \ge \cP(v(0))$, then we compute $\cP(v(\mu^j \nu_n))$ for increasing values of $j \in  \N \backslash \acs{0}$ until $\cP\pa{v(\mu^{j+1} \nu_n)} \le \cP(v(\mu^j \nu_n))$, the last value of $j$ being denoted by $j_n^*$. If $\cP(v(\nu_n)) \le \cP(v(0))$, then we compute $\cP(v(\mu^{-j} \nu_n))$ for $j \in \N \backslash \acs{0}$ until $\cP\pa{v(\mu^{-(j+1)}\nu_n)} \le \cP(v(\mu^{-j}\nu_n))$, this defines $j_n^*$ in this other case. Finally, the new potential will be
 \begin{align*}
	 v_{n+1} := v\bpa{\mu^{j_n^*} \nu_n}.
 \end{align*}
 We \apo{learn} the step size in the sense that $\nu_{n+1} = \mu^{j^*_n} \nu_n$ if $\nor{\ro_{n+1}-\ro}{L^2(\Omega)}/N \le 10^{-3}$, and $\nu_{n+1} = \nu_0$ otherwise.

\subsubsection{Convergence criterion}

We consider that the algorithm converged when $\sqrt{\cP(v_n)}/N \le \ep$ and $\alpha^{\floor{n/M}} \le \delta$, where we take $\ep = 10^{-5}$ and $\delta = 10^{-2}$. This ensures that we found $v$ and a mixed state $\Gamma$ supported on the $\ell\expo{th}$ bound states $\p^{\ell}_i$ such that 
 \begin{align*}
	 \f{\nor{\ro_{\Gamma} - \ro}{L^2(\Omega)}}{N} \le \ep, \bhs B(N+k) \f{\ab{\cE_v(\p_i^{\ell}) - \exc{k}(v)}}{E_{N+k}^{v_0} - E_0^{v_0}} \le \delta.
 \end{align*}

\subsubsection{Pure states representability}
Once we found an approximate maximizing potential $v_n$ of $\ger{k}$, if we want to know whether it produces a pure state density $\ro$, we compute
 \begin{align}\label{passtp}
	 \mymin{\p \in \vect_{\C} \cF(v_n) \\ \int_{\Omega^N} \ab{\p}^2 = 1} \indic_{\ab{\cE_{v_n}\pa{\p}-\exc{k}(v_n)} \le \cT_n} \int_{\Omega} \bpa{\ro_{\p} - \ro}^2.
 \end{align}
 The above problem is computed using a particle swarm optimization algorithm implemented in the library \href{https://manoptjl.org/stable/}{Manopt.jl}. The base manifold is a complex $(\dim \cD\ex{k}_N(v) -1)$-dimensional Grassman manifold representing the optimizing set of pure states.

\subsubsection{Remarks}\tx{ }

\bul The global constant in potentials has no importance during the scheme, we only fix it in the graphs for readability purposes. 

\bul In the case where the $k\expo{th}$ level of $H^{w=0}_N(v_n)$ is non-degenerate, one can use a Newton or quasi-Newton algorithm such as BFGS to accelerate the convergence. With our notation and considering that the spectrum is purely discrete, for two directions $u, h \in (L^p+L^{\ii})(\Omega)$, 
 \begin{align*}
	 \d^2_v \ger{k}(u,h) & = \ps{u, \d_v \ro\ex{k}(v) h} \\
	 & = 2 \mysum{J \subset \N, \ab{J} = N \\ I_k \cupdot J = \acs{i,j}}{} \bigg( \exc{k}(v)-\sum_{\ell \in J} E_{\ell}\bigg) ^{-1} \ps{\vp_i,u \vp_j} \ps{\vp_i,h \vp_j},
 \end{align*}
where $v \mapsto \ro\ex{k}(v)$ is the map from potentials to eigenstate densities, and where $I_k$ is the configuration corresponding to $\Ker_{\R} \bpa{H^{w=0}_N(v)-\exc{k}(v)} = \R \bpa{ \wedge_{i \in I_k} \vp_i}$. In a finite basis $(u_i)_i$, the Hessian $\na_v \ger{k} = \bpa{\d^2_v \ger{k}(u_i,u_j)}_{ij}$ is non-degenerate when $k=0$. In case of degeneracies, see \cite{ShaFan95} for a full treatment, see also \cite{PolRoc96}.

 \bul The library DFTK is configured for periodic boundary conditions, but we only consider densities which are very close to zero close to the boundaries, this implies potentials which are very large at the boundaries, and we recover a situation equivalent to a setting with Dirichlet boundary conditions.

\bul As expected, for $N=1$ and $k=0$, the potential $v_0 = \Delta \sqrt{\ro} / \sqrt{\ro}$ has a density very close to the target density $\ro$. For any $d, N, k$, the algorithm converges significantly faster when we start from $v_0 = \Delta \sqrt{\ro} / \sqrt{\ro}$ compared to $v_0 = 0$.


\subsection{Convergence results}

The convergence is theoretically justified by Theorem \ref{cococo}, and confirmed in our simulations. Up to slight adaptations of the parameters $\alpha$, $D$, $B$, and $M$, the algorithm always converges both at the levels of densities and of potentials, as expected, for any $d \in \acs{1,2,3}$ and any $\ro, N, k$. Moreover, the larger $N$, the faster the convergence. We obtain arbitrary precision on \eqref{passtp}, and observe numerically that it decreases as $1/n$. We give a first qualitative illustration on Figure~\ref{f3}, for $k=0$, related to the LDA which locally approximates densities by uniform electron gas partitions \cite{GiuVig05,LewLieSei18,LewLieSei19}.
\begin{figure}
\begin{center}
\includegraphics[width=0.49\textwidth]{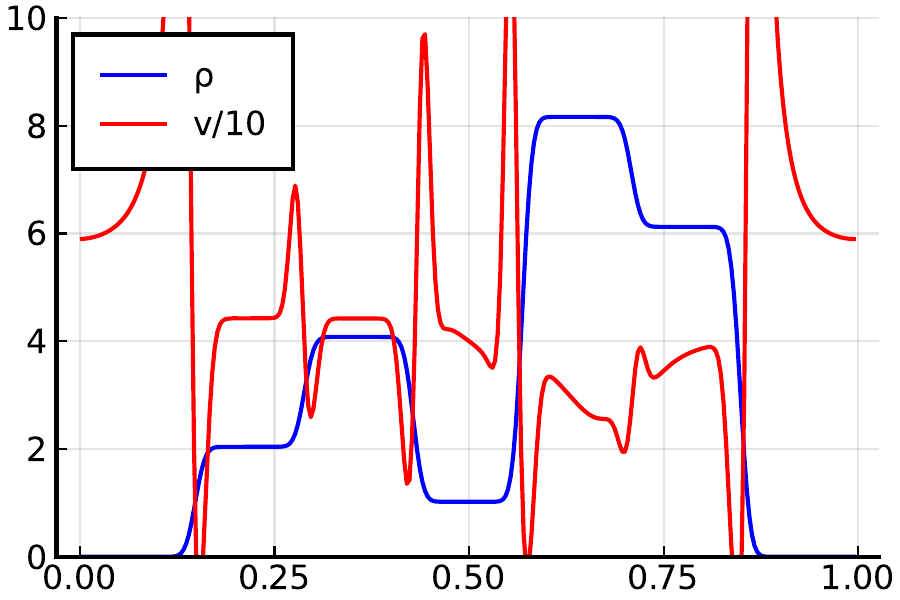} 
\includegraphics[width=0.49\textwidth]{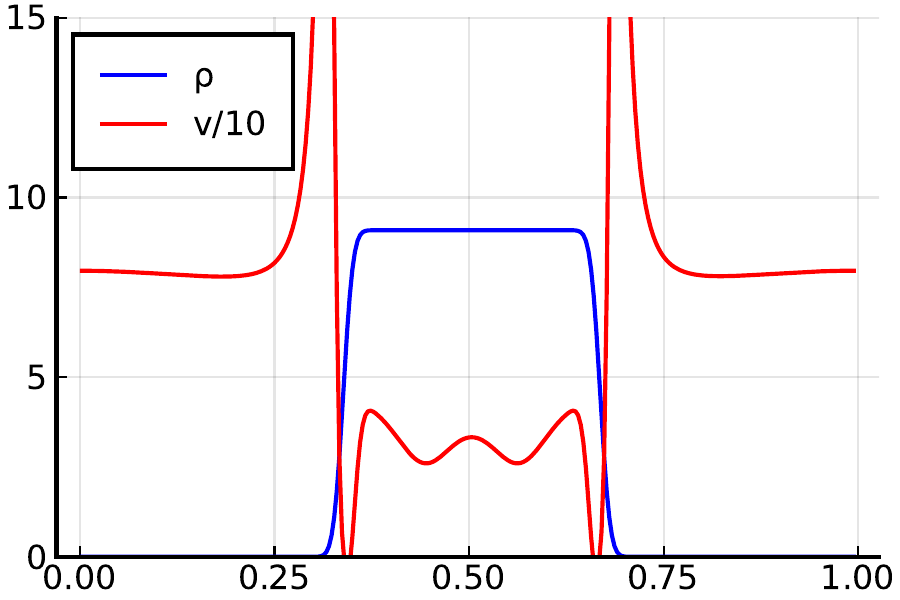} \\
\includegraphics[width=0.49\textwidth,trim={2.5cm 0cm 2.5cm 0},clip]{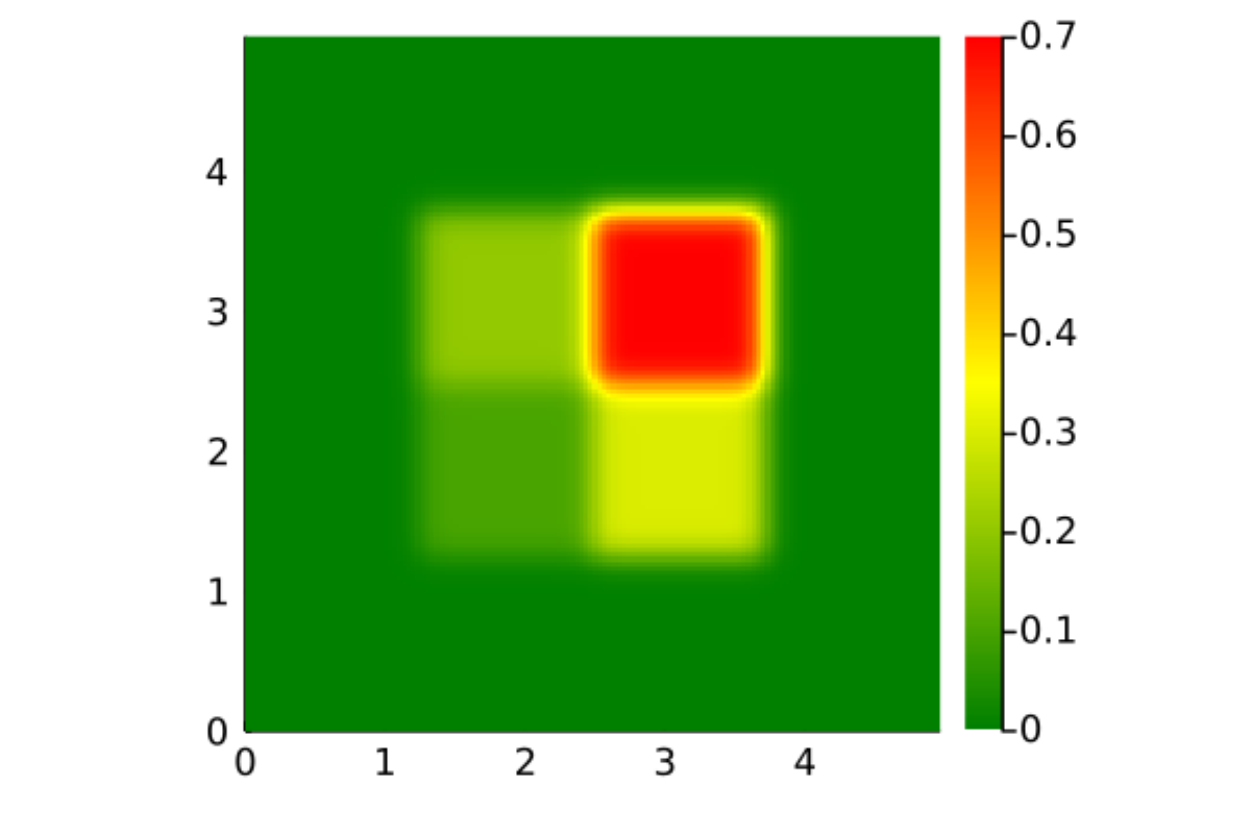}
\includegraphics[width=0.49\textwidth,trim={2.5cm 0cm 2.5cm 0},clip]{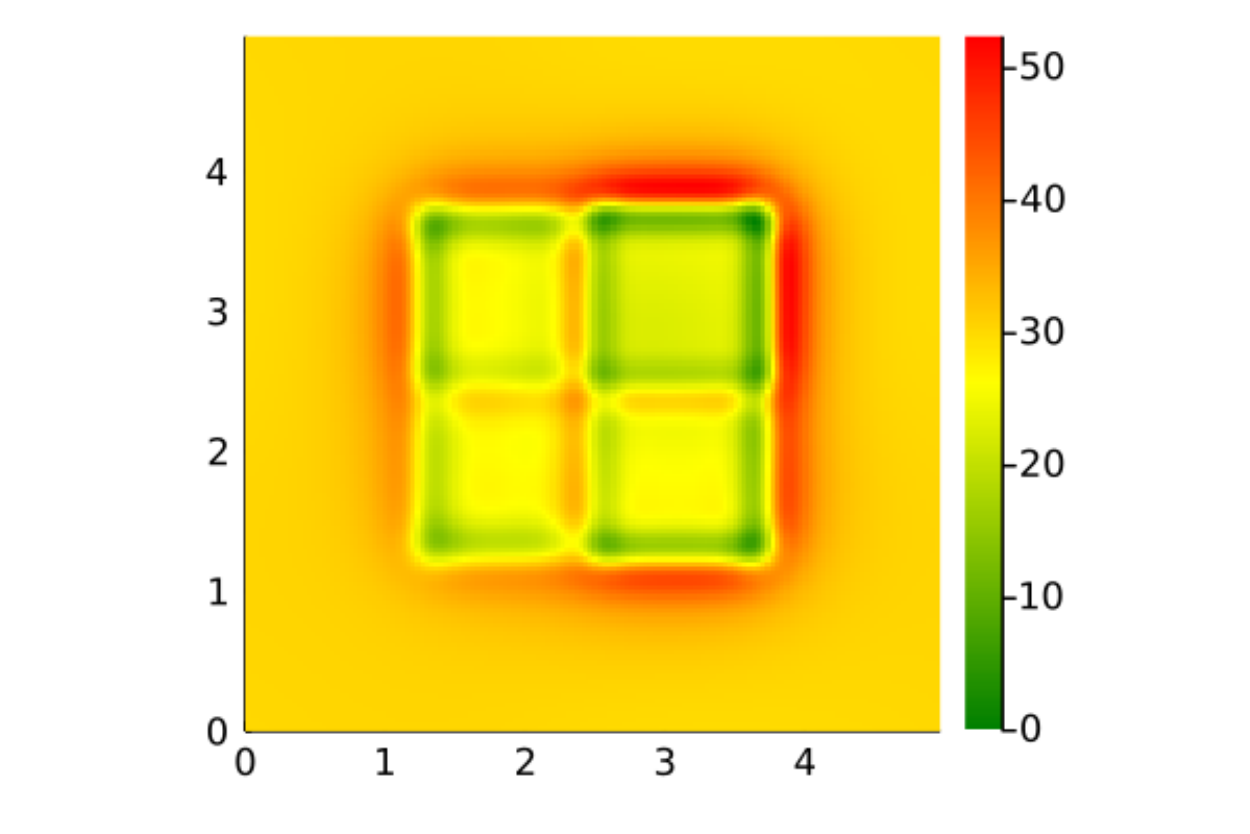}
\end{center}
	\caption{Target densities and their Kohn-Sham potentials for $k=0$. On the first line, $d=1$ and $N=3$, densities (blue) and inverse potentials (red) are in different units. On the second line, $d=2$, $N=2$, the target density is on the left and its inverse potential on the right.}\label{f3}
\end{figure}

As an illustration for $d=3$, we give on Figure~\ref{f4} a representation of $\ro$, $v_n$ and $\log_{10} \ab{\ro_n-\ro}$ where $N=4$, $k=1$ and where $\ro$ is a sum of three Gaussians.
\begin{figure}
\begin{center}
\includegraphics[width=0.49\textwidth]{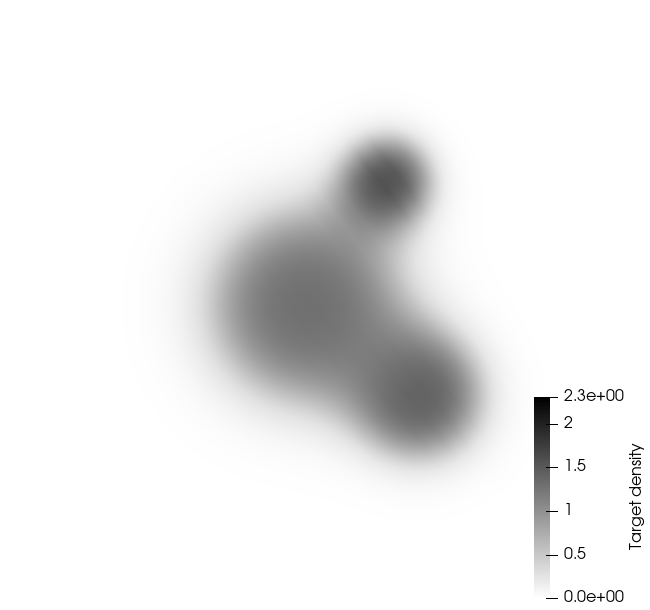}
\includegraphics[width=0.49\textwidth]{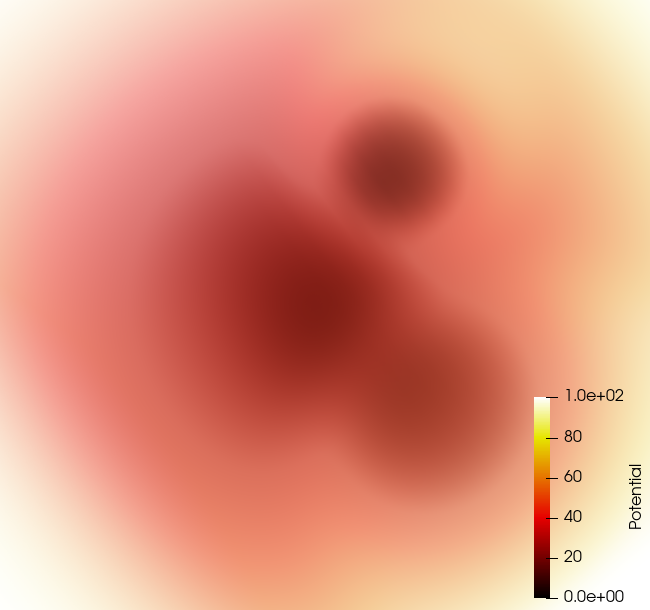} 
\end{center}
\caption{We take $d=3$, $N=4$, $k=1$, $\ro$ is a sum of three Gaussians, and the convergence criterion was reached for $n=50$. We plot $\ro$ on the upper panel and $v_n$ on the left lower panel.}\label{f4}
\end{figure}

\subsubsection{Uniqueness}
\begin{figure}
\begin{center}
\includegraphics[width=0.32\textwidth]{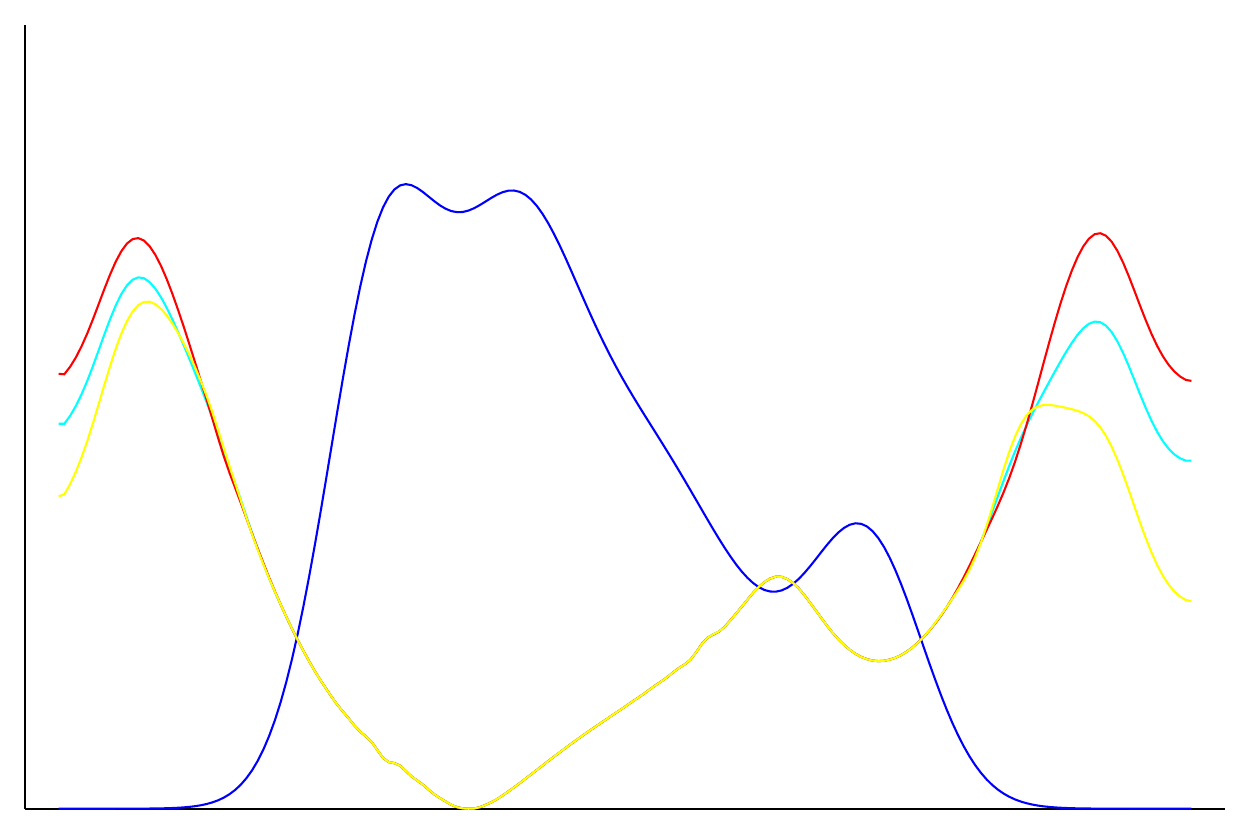}
\includegraphics[width=0.32\textwidth]{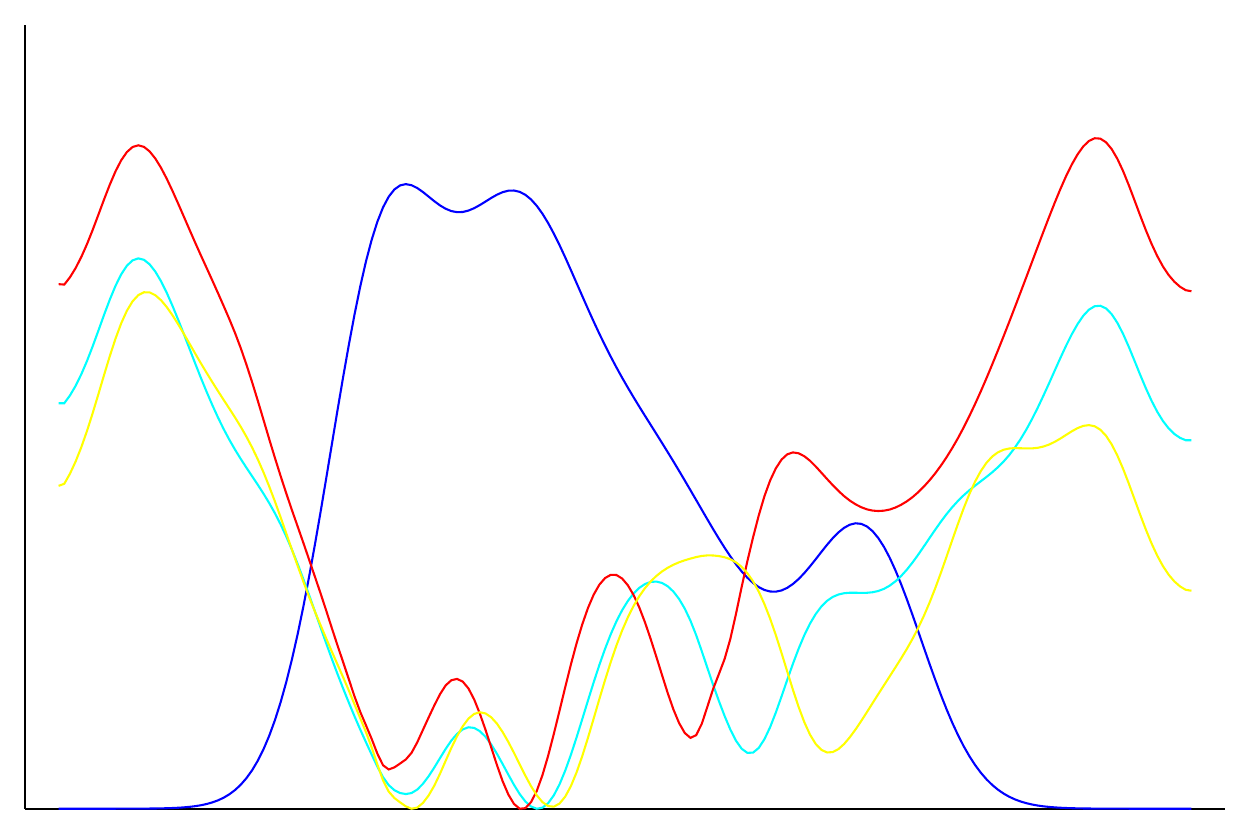}
\includegraphics[width=0.32\textwidth]{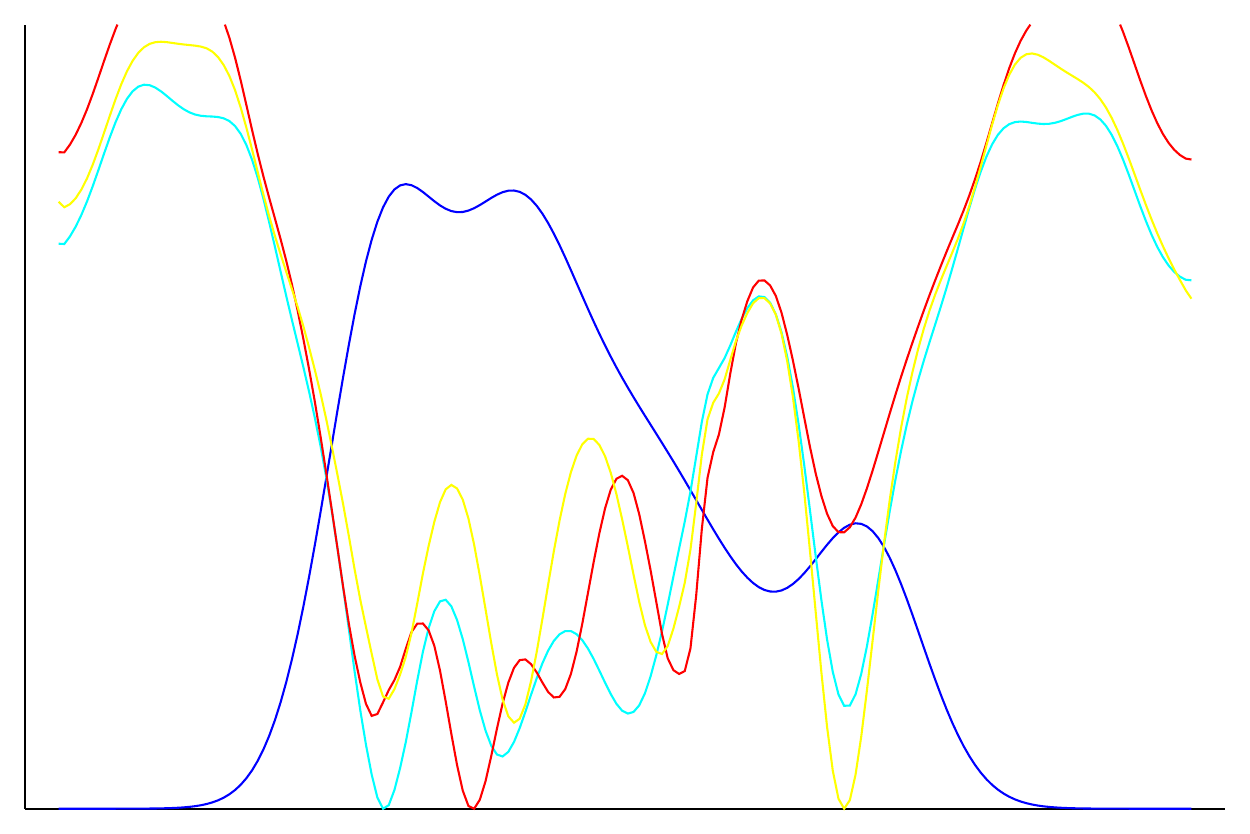}
\end{center}
\caption{Simulations for $d=1$, $N=3$, and $k = 0$ on the left, $k=1$ in the middle and $k=5$ on the right. Densities are in blue and inverse potentials are in other colors, their differences come from different initial potentials $v_0$. Potentials are in the same units, but not densities.} \label{f1dd}
\end{figure}

On Figure \ref{f1dd}, we represent inverse potentials in the case $d=1$, $N=3$, for the same density, varying $k$ and varying $v_0 = \Delta \sqrt{\ro} / \sqrt{\ro} + u_i$, where $(u_i)_{1 \le i \le 3}$ is the same sequence for $k \in \acs{0,1,5}$, but the $u_i$'s are pairwise different. This confirms previous numerical studies \cite{GauBur04} showing that uniqueness of potentials does not hold for $k\ge 1$, in which the authors used the kernel of $\d_v \bpa{u \mapsto \ro_{\p\ex{k}(u)}}$ to find such examples, where $\p\ex{k}(u)$ is the non-degenerate $k\expo{th}$ eigenstate of a potential $u$. There must be an infinite number of inverse potentials, differing by \apo{oscillations}.

\subsubsection{Reconstruction of potentials}

We choose an initial potential $v$, compute $\ro_{\Gamma_v}$ for some $\Gamma_v \in \cS\bpa{\cD_N\ex{k}(v)}$ with $\Gamma_v \ge 0$, $\tr \Gamma_v = 1$, and launch the algorithm on $\ro_{\Gamma_v}$. For $k=0$, by the Hohenberg-Kohn theorem, $v_n$ should converge to $v$, this is what we call the reconstruction of potentials.

We show in Figure~\ref{f1} an example of what one can obtain for $d=1$, in the ground state setting $k=0$ and for the third excited state $k=3$, with the same target potential. On the left on Figure~\ref{f1}, we see that the potential is well reconstructed.
\begin{figure}
\begin{center}
	\hspace{0.1cm}
\includegraphics[width=0.46\textwidth,trim={0.6cm 0.5cm 0.2cm 0},clip]{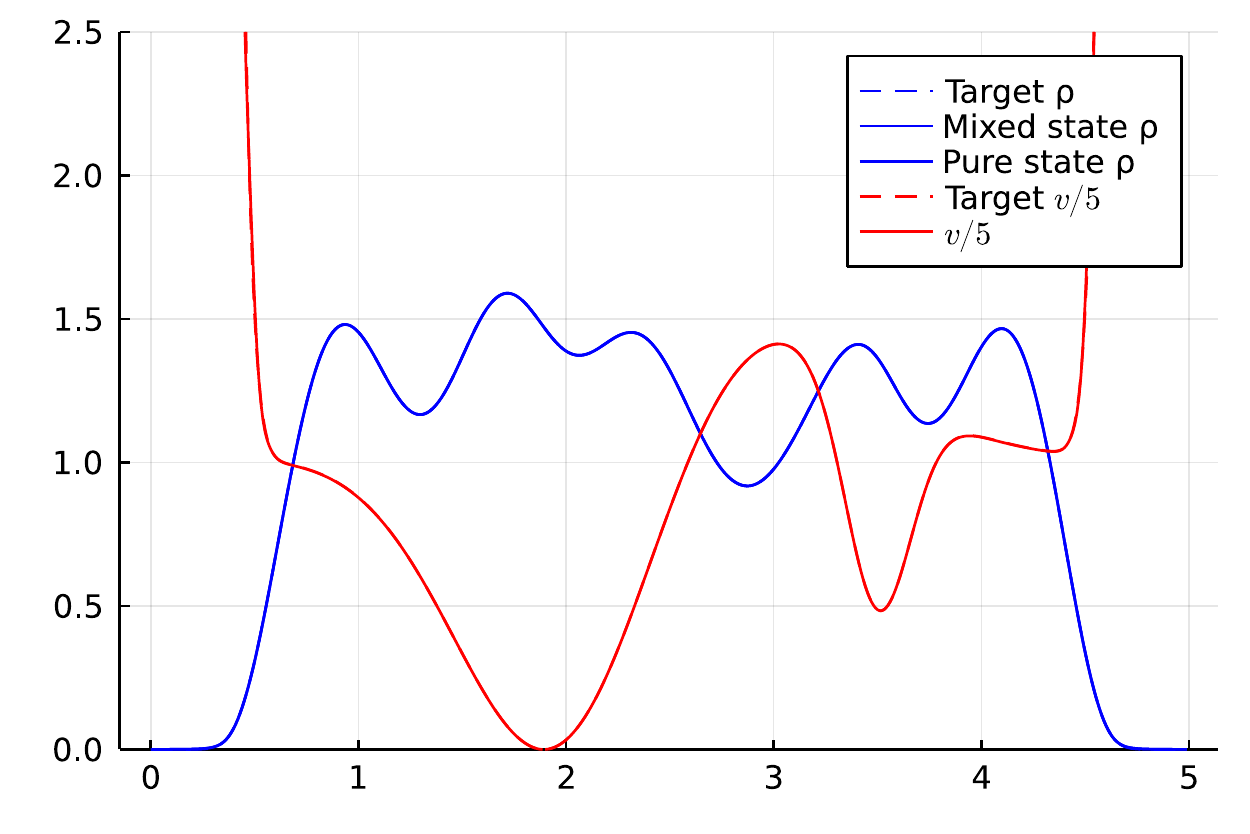}
	\hspace{0.07cm}
\includegraphics[width=0.46\textwidth,trim={0.6cm 0.5cm 0.2cm 0},clip]{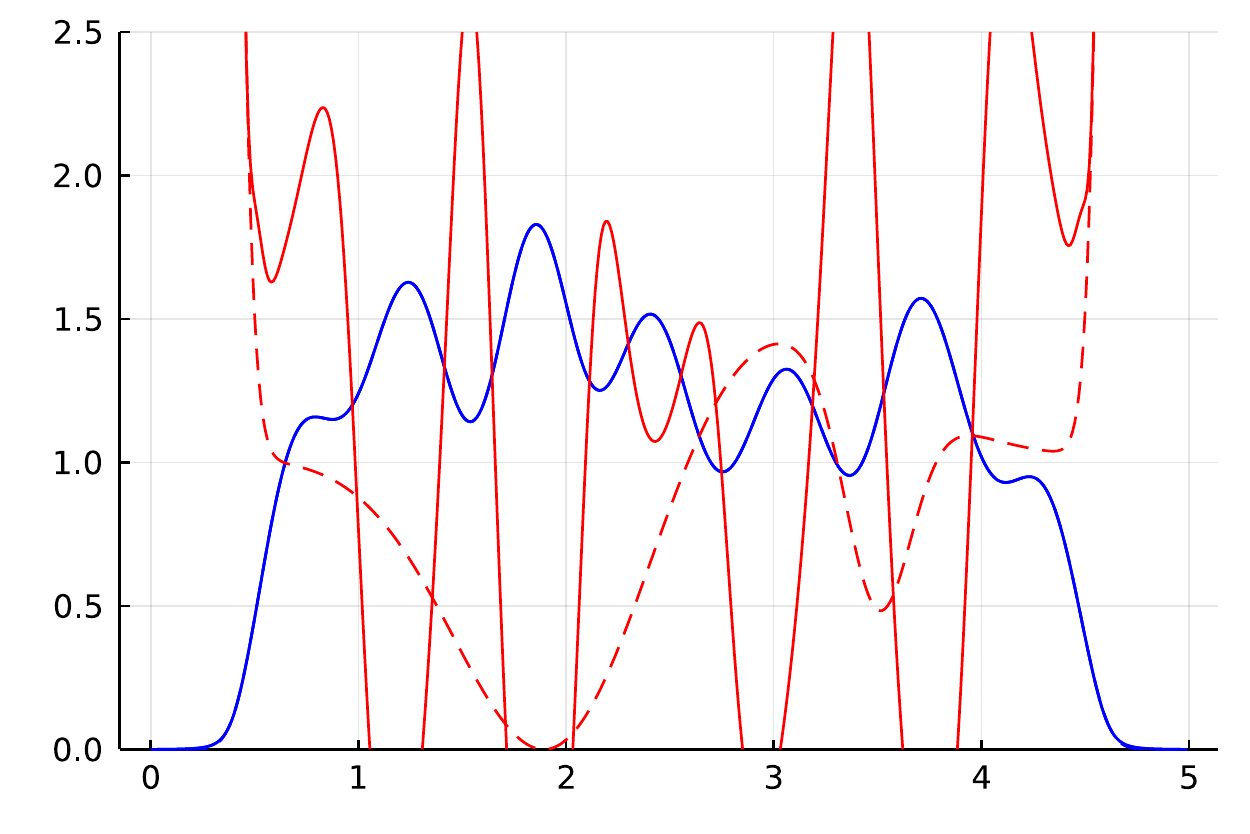} \\
\includegraphics[width=0.47\textwidth,trim={0.6cm 0.5cm 0.2cm 0},clip]{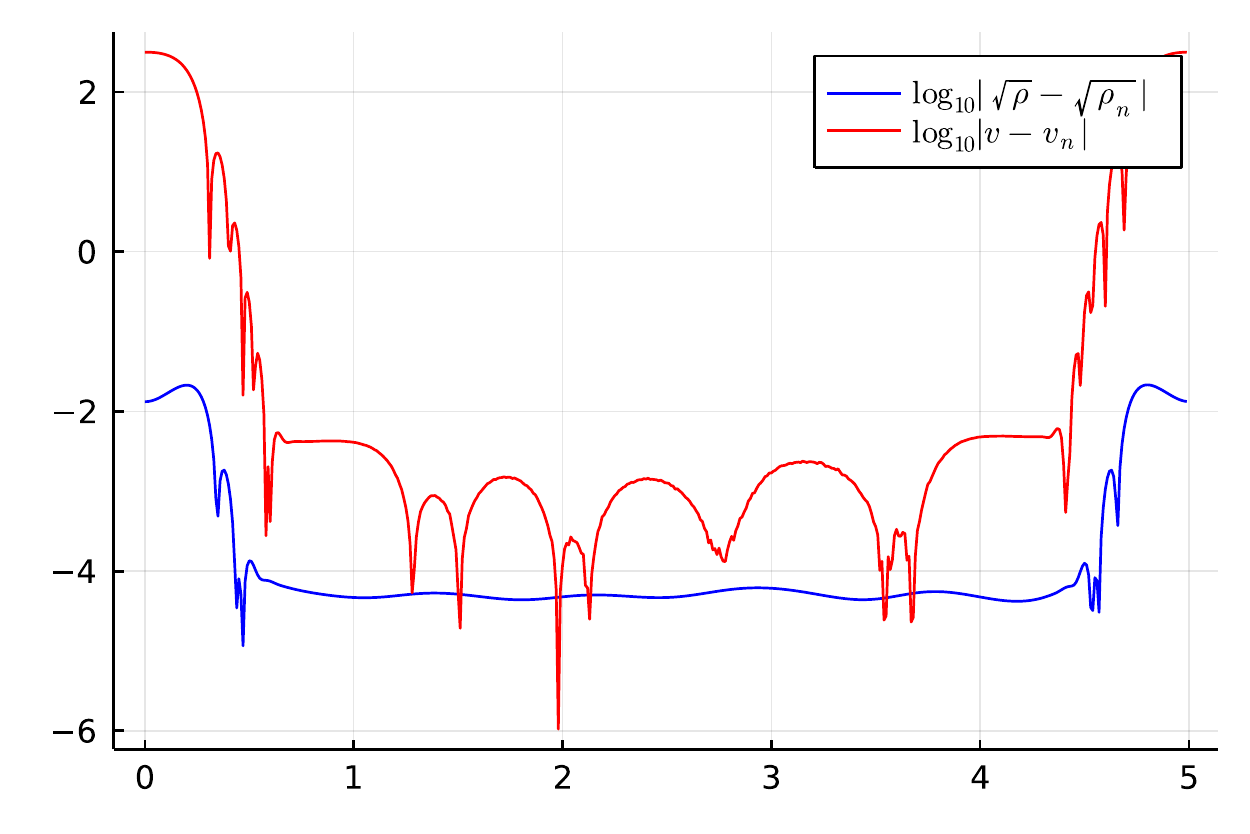}
\includegraphics[width=0.47\textwidth,trim={0.6cm 0.5cm 0.2cm 0},clip]{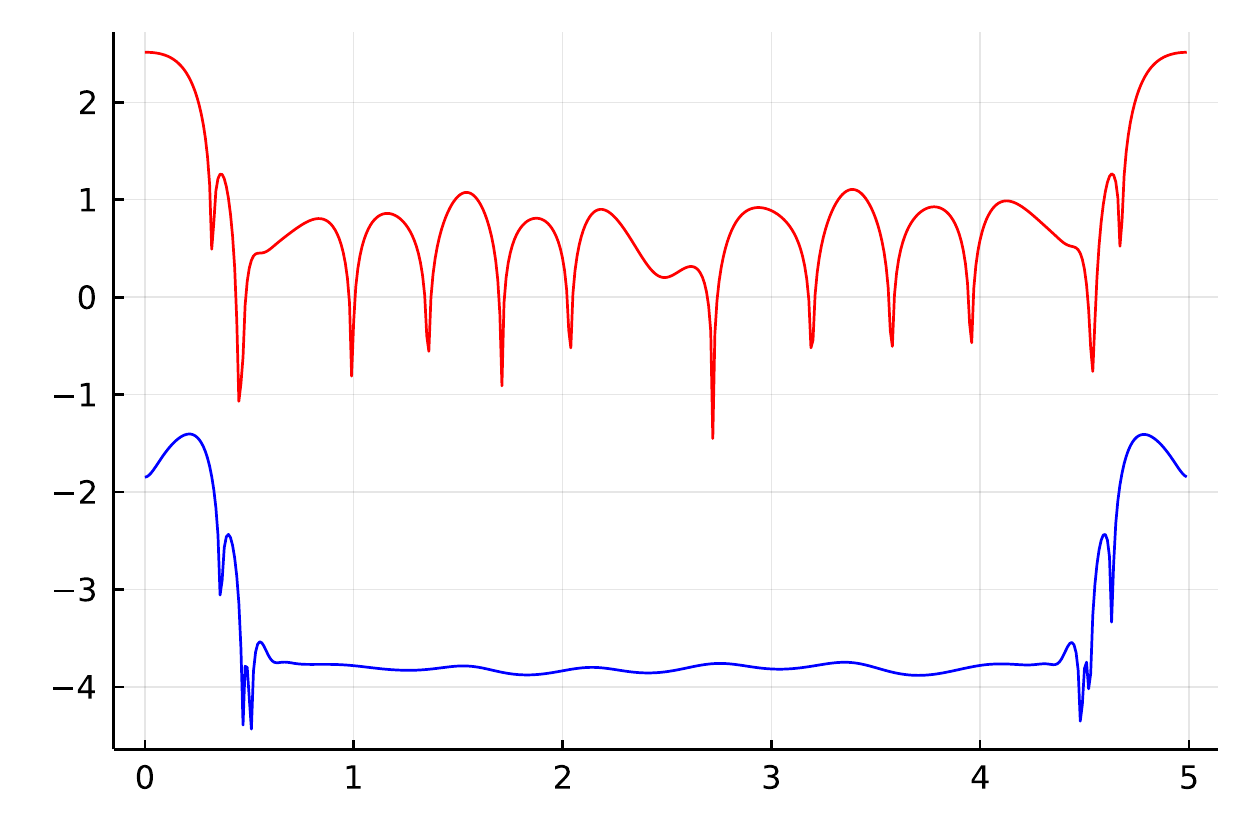} 
\end{center}
\caption{Plot for $d=1$, $N=5$, in the ground state setting $k=0$ on the left and in the third excited state $k=3$ on the right. We start from the same target potential $v$ and launch the algorithm on the same $\ro = \ro_{\p}$ for some normalized $\p \in \kerw$, with $n \simeq 1200$. On the first line, densities and potentials are in different units, and on the second line, we plot $\log_{10} \ab{\sqrt{\ro_n}-\sqrt{\ro}}$ (blue) and $\log_{10} \ab{v_n - v}$ (red) in the same units. We also remark that pure and mixed states densities coincide, consistently with the theory.} \label{f1}
\end{figure}

In the case $d=2$ and on Figure \ref{f2d} we show an example of reconstruction of a potential, where $N=5$ and $k=0$.
\begin{figure}
\begin{center}
\includegraphics[width=6cm,trim={2.5cm 0cm 2.5cm 0},clip]{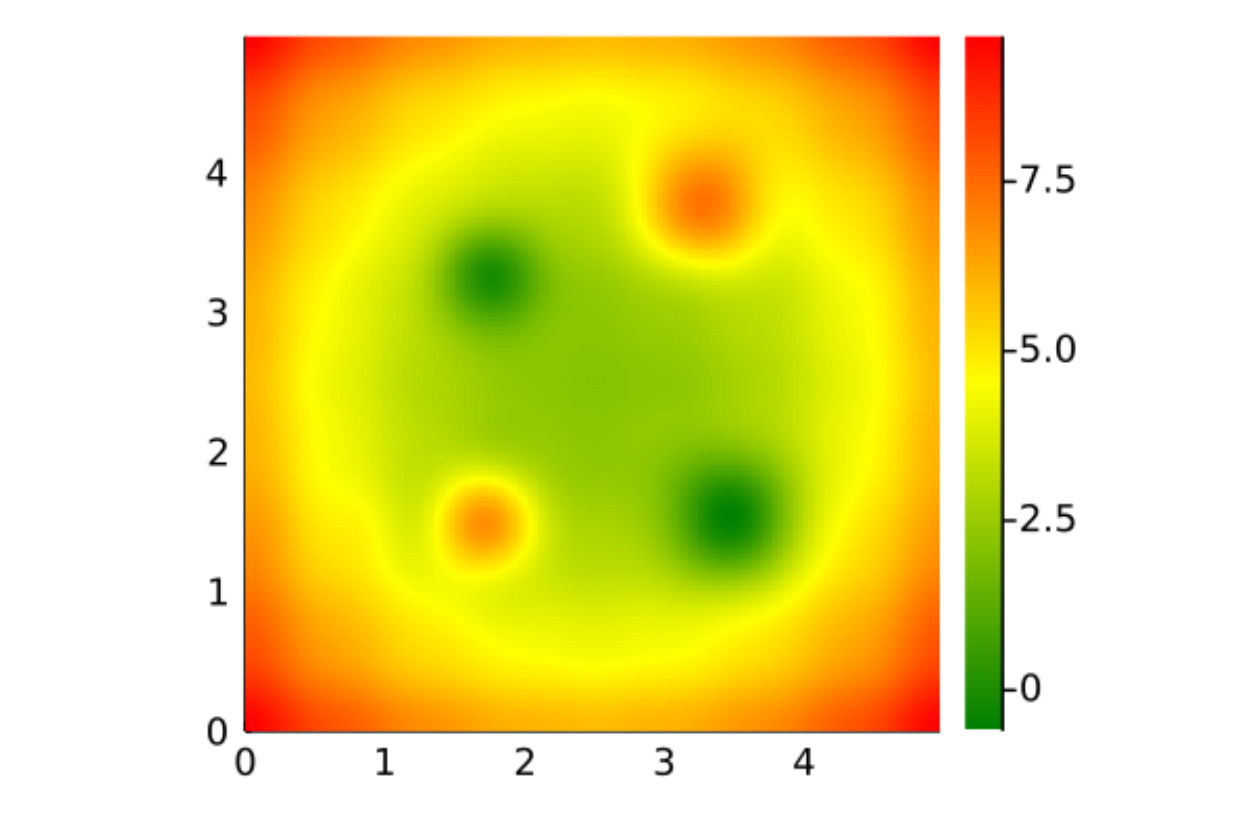}
\includegraphics[width=6cm,trim={2.5cm 0cm 2.5cm 0},clip]{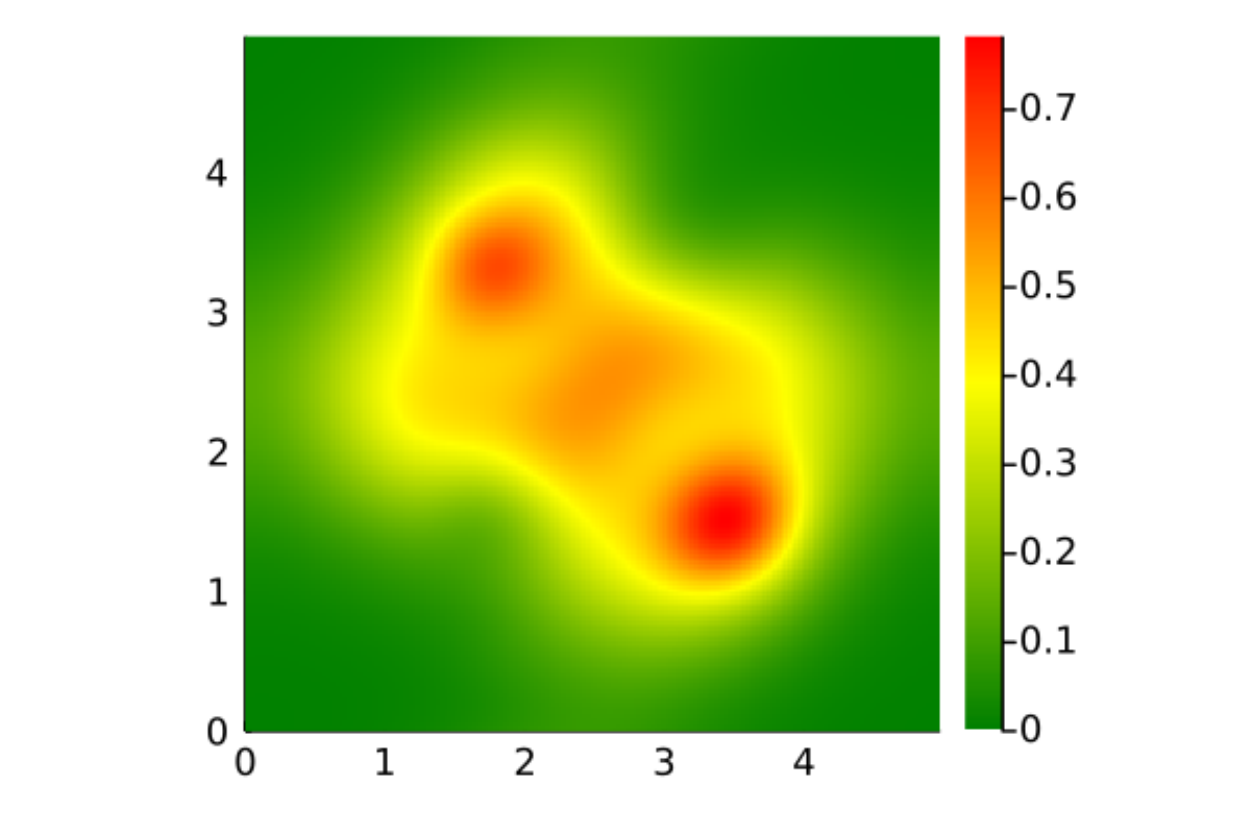} \\
\includegraphics[width=6cm,trim={2.5cm 0cm 2.5cm 0},clip]{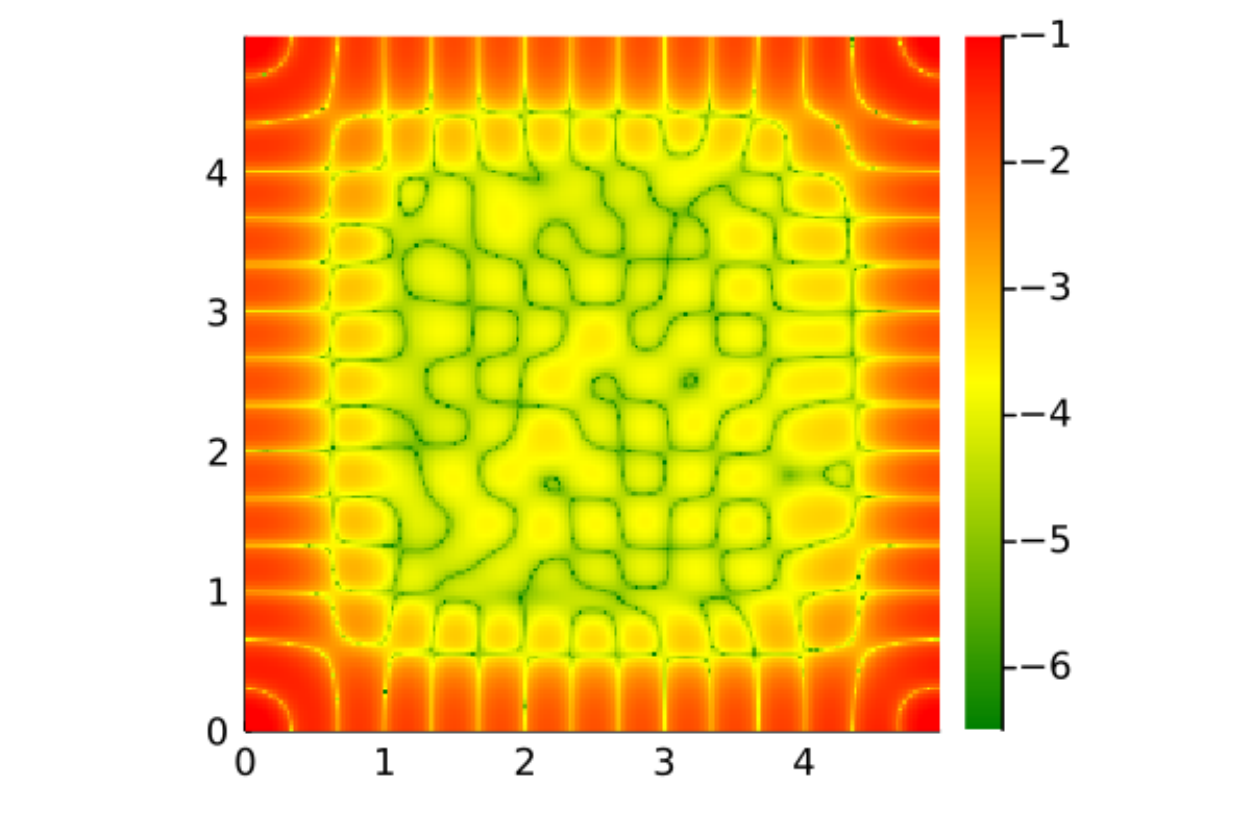}
\includegraphics[width=6cm,trim={2.5cm 0cm 2.5cm 0},clip]{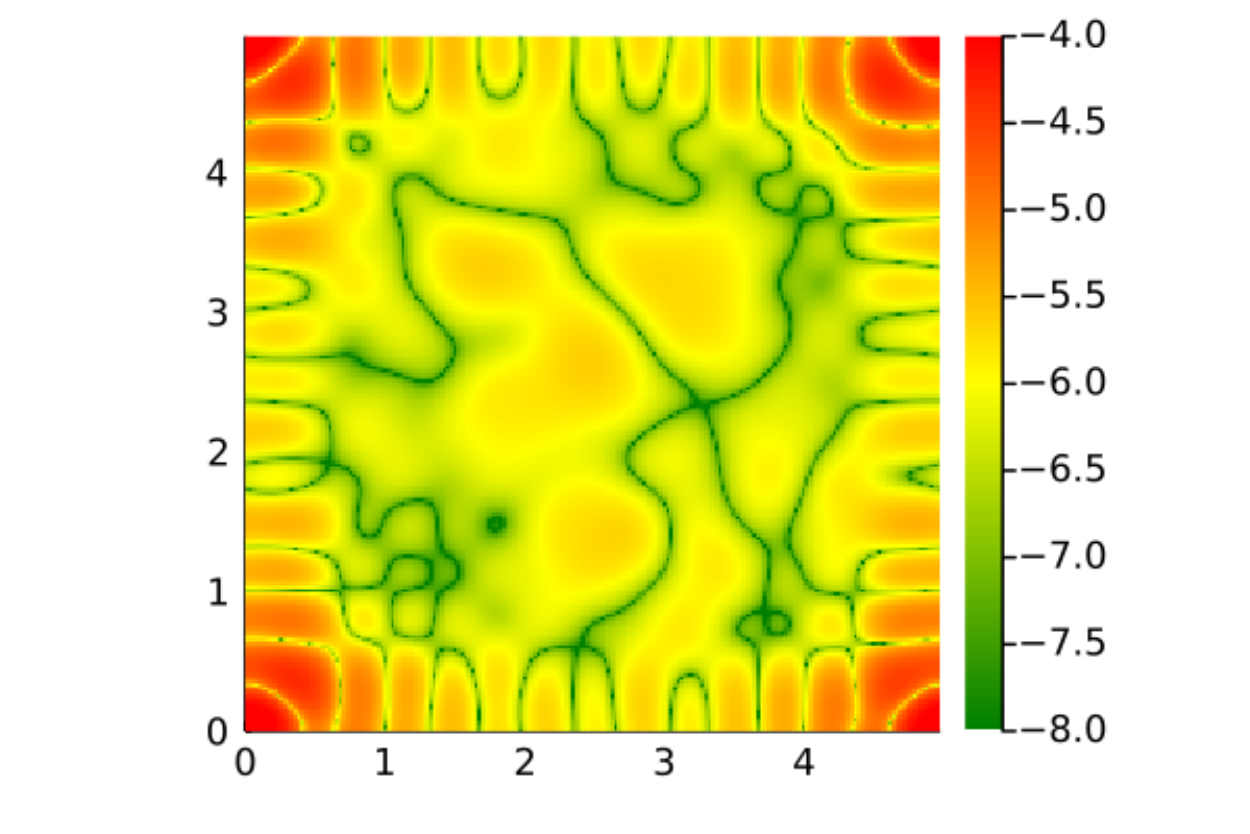} 
\end{center}
	\caption{We initially choose $d=2$, $N=5$, $k=0$, and the confining and non-degenerate potential $v$ plotted on the the top left panel. We compute its density $\ro(v)$, displayed on the top right, and launch the algorithm on it. We display $\log_{10} \ab{v_n-v}$ on the bottom left and $\log_{10} \ab{\sqrt{\ro_n}-\sqrt{\ro(v)}}$ on the bottom right, in their natural units. The length of the squares is 4 in in space units.} \label{f2d}
\end{figure}

On the graphs, the converged density $\ro_n$ is indistinguishable from the target $\ro$ for $d \in \acs{1,2,3}$. Also, $v_n$ is indistinguishable from $v$ for $k=0$ in the regions where $\ro(x)/(\max \ro) \ge 10^{-3}$, and $v_n$ is never close to $v$ when $k \ge 1$. Hence we do not plot $\ro_n$ and $v_n$ anymore. We can obtain arbitrary precision on $\ab{v_n-v}$ in the regions where $\ro$ is not \apo{small}, and as expected, we see that the precision on potentials is much lower than the precision on densities, which is linked to the local weak-strong continuity of the map $v \mapsto \ro$ proved in \cite{Garrigue21}.

Moreover, when $k=0$, convergence is muh easier, and faster, because $\ger{0}$ is concave.

\subsection{Pure states, degeneracies and Levy-Lieb}

We saw that our algorithm converges to some potential having a $k\expo{th}$ bound \textit{mixed} state representing $\ro$. Now we address the problem of finding a \textit{pure} $k\expo{th}$ bound state representing $\ro$. The situation is very different as we choose $d \in \acs{1,2}$ or $d=3$.

\subsubsection{$d=1$}
For one-dimensional systems, Theorem \ref{kspropi} $iv)$ justifies the existence of pure states representing $\ro$. 

\subsubsection{$d=2$}

For two-dimensional systems, there is no non-degenerate theorem, and for instance the harmonic oscillator has arbitrary large degeneracy as we increase the energy level we consider. However as for $d=1$, we numerically remark that \eqref{passt} equals \eqref{passtp} for any potential, and we always obtain pure-state representability, for any $k$.

We do not have a theoretical proof of this fact, contrarily to $d=1$. But roughly speaking, we think that this is due to the fact that the only relevant degeneracies for our problem come from the spherical Laplacian. For $d=2$, this operator is defined on the one-dimensional circle $S^1$ and has only two-fold degeneracies, but the set of mixed states on a real vector space of dimension 2 is equal to the set of pure states on this same vector space, as showed in Theorem \ref{kspropi} $iii)$. Then, for instance degeneracies arising in the harmonic oscillator at the one-body level should be unessential degeneracies for our problem. 

\subsubsection{$d=3$}

In the proof of \cite[Theorem 3.4]{Lieb83b}, Lieb identified a class of densities $\ro$ such that $F\ex{0}\ind{mix}(\ro) < F\ex{0}(\ro)$, using radial symmetry and degeneracies of spherical orbitals in dimension $3$. Since we believe that those functionals are continuous (see Conjecture \ref{conjcont}), this indicates that some densities are not (approximately) pure-state representable. Numerically, we verify it on Lieb's example, for $\ro(x) = \pa{2\pi \sigma}^{-3/2} e^{- \ab{x}^2/(2 \sigma^2)}$, $\sigma = 1/10$, $k=0$, $N=2$,
 \begin{align*}
	 \mymin{\p \in \vect_{\C} \cF(v_n) \\ \int_{\Omega^N} \ab{\p}=1} \indic_{\ab{\cE_{v_n}\pa{\p}-\exc{k}(v_n)} \le \cT_n} \nor{\ro_{\p} - \ro}{L^2(\Omega)} \gtrsim 10^{-1.5},
 \end{align*}
for any $n \in \N$, while the distance with mixed states \eqref{passt} is arbitrarily small as $n \ra +\ii$, confirming that the algorithm converges. Since the potential to which we converge is unique, we conclude that there is no pure ground state representing this $\ro$. 

Of course, we can also find infinitely many densities which are $v$-representable with pure states, but a general necessary and sufficient condition for a density to be $v$-representable with pure states seems out of reach. See \cite{EngEng83} for a discussion on this problem and for necessary conditions.

For $k \ge 1$ and $d=3$, we did not solve the complete problem of pure-state representability \eqref{fpure}, because we would have to test all maximizers, this set of maximizer must be very large and finding it should not be easy. We however think that the conclusion would be the same as for $k=0$.

\subsection{Convergence to the Thomas-Fermi potential}\label{sub:TF}\tx{ }

Let us take $k \in \N$, some density $\ro \ge 0$ such that $\sqrt{\ro} \in H^1(\Omega)$ and $\int_{\Omega} \ro = 1$, and let us denote by $V\ex{k}_N$ the inverse potential of $N \ro$, which is an $N$-particle density. When $N \rightarrow +\infty$, we expect it to converge to the Thomas-Fermi potential, that is
\begin{align}\label{eq:conv_TF}
\f{V\ex{k}_N}{N^{\f 2d}} \rightarrow -c\ind{TF}  \ro^{\f 2d}
\end{align}
where $c\ind{TF}$ is defined in \eqref{eq:TF}. A version of this last statement via the direct problem can be found in \cite[Theorem 1.2]{FouLewSol18}. On Figure \ref{fig:TF} we present the rescaled inverse potentials $V_N / (10 N^{\f 2d})$, for $k=0$, and we remark that they converge very quickly to the Thomas-Fermi potential. Indeed, they are already very close for $N=15$, and become indistinguishable for $N \ge 20$. We numerically confirm that the convergence \eqref{eq:conv_TF} holds for any $k$.
\begin{figure}
\begin{center}
\includegraphics[width=0.99\textwidth]{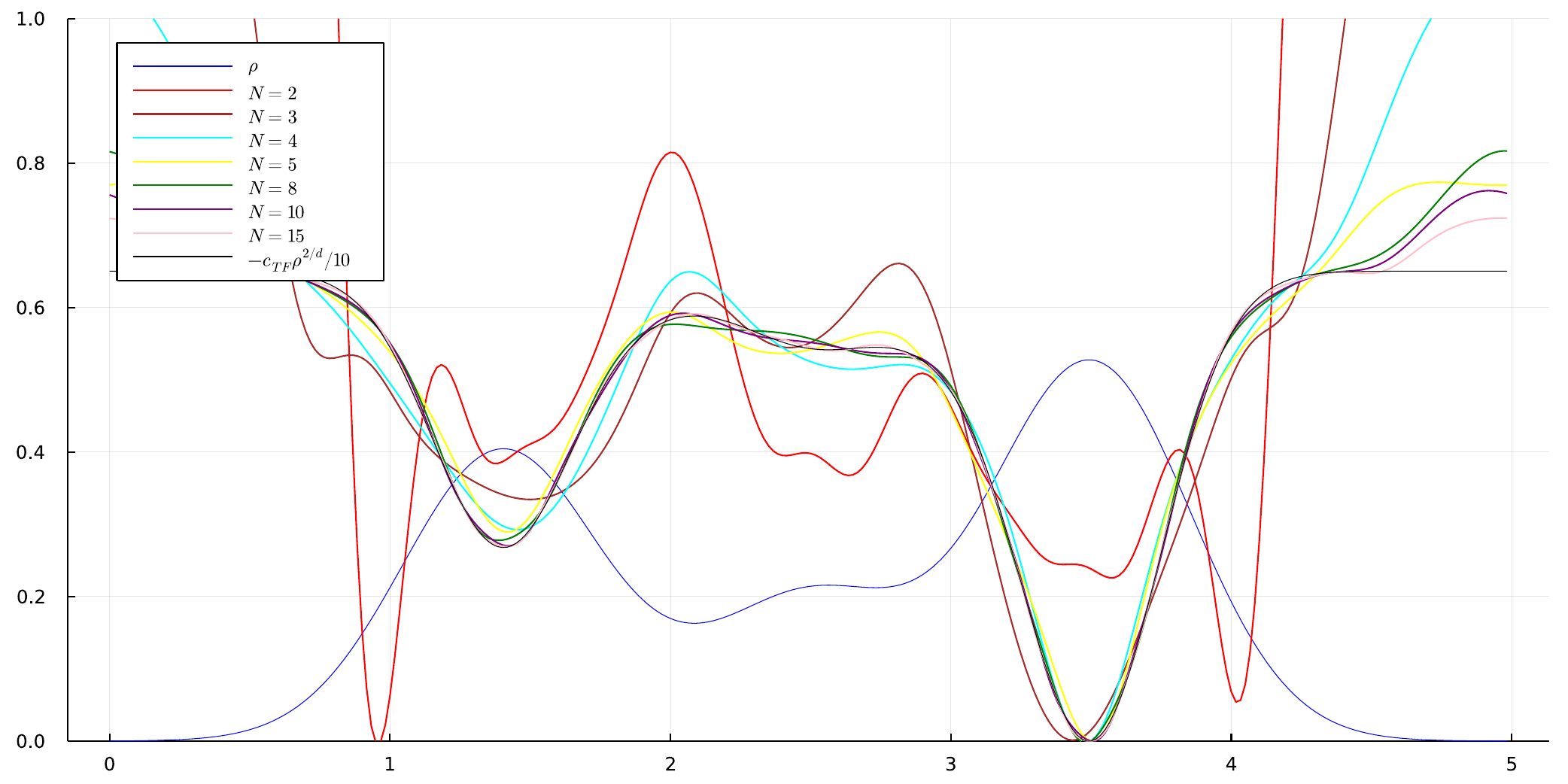}
\end{center}
\caption{For $k=0$, density $\ro$ in blue, inverse potentials $V_N / N^{\f 2d}$ for several values of $N$, and the Thomas-Fermi potential for one particule $-c\ind{TF} \ro^{\f 2d}$ in black, to which the inverse potentials converge when $N$ becomes large.}\label{fig:TF}
\end{figure}

\subsection{Kato cusp}\label{sub:kato_cusp}\tx{ }

An important application in quantum chemistry is for molecules, which one-body eigenstate densities have cusps \cite{Kato57}, that is singularities produced by atomic potentials. In Figure \ref{fig:cusp}, we give an application with a cusp on density of the form $\alpha \exp\pa{-\beta \ab{x - \f{L}{2}}}$, and we plot the inverse potential and the error on densities. It is well-known that for such singularities, the inverse potential becomes singular.
\begin{figure}
\begin{center}
\includegraphics[width=0.49\textwidth,trim={1.5cm 0cm 1cm 0},clip]{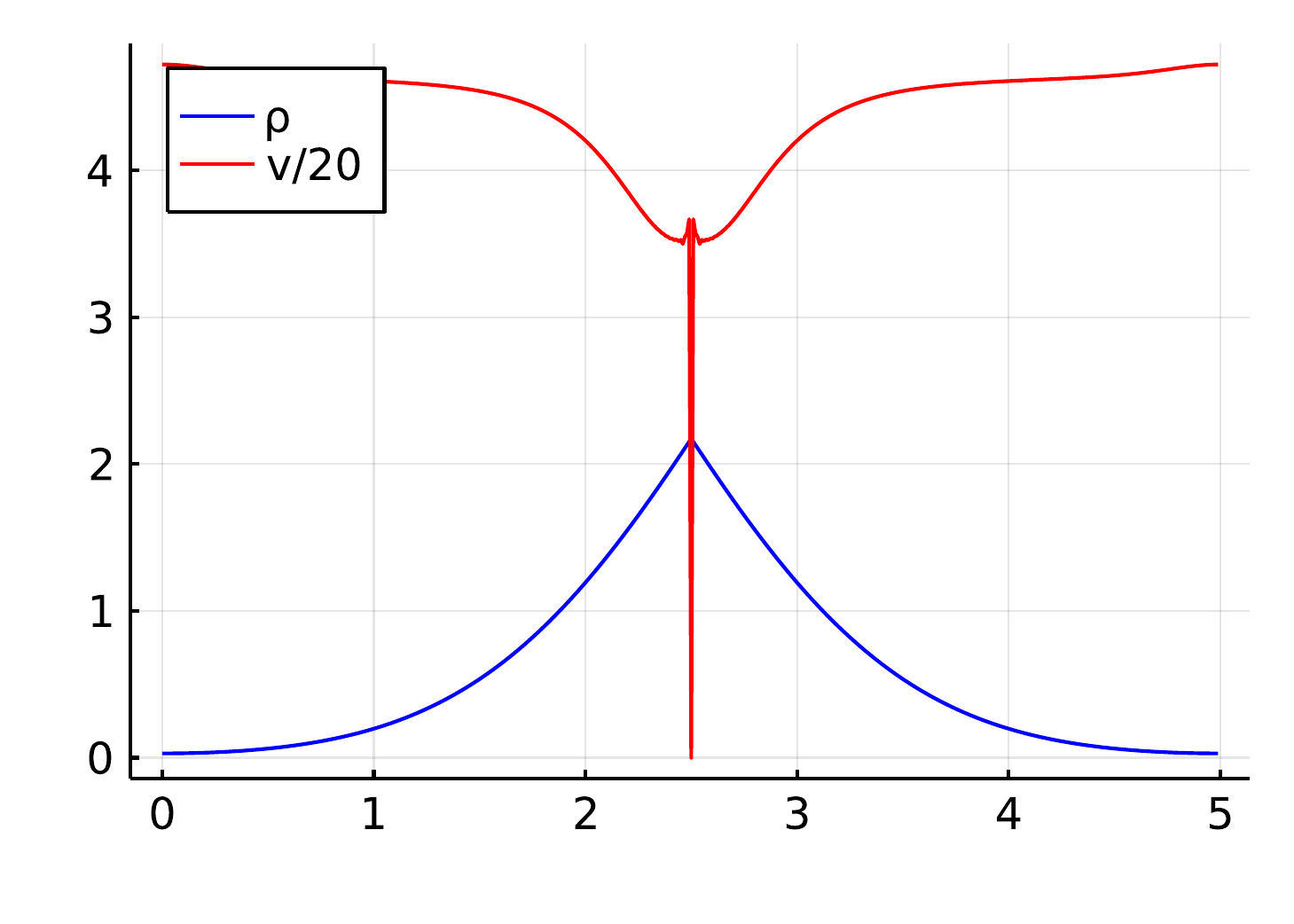}
\includegraphics[width=0.49\textwidth,trim={2.5cm 0cm 1cm 0},clip]{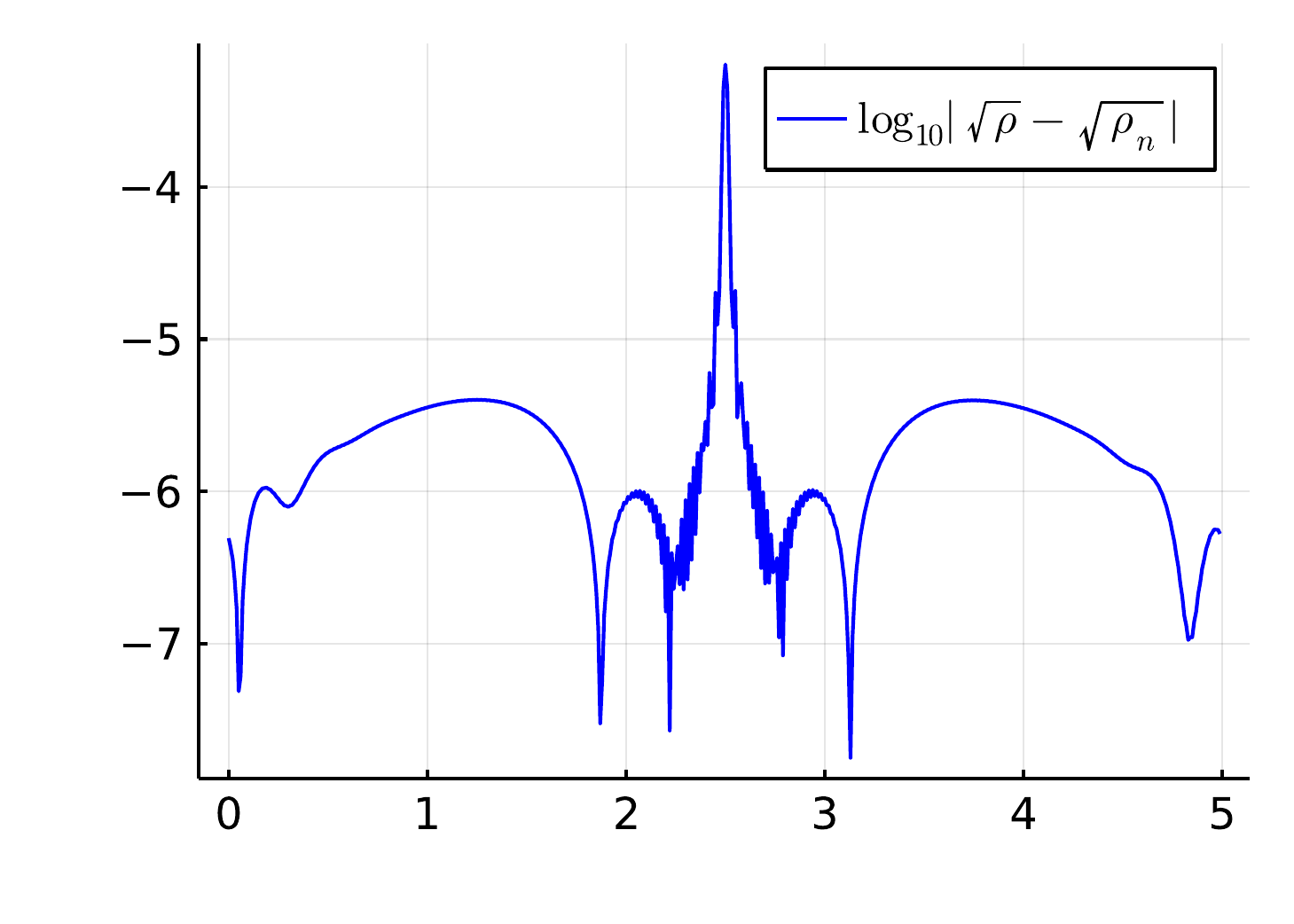}
\end{center}
\caption{For $k=0$, $N=3$, density $\ro$ in blue of the form $\alpha \exp\pa{-\beta \ab{x - \f{L}{2}}}$ where $\alpha,\beta > 0$, having a cusp at $x = L/2$, and its inverse potential in red. On the right, we represent the error $\log_{10} \ab{\sqrt{\ro_n}-\sqrt{\ro}}$.}\label{fig:cusp}
\end{figure}

\subsection{Conclusions}\tx{ }

\bul To obtain a mixed Kohn-Sham potential, one can maximize $\ger{k}$, in all situations. The algorithm presented here is simple and always converges.

\bul For $k \ge 1$, starting from a different potential or changing some parameters of the algorithm leads to very different inverse potentials, emphasizing that many potentials lead to the same density.

\bul For $d\in \acs{1,2}$, inverse potentials with pure states exist, hence this suggests that $F\ex{k}\ind{mix}(\ro) = F\ex{k}(\ro)$ and that the set of $v$-representable densities with pure states \eqref{rep} is dense in the space of densities

Based on the nature of the degeneracy (essentially one-body, coming from the spherical Laplace operator, ...), and on its number, for $d=3$ and $k=0$, some densities do not have $k\expo{th}$ pure states representing them and this indicates that \eqref{rep} is not dense in the set of densities. However, Section \ref{purek}, suggests that given $\ro$, there exists $k$ such that $\ro$ is approximately a $k\expo{th}$ bound pure state density.

\bul Degeneracies have to be taken into account in the algorithm, not only when the inverse potential is non-degenerate, but all along the procedure, because intermediate degenerate potentials can block the algorithm, corresponding to eigenvalues crossings. The only case where we do not need to take them into account is when $(d,k)=(1,0)$, because then the $N$-body ground level is non-degenerate by the non-degeneracy theorem (Proposition \ref{ndt}). Using our algorithm, we launched simulations on hundreds of densities, choosed to be sums of gaussians with random parameters, and were surprised by the fact that for $d \in \acs{1,2,3}$, \apo{most of the time}, the inverse potential is degenerate, in the sense that it gives rise to degenerate eigenvalues at the $k\expo{th}$ level. Perturbation of potentials generically lifts degeneracies, but at the level of this inverse problem, perturbation of a density \textit{does not} lift degeneracies. Hence, the fact that \apo{generically} inverse potentials are degenerate can seem counterintuitive. In the SCF procedure, perturbation does not lift degeneracies either, as shown in \cite{CanMou14}. Finally, we remark that we always have $k= m_k^v$ for $v_n$ close to optimality.

\section{Proofs}
First of all, we recall the Sobolev injections expressed for our densities. If $\Omega \subset \R^d$ has uniformly Lipschitz boundary, then for any $\ro \ge 0$, $\sqrt \ro \in H^1\pa{\Omega}$,
\begin{align}\label{ineq:sob_inj}
	\nor{\ro}{L^\infty(\Omega)} \le c_\Omega \nor{\sqrt{\ro}}{H^1(\Omega)}^2 &\text{ if } d=1, \nonumber  \\
	\nor{\ro}{L^s(\Omega)} \le c_{\Omega,s}\nor{\sqrt{\ro}}{H^1(\Omega)}^2 &\text{ for any $s \in [1,+\infty)$ if } d \in \{1,2\},  \\
	\nor{\ro}{L^{\f{d}{d-2}}(\Omega)} \le c_{\Omega,d}\nor{\sqrt{\ro}}{H^1(\Omega)}^2 &\text{ if } d \ge 3, \nonumber
\end{align}
where $c_{\Omega}$, $c_{\Omega,s}$ and $c_{\Omega,d}$ do not depend on $\ro$. See \cite[Corollary 11.9, Exercise 11.26, Exercise 11.37, Theorem 12.15]{Leoni17} for instance. If $d=1$, we also have that 
\begin{align*}
\ab{\sqrt{\ro(x)}-\sqrt{\ro(y)}} \le c_{\Omega} \sqrt{\ab{x-y}},
\end{align*}
for any $x,y \in \Omega$, and if $\Omega$ is unbounded, $\ro(x) \rightarrow 0$ when $\ab{x} \rightarrow +\infty$. To get the second inequality for $d =1$, we used that $\nor{\sqrt{\ro}}{H^{\f 12}(\Omega)} \le \nor{\sqrt{\ro}}{H^1(\Omega)}$.

\subsection{Proofs of Proposition \ref{sollin} and Theorem \ref{kspropi}}\label{ssub:proofs_sollin_kspropi}
We start by presenting a remark. Let $Q$ be a real vector space of real wavefunctions, then for $\Gamma \in \cans(Q,\Omega)$, we have 
 \begin{align*}
	\ro_{\Gamma} = \ro_{\re \Gamma}.
 \end{align*}
Indeed, self-adjointness $\Gamma = \Gamma^*$ implies $\Gamma(x_1,\dots,x_N;x_1,\dots,x_N) \in \R$. In case $\dim Q$ is finite, take an orthonormal frame $\pa{\p_i}_{i \in I}$ of real wavefunctions of $Q$, write the hermitian matrix $\tx{mat}_Q \Gamma =: \pa{\Gamma_{ij}}_{1 \le i,j \le n}$, then $\Gamma_{ii} \in \R$ and we have
\begin{align*}
\ro_{\Gamma} = \sum_{i=1}^n \Gamma_{ii} \ro_{\p_i} + 2 N \sum_{1 \le i < j \le n} \pa{\re \Gamma_{ij}} \int_{\Omega^{N-1}} (\p_i\p_j)(x,x_2,\dots) \d x_2 \cdots \d x_N.
\end{align*}

\begin{proof}[Proof of Proposition \ref{sollin}]
First, for $d \ge 3$, $s \ge d/2$ hence $s/(s-1) \le d/(d-2)$ and since $\ro \in L^{\f{d}{d-2}}(\Omega)$ by \eqref{ineq:sob_inj} and $\ro \in L^1(\Omega)$, then $\ro \in L^{\f{s}{s-1}}(\Omega)$. For $d \in \{1,2\}$, $s /(s-1) \ge 1$ and we also have that $\ro \in L^{\f{s}{s-1}}(\Omega)$ by \eqref{ineq:sob_inj}. As explained in \cite[Theorem 1.6]{Garrigue21}, we have
\begin{align}\label{expg}
{^+}\delta_v \ger{k}(u) & = \mymin{\gr \subset \cD\ex{k}_N(v)\\ \dim Q = k-m_k^v+1}\mymax{ \p \in \vect_{\C} Q \\ \int_{\Omega^N} \ab{\p}^2 =1 } \int_\Omega  \pa{\ro_{\p} -\ro}u  \\
& = \mymax{\gr \subset \cD\ex{k}_N(v) \\ \dim \gr = M_k^v-k}\mymin{ \p \in \vect_{\C} Q \\ \int_{\Omega^N} \ab{\p}^2 =1 } \int_\Omega \pa{\ro_{\p} -\ro}u, \nonumber
\end{align}
where $\cD\ex{k}_N(v)$ is defined in \eqref{defd} and $\vect_{\C} Q$ is the complex vector space built on the real vector space $Q$. We will need the following classical lemma.
\begin{lemma}\label{enm}
For any real linear subspaces $A$ and $B$ of $H^1\ind{a}(\Omega^{N},\R)$, where $B$ is finite dimensional, and any potential $v \in (\lpi)(\Omega,\R)$, we have 
\begin{align*}
\myinf{\Gamma \in \cS\pa{A} \\ \Gamma \ge 0, \tr \Gamma = 1}\hspace{-0.2cm} \cE_v\pa{\Gamma} = \myinf{\p \in \vect_{\C} A \\ \int_{\Omega^N} \ab{\p}^2 =1}  \cE_v\pa{\p}, \hs\hs\hs\hs\hs\hs  \mymax{\Gamma \in \cS\pa{B} \\ \Gamma \ge 0, \tr \Gamma = 1} \hspace{-0.2cm}\cE_v\pa{\Gamma} = \mymax{\p \in \vect_{\C} B \\ \int_{\Omega^N} \ab{\p}^2 =1} \hspace{-0.2cm} \cE_v\pa{\p}.
\end{align*}
\end{lemma}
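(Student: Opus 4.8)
The plan is to establish both identities through two elementary reductions: one passing from real mixed states to real pure states, exploiting that $\cE_v$ is affine in $\Gamma$ and that every state is a convex combination of rank-one projectors; and one passing from real to complex pure states, exploiting that $\hn(v)$ is a real operator.

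First I would fix a state $\Gamma \in \cS(A)$ with $\Gamma \ge 0$ and $\tr \Gamma = 1$, diagonalize it as $\Gamma = \sum_i \lambda_i \ps{\p_i,\cdot}\p_i$ with real orthonormal $\p_i \in A$, weights $\lambda_i \ge 0$ and $\sum_i \lambda_i = 1$, and use linearity of the trace to get $\cE_v(\Gamma) = \sum_i \lambda_i \cE_v(\p_i)$. Being a convex combination, this is bounded below by the infimum of $\cE_v$ over normalized real functions of $A$; since conversely every rank-one projector onto a normalized real function of $A$ lies in $\cS(A)$, the infimum over $\cS(A)$ coincides with the infimum of $\cE_v(\p)$ over normalized real $\p \in A$.

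Next I would pass to complex functions by writing $\p = \p_r + i\p_i \in \vect_{\C} A$ with $\p_r,\p_i \in A$ real. Since $-\Delta$, $w$ and $v$ are real, $\hn(v)$ commutes with complex conjugation, so the purely imaginary cross terms cancel by self-adjointness and one gets $\cE_v(\p) = \cE_v(\p_r) + \cE_v(\p_i)$ together with $\int \ab{\p}^2 = \int \p_r^2 + \int \p_i^2$. Thus the normalized complex energy is a convex combination of the two normalized real energies, which forces the complex and real pure-state infima to agree; chaining this with the previous step yields the left identity. The right identity follows from exactly the same two reductions, reversing every convex-combination inequality since a convex combination never exceeds its largest term. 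Here the finite dimensionality of $B$ enters only to ensure that $\acs{\Gamma \in \cS(B) \st \Gamma \ge 0, \tr \Gamma = 1}$ and the unit sphere of $\vect_{\C} B$ are compact, so that the continuous functional $\cE_v$ attains its maxima on them.

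I expect the sole delicate point to be the identity $\tr \hn(v)\Gamma = \sum_i \lambda_i \cE_v(\p_i)$ when $A$ is infinite dimensional, since $\hn(v)$ is then unbounded; I would justify it using the finite kinetic energy of admissible states and the lower boundedness of $\hn(v)$, which together make the spectral sum and the trace coincide.
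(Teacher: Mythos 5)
Your proposal is correct and follows essentially the same route as the paper: diagonalize the mixed state as a convex combination of rank-one projectors onto real orthonormal functions, use linearity of $\cE_v$ to bound by the pure-state infimum (resp.\ maximum), and identify the real and complex pure-state optima. Your real-to-complex step via $\cE_v(\p)=\cE_v(\p_r)+\cE_v(\p_i)$ and $\int\ab{\p}^2=\int\p_r^2+\int\p_i^2$ is in fact spelled out more completely than the paper's one-line justification.
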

We use the notation $\cS\pa{A}$ in case of infinite-dimensional vectors spaces $A$, by the natural extension of the definition in finite dimension. The statement of Lemma \ref{enm} can be seen by taking a real mixed state $\Gamma \in \cans(A,\Omega)$, decomposing it into $\Gamma = \sum_{i=1}^{+\ii} \lambda_i \proj{\p_i}$ where $\p_i \in A$ is an orthonormal basis of real wavefunctions of $A$, $\lambda_i \in \R_+$ and $\sum_{i=1}^{+\ii} \lambda_i = 1$. We evaluate $\cE_v(\Gamma) = \sum_{i=1}^{+\ii} \lambda_i \cE_v(\p_i) \ge \inf_{\p \in A, \int_{\Omega^N}\ab{\p}^2 = 1} \cE_v(\p)$. This is similar for the max.

Now we use the theorem of existence of saddle points \cite[Theorem 49.A p459]{Zeidler3} for the Lagrangian $L(\Gamma,u) := \int_\Omega u \pa{\ro_{\Gamma}-\ro}$, affine in its variables, to commute
\begin{align*}
\mysup{u \in L^s(\Omega,\R) \\ \nor{u}{L^s} \le 1}\mymin{\Gamma \in \cS\pa{A} \\ \Gamma \ge 0, \tr \Gamma = 1} \int_\Omega u \pa{\ro_{\Gamma}-\ro} = \mymin{\Gamma \in \cS\pa{A} \\ \Gamma \ge 0, \tr \Gamma = 1} \mysup{u \in L^s(\Omega,\R) \\ \nor{u}{L^s} \le 1}\int_\Omega u \pa{\ro_{\Gamma}-\ro},
\end{align*}
where $A$ is any finite-dimensional real vector space. But since the two suprema are positive, this also holds when we optimize over 
\begin{align*}
\acs{u \in L^s(\Omega,\R), \nor{u}{L^s(\Omega)} = 1}.
\end{align*}
We define $s' := s/(s-1)$. Next, for $f \in L^{s'}(\Omega)$,
\begin{align*}
\mysup{g \in L^s\pa{\Omega,\R} \\ \nor{g}{L^s}=1} \int_\Omega fg = \nor{f}{L^{s'}(\Omega)}.
\end{align*}
Indeed, considering the Lagrangian $L(g,\lambda) = \int_\Omega fg - \lambda\pa{\int_\Omega g^s - 1}$, and searching for its extremizer, we obtain $f = \lambda s \ab{g}^{s-1} \sgn g$. The condition $\nor{g}{L^s(\Omega)}=1$ yields $\lambda s = \nor{f}{L^{s'}(\Omega)}$, and the extremizer is 
\begin{align*}
g = \ab{\f{f}{\nor{f}{L^{s'}(\Omega)}}}^{{s'}-1} \sgn{f}.
\end{align*}
Hence for any $0 \le \Gamma \in \cS\pa{A}$ such that $\tr \Gamma = 1$,
\begin{align*}
\mysup{u \in L^s(\Omega,\R) \\ \nor{u}{L^s}=1}\int_\Omega u \pa{\ro_{\Gamma}-\ro} = \nor{\ro_{\Gamma}-\ro}{L^{s'}(\Omega)}
\end{align*}
is attained by \eqref{maxu}. 
 
Using all the previous steps, we deduce that
\begin{align*}
&\mysup{u \in L^s(\Omega,\R) \\ \nor{u}{L^s}=1} {^+}\delta_v G\ex{k} (u) \\
& \bhs = \mysup{u \in L^s(\Omega,\R) \\ \nor{u}{L^s}=1} \mymax{\gr \subset \cD\ex{k}(v) \\ \dim \gr = k-m_k^v} \mymin{\p \in \vect_{\C} \pa{\gr^{\perp}\cap \cD\ex{k}(v)}\\ \int_{\Omega^N} \ab{\p}^2 = 1} \int_\Omega u \pa{\ro_{\p}-\ro} \\
& \bhs= \mysup{u \in L^s(\Omega,\R) \\ \nor{u}{L^s}=1} \mymax{\gr \subset \cD\ex{k}(v) \\ \dim \gr = k-m_k^v} \mymin{\Gamma \in \cS\pa{\gr^{\perp}\cap \cD\ex{k}(v)} \\ \Gamma \ge 0, \tr \Gamma = 1} \int_\Omega u \pa{\ro_{\Gamma}-\ro} \\
& \bhs=  \mymax{\gr \subset \cD\ex{k}(v) \\ \dim \gr = k-m_k^v} \mysup{u \in L^s(\Omega,\R) \\ \nor{u}{L^s}=1}\mymin{\Gamma \in \cS\pa{\gr^{\perp}\cap \cD\ex{k}(v)} \\ \Gamma \ge 0, \tr \Gamma = 1} \int_\Omega u \pa{\ro_{\Gamma}-\ro} \\
&\bhs =  \mymax{\gr \subset \cD\ex{k}(v) \\ \dim \gr = k-m_k^v} \mymin{\Gamma \in \cS\pa{\gr^{\perp}\cap \cD\ex{k}(v)} \\ \Gamma \ge 0, \tr \Gamma = 1} \mysup{u \in L^s(\Omega,\R) \\ \nor{u}{L^s}=1}\int_\Omega u \pa{\ro_{\Gamma}-\ro} \\
&\bhs =  \mymax{\gr \subset \cD\ex{k}(v) \\ \dim \gr = k-m_k^v} \mymin{\Gamma \in \cS\pa{\gr^{\perp}\cap \cD\ex{k}(v)} \\ \Gamma \ge 0, \tr \Gamma = 1} \nor{\ro_{\Gamma}-\ro}{L^{s'}(\Omega)} \\
&\bhs =  \mymax{\gr \subset \cD\ex{k}(v) \\ \dim \gr = M_k^v-k+1} \mymin{\Gamma \in \cS\pa{\gr} \\ \Gamma \ge 0, \tr \Gamma = 1} \nor{\ro_{\Gamma}-\ro}{L^{s'}(\Omega)},
\end{align*}
where $Q$ are real vector spaces.
\end{proof}

\begin{proof}[Proof of Theorem \ref{kspropi}]\tx{ }

\bul $i)$ $a) \implies c)$ The state $\Gamma$ is a $k^{\tx{th}}$ mixed bound state of $v$ and has density $\ro_{\Gamma} = \ro$, hence in the energy we can restrict the optimization search to states having density $\ro$,
\begin{align*}
	\cE_v\pa{\Gamma} &= \exc{k}(v) = \mysup{A \subset H^1\ind{a}(\Omega^{N})\\ \dim A = k+1} \; \myinf{\Lambda \in \cans\pa{A^{\perp},\Omega}} \cE_v\pa{\Lambda} \\
&  = \mysup{A \subset H^1\ind{a}(\Omega^{N})\\ \dim A = k+1} \; \myinf{\Lambda \in \cans\pa{A^{\perp},\Omega} \\ \ro_{\Lambda} = \ro}  \cE_v\pa{\Lambda}  = F\ex{k}\ind{mix}(\ro) + \int_\Omega v \ro  \\
&= \sup_{u \in (\lpi)(\Omega,\R)} \pa{\ger{k}(u) + \int_\Omega v \ro}.
\end{align*}
On the other hand, $\ger{k}(v) = \cE_v\pa{\Gamma} - \int_\Omega v \ro$, thus $v$ maximizes $\ger{k}$.

$b) \implies a)$ Since $v$ is a local maximizer and $\spf{k}$ is open, then for $u$ close enough to $v$, we have $\ger{k}(u) \le \ger{k}(v)$ and $u \in \spf{k}$, thus 
\begin{align*}
\mysup{u \in (\lpi)(\Omega,\R) \\ \nor{u}{\lpi} = 1} {^+}\delta_v \ger{k}(u) \le 0
\end{align*}
Now we know that if $u \in L^p(\Omega,\R)$, then $\nor{u}{\lpi} = \nor{u}{L^p}$, hence 
\begin{align*}
\acs{u \in L^p(\Omega,\R), \nor{u}{L^p}=1} \subset \acs{u \in (\lpi)(\Omega,\R), \nor{u}{\lpi} = 1},
\end{align*}
so
\begin{align*}
\mysup{u \in L^p(\Omega,\R) \\ \nor{u}{L^p} = 1} {^+}\delta_v \ger{k}(u) \le \mysup{u \in (\lpi)(\Omega,\R) \\ \nor{u}{\lpi} = 1} {^+}\delta_v \ger{k}(u) \le 0
\end{align*}
 and we conclude by \eqref{lao} that for any $Q \subset \Ker_{\R} \bpa{\hn(v)-\exc{k}(v)}$, with $\dim Q = M_k^v-k+1$, there exists a $0 \le \Gamma = \Gamma^{\tx{T}} \in \cS(Q)$ such that $\tr \Gamma = 1$ and $\ro_{\Gamma} = \ro$.

Remark when $k=0$. Let $v$ be an optimizer of $\ger{0}$. We know that there exists a mixed state $\Gamma_{\ro}$ such that $\ro_{\Gamma_{\ro}} = \ro$ and $\tr \hn(0) \Gamma = \ger{0}(v)$ by \cite[Corollary 4.5]{Lieb83b}. This implies that $\tr \hn(v) \Gamma_{\ro} = \exc{0}(v)$. By diagonalizing $\Gamma_{\ro}$ similarly as in the proof of Theorem \ref{cococo}, we can show that it is a ground mixed state for $v$.

\bul $ii)$ We also know that ${^+}\delta_v \ger{m} \le {^+}\delta_v \ger{n}$ for any $m,n \in \acs{m_k^v, \dots, M_k^v}$ such that $m \le n$. Hence for $\ell \in \acs{m_k^v,\dots, k}$, 
\begin{align*}
\mysup{u \in (\lpi)(\Omega,\R) \\ \nor{u}{\lpi} = 1} {^+}\delta_v \ger{\ell}(u) \le \mysup{u \in (\lpi)(\Omega,\R) \\ \nor{u}{\lpi} = 1}{^+}\delta_v \ger{k}(u) =0
\end{align*}
and $v$ is a local maximizer of $\ger{\ell}$.

We can see from \eqref{expg} that 
\begin{align*}
{^+}\delta_v \ger{k}(-u) = -{^+}\delta_v \ger{M_k^v+m_k^v-k}(u)
\end{align*}
for any direction $u \in (L^p+L^{\ii})(\Omega,\R)$.  Thus
 \begin{align*}
-{^+}\delta_v \ger{k}(-u) = {^+}\delta_v \ger{M_k^v+m_k^v-k}(u) \le {^+}\delta_v \ger{k}(u) \le 0,
 \end{align*}
where we used that $M_k^v+m_k^v-k \le k$ and that $v$ is a local maximizer. We deduce that for any direction $u$, ${^+}\delta_v \ger{k}(u) \ge 0$, so ${^+}\delta_v \ger{k} = \d_v \ger{k} = 0$. This yields $\bpa{\d_v \exc{k}} u = \int_\Omega u \ro$ and we can conclude.

	If $v$ is a local minimizer and $k \ge (m_k^v+M_k^v)/2$, by a similar reasoning we have $\d_v \ger{k} = 0$, and by \eqref{lao} there is a $k\expo{th}$ bound mixed state of $v$ such that $\ro_{\Gamma} = \ro$. As we saw in $i)$, this implies that $v$ is a global maximizer, but it cannot be a local minimizer at the same time.

\bul $iii)$ Take real orthonormal vectors $\vp,\phi$ we write $\Gamma =  \mat{a & c \\ c & b} \in \R^{2 \times 2}$ in this basis. The condition $\Gamma \ge 0$ is equivalent to $a \ge 0, b \ge 0, ab \ge c^2$, and $\tr \Gamma = a + b = 1$. Then 
 \begin{align*}
 \ro_{\Gamma} & = a \ro_{\vp} + b \ro_{\phi} + 2 N c \int_{\Omega^{N-1}} \vp \phi.
 \end{align*}
 In the case of pure states, we have $\psi = \alpha \vp + \beta \phi$, with $\ab{\alpha}^2 + \ab{\beta}^2 = 1$ so we can take the parametrization $\alpha = \pa{\cos t} e^{i \eta }$, $\beta = \pa{\sin t}e^{i \pa{\eta + \theta}}$, and
\begin{align*}
\ro_{\psi} & = \pa{\cos t}^2 \ro_{\vp} + \pa{\sin t}^2  \ro_{\phi} + 2 \pa{ \cos t} \pa{\sin t} \pa{\cos \theta}  N \int_{\Omega^{N-1}} \vp \phi.
\end{align*}
Since
 \begin{multline*}
 \acs{ \pa{ (\cos t)^2, (\sin t)^2, \cos t \sin t \cos \theta} \st t, \theta \in \seg{0,2\pi}} \\
 = \acs{\pa{a,b,c} \st a,b,c \in \R, c^2 \le ab, a+ b = 1},
 \end{multline*}
the two spanned spaces of density are equal
\begin{multline*}
\acs{\ro_{\psi} \st \psi \in \vect_{\C} \pa{\vp,\phi}, \mediumint_{\Omega^N} \ab{\psi}^2 = 1} \\
= \acs{\ro_{\Gamma} \st \Gamma \in \cS \pa{\vect_{\R} (\vp,\phi)}, \Gamma \ge 0, \tr \Gamma = 1}.
\end{multline*}
Since we know that $v$ has a $k\expo{th}$ bound mixed state representing $\ro$, it also has a pure one.

\bul $iv)$ We will use a well-known result specific to the dimension one.
\begin{proposition}[Non-degeneracy theorem]\label{ndt}
For $d=1$ and any potential $v \in (L^1 + L^{\ii})(\R,\R)$, every eigenstate of $-\Delta +v$ is non-degenerate.
\end{proposition}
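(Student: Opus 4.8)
The plan is to reduce the degeneracy question to the classical Wronskian argument for second-order linear ODEs: in one dimension the solution space at a fixed energy is two-dimensional, and the $L^2$ (bound-state) constraint is exactly what forces any two eigenfunctions at the same energy to be proportional.

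First I would record the regularity of eigenfunctions. Let $E$ be an eigenvalue of $-\Delta + v$ and let $\psi_1, \psi_2$ be eigenfunctions; since $v$ and $E$ are real, the real and imaginary parts of any eigenfunction are again eigenfunctions, so it suffices to bound the dimension of the \emph{real} eigenspace, and I may assume $\psi_1, \psi_2$ are real. Each $\psi_j$ lies in $H^1(\R)$, hence (one-dimensional Sobolev embedding) is continuous, bounded, and tends to $0$ at $\pm\infty$. Writing $-\psi_j'' = (E-v)\psi_j$ distributionally and using $v \in L^1 + L^{\infty}$ together with $\psi_j \in L^{\infty}$, one gets $(E-v)\psi_j \in L^1_{loc}$, so $\psi_j'' \in L^1_{loc}$, $\psi_j'$ is absolutely continuous, and each $\psi_j \in C^1(\R)$ solves the ODE pointwise almost everywhere.

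Next I would introduce the Wronskian $W := \psi_1 \psi_2' - \psi_1' \psi_2$ and compute, using the equation, $W' = \psi_1 \psi_2'' - \psi_1'' \psi_2 = \psi_1 (v-E)\psi_2 - (v-E)\psi_1 \psi_2 = 0$ almost everywhere; since $W$ is absolutely continuous, $W$ is constant. To identify the constant, I use that $\psi_1,\psi_2 \in H^1(\R)$ means all four functions $\psi_1,\psi_2,\psi_1',\psi_2'$ lie in $L^2(\R)$, so the nonnegative continuous function $\psi_1^2 + \psi_2^2 + (\psi_1')^2 + (\psi_2')^2$ is integrable and therefore tends to $0$ along some sequence $x_n \to +\infty$. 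Evaluating along this sequence gives $W = \lim_n W(x_n) = 0$. Finally, $W \equiv 0$ forces linear dependence: choosing $x_0$ with $\psi_1(x_0)\neq 0$, the quotient $\psi_2/\psi_1$ has derivative $-W/\psi_1^2 = 0$ near $x_0$, so $\psi_2 = \lambda \psi_1$ there for some $\lambda \in \R$; then $g := \psi_2 - \lambda \psi_1$ solves the ODE with $g(x_0)=g'(x_0)=0$, hence $g \equiv 0$ by uniqueness for the Carath\'eodory initial value problem (the coefficients are $L^1_{loc}$). Thus any two real eigenfunctions are proportional, the real eigenspace is at most one-dimensional, and the complex eigenspace is one-dimensional, i.e. $E$ is non-degenerate.

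The main obstacle is purely the low regularity of $v$: I must check that the Wronskian identity and the ODE uniqueness theorem remain valid in the Carath\'eodory setting of merely $L^1_{loc}$ coefficients rather than continuous ones, and that eigenfunctions genuinely have the $C^1$ regularity and the $H^1$ control (hence $L^2$ derivatives) used to kill the boundary term at infinity. Everything past that is the standard constancy-of-Wronskian computation.
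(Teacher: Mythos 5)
Your proof is correct and follows essentially the same route as the paper: both hinge on showing that the Wronskian $\psi_1\psi_2'-\psi_1'\psi_2$ is constant and must vanish because the eigenfunctions and their derivatives lie in $L^2$. The only divergence is the concluding step — the paper divides by the eigenfunctions and invokes unique continuation to see that $\{\psi_1\neq 0,\ \psi_2\neq 0\}$ has full measure, whereas you conclude via Carath\'eodory uniqueness for the initial value problem with $L^1_{\mathrm{loc}}$ coefficients, which is equally valid and, in one dimension, arguably more self-contained.
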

We recall its proof for the convenience of the reader.
\begin{proof}
Let $\psi,\vp \in H^1(\R,\R)$ be normalized to 1 and such that $(-\d^2/\d x^2 + v-E) \phi = 0$ for $\phi \in \acs{\psi,\vp}$, with $E \in \R$. Multiplying the first equation by $\vp$, the second by $\psi$ and substracting, we get $0 = \psi \vp'' - \vp \psi'' = \pa{ \psi\vp' - \vp \psi'}'$. Hence $\psi\vp' - \vp \psi' = c$ for some constant $c \in \R$, but since $\psi,\vp \in L^2$, then $c = 0$. We have thus $\psi'/\psi = \vp'/\vp$ on $\acs{ x \in \R \st \psi \neq 0, \vp \neq 0}$, which has full measure by unique continuation \cite{Garrigue19}. Finally $\psi = a \vp$ with $a = \pm 1$.
\end{proof}
This result shows that the eigenspaces of $\hn(v)$ with $w=0$ can only have coincidental degeneracies, and we apply Proposition \ref{ksprops} $iii)$.
\end{proof}

\subsection{Proofs of Theorems \ref{exiss} and \ref{proproi}}\label{ssub:proofs_exiss_proproi} 
We start by proving the existence of minimizers.
\begin{proof}[Proof of Theorem \ref{exiss}]
We use a tightness argument similar to the one in \cite{Lieb83b}. Let us denote by $\p_n$ a minimizing sequence for $\fll{0}(r)$. Since $w \ge 0$, then 
\begin{align*}
\int_{\Omega} \ab{\na \sqrt{\ro_{\p_n}}}^2 \le  \int_{\Omega^N} \ab{\na \p_n }^2 \le \cE_0(\p_n) \ra \fll{0}(r),
\end{align*}
by the Hoffman-Ostenhof inequality. Then, $\p_n$ is bounded in $H^1(\Omega^N)$ and by the theorem of Banach-Alaoglu, there exists $\p_{\ii} \in H^1\ind{a}(\Omega)$ such that $\p_n \wra \p_{\ii}$ in $H^1(\Omega^N)$, and $\sqrt{\ro_{\p_n}} \wra \sqrt{\ro_{\ii}}$ in $H^1(\Omega)$ by \cite[Theorem 1.3]{Lieb83b}. At this step, $\ro_{\ii}$ and $\p_{\ii}$ are not related. By summing all the constraints on the density and using that $\indic_{\Omega}\sum_{i \in I} \wei_i = \indic_{\Omega}$, we have $\int_\Omega \ro_{\p_n} = N$. We estimate
\begin{align*}
\int_{B_R^{\tx{c}} \cap \Omega} \ro_{\p_n} \le \int_\Omega \ro_{\p_n} \mysum{\supp \wei_i \cap B_R^{\tx{c}} \neq \empt}{} \wei_i =  \mysum{\supp \wei_i \cap B_R^{\tx{c}} \neq \empt}{} r_i,
\end{align*}
and using the assumption \eqref{tigh} yields $\sup_n \int_{B_R^{\tx{c}}\cap \Omega} \ro_{\p_n} \ra 0$ when $R \ra +\ii$. This implies that $\sqrt{\ro_{\p_n}}$ converges strongly in $L^2(\Omega)$, up to extraction of a subsequence. The tightness of $\ro_{\p_n}$ also implies that $\p_n \ra \p_{\ii}$ strongly in $L^2(\Omega^N)$, that the limit of $\ro_{\p_n}$ is $\ro_{\p_{\ii}}$, and eventually that $\int_\Omega \ro_{\p_{\ii}} \wei_i = r_i$. By lower semi-continuity of the energy functional since $w \ge 0$, $\cE_0(\p_{\ii}) \le \fll{0}(r)$, hence $\p_{\ii}$ is a minimizer. By equivalence of the quadratic form $\cE_0$ with the one of $H^1(\Omega^N)$, $\p_n \ra \p_{\ii}$ strongly in $H^1(\Omega^N)$.

In the mixed state case, let us denote by $\Gamma_n$ a minimizing sequence. We use the compactness of the Fock space of particle number less than $N$, $\cans\pa{\cF^{\le N}}$ for the geometric convergence \cite[Lemma 2.2]{Lewin11}, we thus have $\Gamma_n \wra_g \Gamma_{\ii}$ for some $\Gamma_{\ii} \in \cans\pa{\cF^{\le N}}$. As before, the tightness of $\ro_{\Gamma_n}$ implies $\Gamma_n \ra \Gamma_{\ii}$ strongly in trace-class by \cite[Lemma 2.3]{Lewin11}, hence $\Gamma_{\ii}$ is an $N$-particle density matrix.  
 \end{proof}

One does not need the weight functions $\wei_i$ to have the diameters of their supports converging to zero to get that our regularized functionals converge to the exact one. Nevertheless, when this is the case, we can deduce bounds on the rate of convergence of the densities of minimizers to the target density. More precisely, the following result quantifies the distance between two densities satisfying $\int_\Omega \ro \wei_i = \int_\Omega \chi \wei_i$ for all $i \in I$. We consider exponents
\begin{align}\label{cdc}
q \in [1,2] \tx{ if } d =1,\hs \hs q \in [1,2) \tx{ if } d=2  \hs\hs\tx{ and } q = d/(d-1) \tx{ if } d \ge 3.
\end{align}

\begin{lemma}[Bounds on approximate densities]\label{lemama} Let $\Omega \subset \R^d$ be a connected open set with Lipschitz boundary, let $q$ be as in \eqref{cdc}. Take $\wei_i = \beta_i^{k_i}$ where $k_i \in \R_+$ and $\beta_i$ are non-negative concave functions on $\supp \wei_i$, with $\sup_{i \in \N} \diam \supp \wei_i < +\ii$, $\indic_{\Omega} \sum_{i=1}^{+\ii} \wei_i = \indic_{\Omega}$. Let $\ro,\chi \in L^1(\Omega,\R_+)$ such that $\sqrt{\ro}, \sqrt{ \chi} \in H^1_0(\Omega)$ and such that $\ro$ and $\chi$ are Lipschitz continuous. If $\int_\Omega  \ro \wei_i = \int_\Omega \chi \wei_i$ for any $i \in I$, then 
\begin{align*}
\nor{\ro-\chi}{\pa{L^1 \cap L^{q}}(\Omega)} \le c_{d,\Omega}  \pa{ \nor{\sqrt{\ro}}{H^1(\Omega)}^2 + \nor{\sqrt{\chi}}{H^1(\Omega)}^2}\sup_{i \in \N} \diam \supp \wei_{i}, 
\end{align*}
where $c_d$ only depends on $d$ and $\Omega$.
\end{lemma}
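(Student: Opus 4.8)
The plan is to exploit that the hypothesis $\int\ro\,\wei_i=\int\chi\,\wei_i$ forces the two densities to share the same $\wei_i$-weighted average $\overline{\ro}_i\df(\int\ro\wei_i)/(\int\wei_i)=\overline{\chi}_i$ on each patch. Writing $\indic_{\Omega}=\sum_i\wei_i$ and adding and subtracting these averages, one obtains the telescoping identity
\begin{align*}
\ro-\chi=\sum_{i}\pa{\ro-\overline{\ro}_i}\wei_i-\sum_{i}\pa{\chi-\overline{\chi}_i}\wei_i,
\end{align*}
since the constant parts $\sum_i(\overline{\ro}_i-\overline{\chi}_i)\wei_i$ cancel. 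Hence it suffices to bound, for $g\in\acs{\ro,\chi}$, the single sum $\sum_i(g-\overline{g}_i)\wei_i$ in $L^1\cap L^q$, each term measuring the oscillation of $g$ on a set of diameter at most $\delta\df\sup_i\diam\supp\wei_i$.

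The key analytic input is a weighted Poincaré–Wirtinger inequality on each patch: on the convex set $\supp\wei_i$ (the domain of the concave $\beta_i$, of diameter $\le\delta$), for the weight $\wei_i=\beta_i^{k_i}$ one has $\int\ab{f-\overline{f}_i}^2\wei_i\le c_d\,\delta^2\int\ab{\na f}^2\wei_i$ together with an $L^1$–Sobolev variant, with constants depending only on $d$ and \emph{not} on $k_i$ nor on $\beta_i$. This uniformity over powers of concave functions is where the precise form $\wei_i=\beta_i^{k_i}$ is genuinely used: such weights are densities of the well-behaved $s$-concave measures, whose Poincaré constant on a convex body scales like the square of its diameter independently of the exponent $k_i$. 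Establishing this uniform patchwise constant is the step I expect to be the main obstacle.

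To keep the final estimate linear in the gradient of $\sqrt{\ro}$ — recall that only $\sqrt{\ro}\in H^1$, while $\ro$ itself need not be — I would never differentiate $\ro$ directly but use the factorization
\begin{align*}
\ab{\ro(x)-\ro(y)}\le\ab{\sqrt{\ro}(x)-\sqrt{\ro}(y)}\,\pa{\sqrt{\ro}(x)+\sqrt{\ro}(y)}.
\end{align*}
Writing $\ro(x)-\overline{\ro}_i$ as a $\wei_i$-average of $\ro(x)-\ro(y)$, inserting this bound, and applying Cauchy–Schwarz separates an oscillation factor, controlled by the Poincaré inequality applied to $u\df\sqrt{\ro}$ and producing the gain $\delta\,\bpa{\int\ab{\na u}^2\wei_i}^{1/2}$, from a bounded factor producing $\bpa{\int u^2\wei_i}^{1/2}$. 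Summing over $i$ with one more Cauchy–Schwarz and using $\sum_i\wei_i=\indic_{\Omega}$, the patchwise pieces reassemble into global norms and yield $\nor{\sum_i(\ro-\overline{\ro}_i)\wei_i}{L^1}\le c_d\,\delta\,\nor{\na\sqrt{\ro}}{L^2}\nor{\sqrt{\ro}}{L^2}$, which is controlled by the right-hand side of the lemma.

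For the $L^q$ part I would first reduce the sum to its patches: since $\sum_i\wei_i=\indic_{\Omega}$, Jensen's inequality for the convex map $t\mapsto t^q$ gives the pointwise bound $\ab{\sum_i(\ro-\overline{\ro}_i)\wei_i}^q\le\sum_i\ab{\ro-\overline{\ro}_i}^q\wei_i$, so that $\nor{\sum_i(\ro-\overline{\ro}_i)\wei_i}{L^q}^q\le\sum_i\int\ab{\ro-\overline{\ro}_i}^q\wei_i$. On each patch I would run the same $\sqrt{\ro}$-factorization together with the $L^q$–Sobolev–Poincaré form of the weighted inequality and Hölder, the admissible exponents being dictated by the integrability of $u$ and $\na u$ in $L^2$: this is exactly why the range of $q$ is $[1,\ii)$ for $d=1$, $[1,2)$ for $d=2$, and the critical Sobolev value $d/(d-1)$ for $d\ge3$, after extending $u$ by zero using $\sqrt{\ro}\in H^1_0(\Omega)$. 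A single power of $\delta$ survives because, as in the $L^1$ case, it enters only through the oscillation factor of $u$. Carrying out the identical estimate for $\chi$ and adding gives the claim; besides the uniform patchwise constant, the delicate bookkeeping is to track the exponents so that the $L^q$ contribution, like the $L^1$ one, carries exactly one factor of $\delta$.
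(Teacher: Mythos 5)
Your proposal follows essentially the same route as the paper: the coincidence of the $\wei_i$-weighted averages of $\ro$ and $\chi$, a patchwise weighted Poincar\'e--Wirtinger inequality for weights of the form $\beta_i^{k_i}$ with a constant uniform in $i$, reassembly of the patches via $\sum_i \wei_i = \indic_{\Omega}$, and conversion to $H^1$ norms of the square roots (the paper applies the Poincar\'e inequality directly to $\ro$ and then writes $\na \ro = 2\sqrt{\ro}\,\na\sqrt{\ro}$ followed by H\"older and Sobolev, rather than your pointwise difference-of-squares factorization, but this is the same mechanism). The ``main obstacle'' you single out --- uniformity of the weighted Poincar\'e constant over the exponents $k_i$ and the concave profiles $\beta_i$ --- is exactly the content of the result of Chua and Wheeden \cite[Theorem 1.1]{ChuWhe06}, which the paper cites and uses as a black box.
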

\begin{proof}[Proof of Lemma \ref{lemama}]
Take $s \ge 1$. We use the weighted Poincaré-Wirtinger inequality from \cite[Theorem 1.1]{ChuWhe06} with $c_{f,i} := \pa{\int_\Omega \wei_i}^{-1} \int_\Omega f \wei_i$. We obtain
\begin{align*}
\int_\Omega \ab{f - c_{f,i}}^s \wei_i \le c_s \pa{\diam \supp \wei_i}^s \int_\Omega \ab{\na f}^s \wei_i,
\end{align*}
for $f \in \acs{\ro, \chi}$, and it is to apply this inequality that we need the assumption on Lipschitz continuity. Thus, since $c_{\ro,i} = c_{\chi,i}$ by assumption,
\begin{align*}
\int_\Omega \ab{\ro - \chi}^s \wei_i & = \int_\Omega \ab{\ro -c_{\ro,i} - \pa{\chi - c_{\ro,i}}}^s \wei_i  \\
& \le c_s \pa{ \int_\Omega \ab{\ro - c_{\ro,i}}^s \wei_i + \int_\Omega \ab{\chi - c_{\ro,i}}^s \wei_i} \\
& \le c_s \pa{ \sup_{i \in \N} \diam \supp \wei_{i} }^s \int_\Omega \bpa{\ab{\na \ro}^s+\ab{\na \chi}^s} \wei_i.
\end{align*}
Summing over $i$ and raising to the power $1/s$ yields
\begin{align}\label{ic}
\nor{\ro - \chi}{L^s(\Omega)} \le c_s \pa{ \sup_{i \in \N} \diam \supp \wei_{i} } \bpa{ \nor{\na \ro}{L^s(\Omega)} + \nor{\na \chi}{L^s(\Omega)}}.
\end{align}
We now decompose $\na f = 2 \sqrt{f} \na \sqrt{f}$. For $d \ge 2$ we take $s = q$, and by the Sobolev injections \eqref{ineq:sob_inj}, we have
\begin{align*}
\nor{\na f}{L^q(\Omega)} \le 2 \nor{\na \sqrt{f}}{L^2(\Omega)}  \nor{\sqrt{f}}{L^{\f{2q}{2-q}}(\Omega)} \le c \nor{\sqrt{f}}{H^1(\Omega)}^2.
\end{align*}
For $d=1$ we take $s=2$ and use
\begin{align*}
\nor{\na f}{L^2(\Omega)} \le 2 \nor{f}{L^{\infty}(\Omega)}^{\f 12} \nor{\na \sqrt{f}}{L^2(\Omega)} \le c \nor{\sqrt{f}}{H^1(\Omega)}^2.
\end{align*}
Applying \eqref{ic} concludes the proof.
\end{proof}

We prove now the convergence of our regularized functionals to the exact ones.

\begin{proof}[Proof of Theorem \ref{proproi}]
Let us denote by $\p_n$ a sequence of approximate minimizers for $\flln{0}{n}(r_{\ro})$. Since $w \ge 0$, then 
\begin{align*}
\int_\Omega \ab{\na\sqrt{\ro_{\p_n}}}^2 \le \int_{\Omega^N} \ab{\na \p_n }^2 \le \cE_0(\p_n) \le \flln{0}{n}(r_{\ro}) + \ep_n \le F\ex{0}(\ro) + \ep_n,
\end{align*}
where $\ep_n \ra 0$. Hence $\pa{\p_n}_{n \in \N}$ is bounded in $H^1(\Omega^N)$ and there exists $\p_{\ii} \in \wedge^N H^1(\Omega)$ such that $\p_n \wra \p_{\ii}$ weakly in $H^1(\Omega^N)$. By summing all the constraints on the density and using that $\indic_{\Omega}\sum_{i \in I} \wei_i = \indic_{\Omega}$, we have $\int_\Omega \ro_{\p_n} = N$. We estimate

\begin{align*}
\int_{B_r^{\tx{c}} \cap \Omega} \ro_{\p_n} \le \int_\Omega \ro_{\p_n} \mysum{\supp \wei_i^n \cap B_r^{\tx{c}} \neq \empt}{} \wei_i^n = \int_\Omega \ro \mysum{\supp \wei_i^n \cap B_r^{\tx{c}} \neq \empt}{}\wei_i^n,
\end{align*}
and using the assumption \eqref{tig} yields $\sup_n \int_{B_r^{\tx{c}}\cap \Omega}\ro_{\p_n} \ra 0$ when $r \ra +\ii$. This implies that $\sqrt{\ro_{\p_n}}$ converges strongly in $L^2(\Omega)$ and weakly in $H^1(\Omega)$, up to extraction of a subsequence. The tightness of $\ro_{\p_n}$ also implies that $\p_n \ra \p_{\ii}$ strongly in $L^2(\Omega^N)$ and that the limit of $\ro_{\p_n}$ is $\ro_{\p_{\ii}}$. 

Let $f \in \cC^{\ii}\ind{c}(\Omega)$, by assumption \eqref{hypo}, there exists a sequence of functions $f_n \in \vect\acs{\wei_i^n, i \in I_n}$ such that $\nor{f-f_n}{(L^{p}+L^{\ii})(\Omega)} \ra 0$ when $n \ra +\ii$. We also have $\int_\Omega f_n \ro_{\p_n} = \int_\Omega f_n \ro$ because $f_n \in \vect\acs{\wei_i^n, i \in I_n}$. By using
 \begin{multline*}
	\ab{ \int_\Omega f (\ro_{\p_n} - \ro) }   =  \ab{ \int_\Omega (f-f_n)(\ro_{\p_n} - \ro) }\\
	  \le c_{d,\Omega} \pa{ \nor{\sqrt{\ro}}{H^1(\Omega)}^2 + \mysup{n \in \N} \nor{\sqrt{\ro_{\p_n}}}{H^1(\Omega)}^2} \nor{f - f_n}{(L^{p} + L^{\ii})(\Omega)},
 \end{multline*}
	we deduce that $\int_\Omega f \ro_{\p_n} \ra \int_\Omega f \ro$. This is a convergence of $\ro_{\p_n}$ to $\ro$ in the sense of distributions and by uniqueness of the limit, we have then $\ro = \ro_{\p_{\ii}}$.
	

	We deduce that $\p_{\ii}$ belongs to the minimizing set of $F\ex{0}(\ro)$, consequently $F\ex{0}(\ro)\le\cE_0(\p_{\ii})$. By also using lower semi-continuity of the energy functional since $w \ge 0$, we have 
 \begin{align*}
F\ex{0}(\ro)\le\cE_0(\p_{\ii}) \le \liminf \cE_0(\p_n) \le F\ex{0}(\ro).
 \end{align*}
	We have thus equality and we conclude that $\p_{\ii}$ is a minimizer of $F(\ro)$.  Let us consider the quadratic form $q(\p) := \ps{\p,\hn(0)\p}$. The convergence on the Levy-Lieb functionals $\flln{0}{n}(r_{\ro}) \ra F\ex{0}(\ro)$ gives $q(\p_n) \ra q(\p_{\ii})$, and since $w \ge 0$ is $(-\Delta)$-bounded as a quadratic form, the associated norm of $q$ is equivalent to the $H^1(\Omega)$ norm, and hence $\p_n \ra \p_{\ii}$ in $H^1(\Omega)$.

	In the mixed states case, we follow a similar adaptation as for proving Theorem \ref{exiss}. As in the pure states case, the norm of $\cE_0$ is equivalent to the norm of $\sch_{1,1}$, hence $\Gamma_n \ra \Gamma_{\ii}$ strongly in $\sch_{1,1}$.
 \end{proof}

\subsection{The dual problem: proof of Theorem \ref{cococo}}\label{ssub:proof_cococo} 

In this section, we prove Theorem \ref{cococo} on the coercivity of the dual functional $\gew{k}$. In the proofs we will use the notation
\begin{align*}
\edi{k}(v) := \exc{k}\pa{ \sum_{i \in I}  v_i \wei_i}, \bhs V(v) := \sum_{i \in I} v_i \wei_i.
\end{align*}

We recall that $c_{\Omega} := - \exc{k}(0)/N$ is the constant such that the energy $\exc{k}(c_{\Omega}) = 0$ vanishes. We present a fact about the sign of the potential. 

\begin{lemma}\label{lelele}
Let $v \in \ell^{\ii}(I,\R)$ be such that $\edi{0}(v) = 0$ and $v \neq \come$. If $\hn\bpa{V(v)}$ has a ground state, then there exists $i \in I$ such that $v_i < c_{\Omega}$. If $\Omega$ is bounded, there exist $i,j \in I$ such that $v_i < \come < v_j$.
\end{lemma}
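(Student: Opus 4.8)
The plan is to compare the potential $V(v)=\sum_{i\in I}v_i\wei_i$ with the constant $\come\indic_{\Omega}$, for which $\exc{0}(\come\indic_{\Omega})=0$ by the choice of $\come$, exploiting monotonicity of $\exc{0}$ in the potential together with the fact that a ground state cannot vanish on a set of positive measure. Throughout I set $W\df V(v)-\come\indic_{\Omega}=\sum_{i\in I}(v_i-\come)\wei_i$ and I assume, as I may, that each $\wei_i\not\equiv0$, so that $\acs{\wei_i>0}$ has positive measure. I will repeatedly use that $V_1\le V_2$ implies $\exc{0}(V_1)\le\exc{0}(V_2)$, since $\cE_{V_2}(\p)-\cE_{V_1}(\p)=\int(V_2-V_1)\ro_{\p}\ge0$ for every normalized $\p$.

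For the first assertion I would argue by contradiction, supposing $v_i\ge\come$ for every $i$. Since $v\neq\come$, some index $i_0$ has $\come\sle v_{i_0}$, so $W\ge(v_{i_0}-\come)\wei_{i_0}$ is strictly positive on $\acs{\wei_{i_0}>0}$ (all summands being nonnegative, no cancellation can occur). Taking the assumed ground state $\p$ of $\hn(V(v))$, which satisfies $\cE_{V(v)}(\p)=\edi{0}(v)=0$, and writing $\hn(\come\indic_{\Omega})=\hn(V(v))-\sum_{j=1}^N W(x_j)$, I would compute
\begin{align*}
\cE_{\come\indic_{\Omega}}(\p)=\cE_{V(v)}(\p)-\int_{\Omega} W\,\ro_{\p}=-\int_{\Omega} W\,\ro_{\p}\ge\exc{0}(\come\indic_{\Omega})=0,
\end{align*}
the inequality because $\p$ is a normalized trial state. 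As $W\ge0$ this forces $\int_{\Omega} W\,\ro_{\p}=0$, hence $\ro_{\p}=0$ a.e. on $\acs{\wei_{i_0}>0}$. But $\p$ solves $\hn(V(v))\p=0$ with $V(v)\in L^{\ii}$ and $w\in L^p+L^{\ii}$, so unique continuation \cite{Garrigue19} gives $\ro_{\p}>0$ a.e., a contradiction. Hence some $v_i\sle\come$.

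For the second assertion, with $\Omega$ bounded, I would again argue by contradiction, assuming $v_j\le\come$ for all $j$, and pick $j_0$ with $v_{j_0}\sle\come$, so that $-W\ge(\come-v_{j_0})\wei_{j_0}$ is strictly positive on $\acs{\wei_{j_0}>0}$. Here I would use that, $\Omega$ being bounded, $\hn(\come\indic_{\Omega})$ has compact resolvent and hence a genuine ground state $\phi$ with $\cE_{\come\indic_{\Omega}}(\phi)=\exc{0}(\come\indic_{\Omega})=0$. Employing $\phi$ as a trial state for $V(v)$,
\begin{align*}
\cE_{V(v)}(\phi)=\cE_{\come\indic_{\Omega}}(\phi)+\int_{\Omega} W\,\ro_{\phi}=\int_{\Omega} W\,\ro_{\phi},
\end{align*}
and since $\ro_{\phi}>0$ a.e. by unique continuation while $-W>0$ on a positive-measure set, this integral is strictly negative, giving $\cE_{V(v)}(\phi)\sle0=\exc{0}(V(v))$ and contradicting that $0$ is the infimum of the energy. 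Thus some $j$ satisfies $\come\sle v_j$, which combined with the first part yields $v_i\sle\come\sle v_j$.

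The hard part is converting the soft monotonicity inequalities into strict ones: this is exactly where unique continuation is indispensable, ensuring that the relevant ground-state densities are positive a.e. on the regions where $W$ has a definite sign. The only genuinely extra ingredient in the bounded case is the existence of a true minimizer $\phi$ of the constant-potential operator to serve as a strictly improving trial state, which is why boundedness of $\Omega$, hence discreteness of the spectrum, is assumed there.
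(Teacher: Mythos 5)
Your proof is correct and follows essentially the same route as the paper: compare $V(v)$ with the constant potential $\come$, use the ground state of one operator as a trial state for the other, and invoke unique continuation ($\ro>0$ a.e.) to turn the soft inequalities into strict ones, with boundedness of $\Omega$ supplying the actual minimizer of the constant-potential problem. The only cosmetic difference is that you argue by contradiction with a sign-definite $W$, whereas the paper writes out the resulting weighted inequalities between $\sum_{v_i>\come}$ and $\sum_{v_i\sle\come}$ explicitly (and, in the bounded case, also re-derives the index with $v_i\sle\come$ via strict monotonicity of the energy rather than by citing the first part as you do).
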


\begin{proof}[Proof of Lemma \ref{lelele}]\tx{ }
	
\bul Let $\p$ be a ground state of $\hn(V(v))$. We have
\begin{align*}
0 & = \edi{0}(v) = \cE_{V(v)}(\p) = \cE_{c_{\Omega}}(\p) + \int_\Omega \pa{V(v) - c_{\Omega}} \ro_{\p} \\
& \ge \exc{0}\bpa{c_{\Omega}} + \int_\Omega \pa{V(v) - c_{\Omega}} \ro_{\p} = \sum_{i \in I} \pa{v_i - \come} \int_\Omega \ro_{\p} \wei_i.
\end{align*}
We used that $\sum_{i \in I} \ab{v_i} \alpha_i \le \sup_{i \in I} \ab{v_i}$ and dominated convergence to commute sum and integral. Thus
\begin{align}\label{comee}
\sum_{v_i > \come} \ab{v_i - \come} \int_\Omega \ro_{\p_v}\wei_i \le \sum_{v_i  < \come} \ab{v_i - \come} \int_\Omega \ro_{\p_v} \wei_i.
\end{align}
If $v \ge \come$, then the right hand side of \eqref{comee} vanishes and $v = \come$ because $\ab{\acs{\ro_{\p_v}  =0}}=0$ and therefore $\int_\Omega \ro_{\p_v} \wei_i > 0$ by unique continuation \cite[Remark 1.6]{Garrigue19}. Thus there is $i \in I$ such that $v_i < \come$.
	
\bul If $\Omega$ is bounded, then $\exc{0}(0)$ has a minimizer $\p$ and
\begin{align*}
0 = \edi{0}(v) \le \cE_{V(v)}(\p) 
 = \sum_{i \in I} \pa{v_i - \come} \int_\Omega \ro_{\p} \wei_i,
\end{align*}
then we obtain the inequality opposite to \eqref{comee},
\begin{align}\label{comeee}
\sum_{v_i  < \come} \ab{v_i - \come} \int_\Omega \ro_{\p}\wei_i \le \sum_{v_i > \come} \ab{v_i - \come} \int_\Omega \ro_{\p} \wei_i,
\end{align}
for this particular state. If $v \le \come$, then the right hand side of \eqref{comeee} vanishes, but this is not possible since the left hand side has to be strictly positive, hence there is $\ell \in I$ such that $v_{\ell} > \come$. 

The map $u \mapsto \edi{0}(u)$ is strictly increasing by \cite[Corollary 1.5]{Garrigue21}. By taking a potential $v \in \ell^{\ii}(\Omega,\R)$ such that $\edi{0}(v) = 0$ and $v_{\ell}  > \come$ for some $\ell \in I$, if we suppose that $v \ge \come$, then $0 = \edi{0}(\come) < \edi{0}(v)$, which is a contradiction. We conclude that there is also $j \in I$ such that $v_j < \come$.
\end{proof}

We are now ready to prove the coercivity inequality of the regularized dual functional.

\begin{proof}[Proof of Theorem \ref{cococo}]\tx{ }

\bul We first prove \eqref{cob}. We assumed that there are points $y_i \in \R^d$ such that for any $i \in I$,
\begin{align*}
B_R(y_i) \subset \pa{\supp \wei_i} \backslash \cup_{j \in I, j \neq i} \supp \wei_j.
\end{align*}
We write $X = (x_1,\dots,x_N)$ and $Y_i := (y_i,\dots,y_i)$. Take normalized $\Phi_0,\dots,\Phi_{k} \in \wedge^N H^1_0(B_R)$ with disjoint supports. Take some non-empty $Q \subset I$ and for $j \in \acs{0,\dots,k}$, form
\begin{align*}
\p_{j,Q}(X) := \inv{\sqrt{\sum_{i \in Q} r_i}} \sum_{i \in Q} \sqrt{r_i} \hspace{0.2cm} \Phi_j(X-Y_i).
\end{align*}
This satisfies $\int_{\Omega^{N}} \ab{\p_{j,Q}}^2 = 1$, $T(\p_{j,Q}) = T(\Phi_j)$, $W(\p_{j,Q}) = W(\Phi_j)$ 
and
\begin{align*}
\ro_{\p_{j,Q}}(x) = \pa{\sum_{i \in Q} r_i }^{-1} \sum_{i \in Q} r_i \ro_{\Phi_j}(x-y_i).
\end{align*}
We use the expression
\begin{align*}
	\exc{k}(V) = \myinf{\dim A = k+1} \mymax{\p \in A \\ \int_{\Omega^N} \ab{\p}^2 = 1} \cE_V(\p)
 \end{align*}
and choose the frame $A := \bpa{\p_{0,Q},\dots,\p_{k,Q}}$ so that
 \begin{align*}
	 \gew{k}(v) \le - \sum_{i \in I} v_i r_i + \mymax{\lambda_j \in \C \\ \sum_{j=0}^k \ab{\lambda_j}^2 =1} \cE_{V(v)}\pa{\sum_{j=0}^k \lambda_j \p_{j,Q}}.
 \end{align*}
	For any $i \in I$, the only non-vanishing element of $\weig$ in $B_R(y_i)$ is $\wei_i$, so $\wei_i = 1$ on $B_R(y_i)$ and 
 \begin{align*}
	 \int_{\Omega} \wei_i \ro_{\p_{j,Q}} & = \f{N r_i \delta_{i \in Q}}{\sum_{\ell \in Q} r_{\ell}}, \\
	 \int_{\Omega} V(v) \ro_{\sum_{j=0}^k \lambda_j \p_{j,Q}} & = \sum_{j =0}^{k} \ab{\lambda_j}^2 \int_{\Omega} V(v) \ro_{\p_{j,Q}} = \f{N \sum_{i \in Q} v_i r_i}{\sum_{\ell \in Q} r_{\ell}}.
 \end{align*}
We see that the external potential energy of the trial state does not depend on the $\lambda_j$'s. Defining
 \begin{align*}
	  c_R := \mymax{\lambda_j \in \C \\ \sum_{j=0}^k \ab{\lambda_j}^2 =1} \cE_{0}\pa{\sum_{j=0}^k \lambda_j \p_{j,Q}},
 \end{align*}
 we deduce that
	\begin{align}\label{ipo}
	\gew{k}\bpa{v} & \le c_R + \f{N}{\sum_{i \in Q} r_i}\sum_{i \in Q} v_i r_i - \sum_{i \in I} v_i r_i \\
& = c_R + \f{\sum_{i \in I \backslash Q} r_i }{\sum_{i \in Q} r_i}\sum_{i \in Q} v_i r_i - \sum_{i \in I \backslash Q} v_i r_i. \nonumber
\end{align}
Since $G$ is gauge invariant, for any $\mu \in \R$ and any non-empty $Q \subset I$, we have
\begin{align}\label{lalalala}
	\gew{k}(v) & = \gew{k}(v-\mu) \le c_R + \f{\sum_{I \backslash Q} r_i }{\sum_{i \in Q} r_i}\sum_{i \in Q} (v_i - \mu) r_i - \sum_{i \in I \backslash Q} (v_i-\mu) r_i.
 \end{align}
	We define the two sets $I^{\pm}_v := \acs{ i \in I \st \pm v_i > \pm c_{\Omega}}$. In the case $I^-_v \neq \empt$, we take $Q = I^-_v$ and $\mu = c_{\Omega}$ yielding
	\begin{align}\label{indua}
		\gew{k}(v)  - c_R  & \le  \f{\sum_{v_i \ge c_{\Omega}} r_i }{\sum_{v_i < c_{\Omega}} r_i}\sum_{v_i < c_{\Omega}} (v_i - c_{\Omega}) r_i - \sum_{v_i \ge c_{\Omega}} (v_i-c_{\Omega}) r_i \nonumber \\
		& \le \min \pa{ 1, \f{\sum_{v_i \ge c_{\Omega}}r_i}{\sum_{v_i < c_{\Omega}} r_i}} \hspace{-0.1cm} \pa{\sum_{v_i < c_{\Omega}} (v_i - c_{\Omega}) r_i - \sum_{v_i \ge c_{\Omega}} (v_i-c_{\Omega}) r_i }\nonumber  \\
	  & \le   -\f{\sum_{v_i \ge c_{\Omega}}r_i}{N} \nor{v - c_{\Omega}}{\ell^1_{r}}.
 \end{align}
In the case $v = c_{\Omega}$, we have $\gew{k}(v) = - c_{\Omega} N = \exc{k}(0) \le c_R$ by using the same trial state as before, hence the bound also holds. 

	If $v \ge \come$, then $\inf v = \come$, because otherwise $\inf v > \come$ and
 \begin{align*}
	 0 = E\ex{k}\ind{dis}(v) \ge E\ex{k}\ind{dis}(\inf v) > E\ex{k}\ind{dis}\pa{\come} = 0.
 \end{align*}
	We take a sequence $(v_{\vp(n)})_{n \in \N}$ where $\vp(n) \in I$ is such that $v_{\vp(n)} \ra \come$ when $n \ra +\ii$. If $I$ is finite, then there is $\ell \in I$ such that $v_{\ell} = 0$ and we take $\vp(n) = \ell$ for any $n \in \N$. We choose $Q = \acs{\vp(n)}$ with only one element, and \eqref{ipo} yields
 \begin{align*}
	 \gew{k}(v) & \le c_R + Nv_{\vp(n)}  - \sum_{i \in I} v_i r_i.
 \end{align*}
 By gauge invariance, we also have
 \begin{align*}
	 \gew{k}(v) =  \gew{k}(v- \come ) \le - \nor{v- \come}{\ell^1_r} + N \pa{v_{\vp(n)}-\come} + c_R,
 \end{align*}
where we used $v \ge \come$. Taking the limit $n \ra +\ii$ yields
 \begin{align*}
	 \gew{k}(v)  \le - \nor{v- \come}{\ell^1_r} + c_R.
 \end{align*}


In case $k=0$ and $I$ is infinite, we can still prove that there is a maximizer. Let $v^n$ be a maximizing sequence. By coercivity, $\sum_{i \in I} \ab{v_i^n}r_i$ is bounded hence $\ab{v_i^n} \le c/r_i$ uniformly in $i,n$. 
	There exists $(v^{\ii}_i)_{i \in I} \in \ell^{\ii}(I,\R)$ such that $v_i^n \ra v_i^{\ii}$ for all $i \in I$, up to a subsequence and finally $v^{\ii} \in \ell^1_{r}(I,\R)$ by Fatou's lemma. We conclude by using weak upper semi-continuity of $\gew{0}$. For $k \ge 1$, $\gew{k}$ is not upper semi-continuous or concave but when $I$ is finite, since it is coercive and lives in a finite-dimensional space, it has a maximum.

\bul Now assume that $\Omega$ is bounded, so that every potential is binding, moreover $I$ is necessarily finite. We first give the beginning of a proof which would not use Theorem \ref{kspropi} to see why it only works for $k=0$.

We would first need that $\fl{k}(r)$ has an optimizer $(\Gamma_r,A_r)$, this was proved by Lieb \cite{Lieb83b} for $k=0$ and we are not able to prove it for $k \ge 1$. Since $\int_\Omega \ro_{\Gamma_r}\wei_i = r_i$, then
\begin{align*}
	\cE_{V(v^{\ii})}(\Gamma_r) &  = \tr \hn(0) \Gamma_r + \sum_{i \in I} v_i^{\ii} r_i = \gew{k}(v^{\ii}) + \sum_{i \in I} v_i^{\ii} r_i = \edi{k}(v^{\ii}).
\end{align*}
We diagonalize $\Gamma_r =: \sum_{\ell \in \N} \lambda_{\ell} \proj{\vp_{\ell}}$, where $\sum_{\ell \in \N} \lambda_{\ell} = 1$ and $\vp_{\ell} \in A_r^{\perp}$. By linearity of the energy functional, we have 
\begin{align*}
	\sum_{\ell \in \N} \lambda_{\ell} \cE_{V(v^{\ii})}(\vp_{\ell}) = \edi{k}(v^{\ii}).
\end{align*}
We need again $k=0$, because then $\cE_{V(v^{\ii})}(\vp_{\ell}) \ge \edi{0}(v^{\ii})$ and thus we have $\cE_{V(v^{\ii})}(\vp_{\ell}) = \edi{0}(v^{\ii})$ for any $\ell \in \N$, and finally
\begin{align*}
	\vp_{\ell} \in \Ker \pa{ \hn\pa{V(v^{\ii})} - \edi{0}(v^{\ii})}.
\end{align*}

Nevertheless, by an adaptation of Theorem \ref{kspropi} to the discretized case, we know that since $v$ maximizes $\gew{k}$, then it has a $k\expo{th}$ bound mixed state $\Gamma$ with density satisfying $\int_\Omega \ro_{\Gamma} \wei_i = r_i$.
\end{proof}

\begin{remark}
	When $\Omega = \R^d$, the situation is different. Assume also $\edi{k}(v) = 0$. We apply \eqref{lalalala} with $Q = \acs{ v_i < \ep}$ for some $\ep > 0$, which is not empty since $\inf v = 0$. This yields
 \begin{align*}
	 \gew{k}(v) - c_R  & \le  - \min\pa{1,\f{\sum_{v_i \ge \ep} r_i}{\sum_{v_i < \ep} r_i}} \nor{v-\ep}{\ell^1_{r,\weig}}
 \end{align*}
and we conclude by letting $\ep \ra 0$,
 \begin{align*}
	 \gew{k}(v)- c_R  & \le -\min\pa{1,\f{\sum_{v_i > 0}r_i}{\sum_{v_i \le 0}r_i}} \nor{v }{\ell^1_{r}} \le -\min\pa{1,\f{\sum_{v_i > 0}r_i}{N}} \nor{v }{\ell^1_{r}}.
 \end{align*}
 The problem is that we are not able to find a strictly positive lower bound for $\sum_{v_i > 0}r_i$, which would provide coercivity.
\end{remark}

\begin{remark}
A natural norm on potentials is the gauge invariant quotient norm 
\begin{align*}
\nor{v}{\ell^{1}_{r} / \sim} = \inf_{\mu \in \R} \nor{v-\mu}{\ell^1_{r}},
\end{align*}
where $v \sim u$ when $v-u$ is constant. If in \eqref{lalalala} we take $Q =  \acs{i \in I \st v_i = \inf v}$, $\mu = \inf v$ we obtain
\begin{align*}
	\gew{k}(v) - c_R  \le  - \sum_{v_i > \inf v} (v_i - \inf v) r_i =  - \nor{v - \inf v}{\ell^1_{r}}\le - \nor{v}{\ell^1_{r}/\sim}.
\end{align*}
	Hence $\gew{k}$ is coercive in the $\ell^1_{r}/\sim$ norm, but this is not a convenient norm because by definition we do not control the constant.
\end{remark}

\subsection{Building Kohn-Sham potentials: proofs of Corollary~\ref{canot} and Theorem \ref{juju}}\label{ssub:proof_canot} 

In this section, we show how Theorem~\ref{cococo} yields approximate $v$-representability when $\Omega$ is unbounded.

\subsubsection{The mixed states case}

\begin{proof}[Proof of Corollary \ref{canot}]
	Consider copies of the cube $C^n := [-1/n,1/n)^d$, centered on the grid points of $\pa{\Z/n}^d$. Take $\Omega_n$ to be the union of all those cubes $(C^n_i)_{i\in I_n}$ which are included in $\Omega \cap B_n$, where $B_r$ is the ball of radius $n$. They form an increasing sequence $\Omega_n \subset \Omega_{n+1}$. We choose $\wei_i^n := \indic_{C^n_i}$. We apply Theorem~\ref{cococo} to $r\expa{n}  := c_n r_{\ro\indic_{\Omega_n}}$ with $c_n := N/ \int_{\Omega_n} \ro$. The condition $\ab{ \acs{ \ro = 0}} = 0$ ensures that $r_i\expa{n} > 0$ for any $i \in I_n$, $n \in \N$. We denote by $v^n \in \ell^{\ii}(I,\R)$ the maximizer of $G\ex{k}_{r\expa{n},\weig_n}$, we have $\int_\Omega \ro_{\Gamma_{n}} \wei_i^n = c_n \int_\Omega \ro \wei_i^n$ and we apply Theorem~\ref{proproi} for each $n$.

	Each $\Gamma_{n}$ lives in $\cans(\Omega_n)$ and each $\hn(\sum v^n_i \wei_i^n)$ is an operator of $L^2\ind{a}(\Omega_n^N)$, but by \apo{digging pits} close to where the density is localized, we can create a potential $V_n = -\lambda_n \indic_{\Omega_n} + \sum_i v^n_i \wei_i^n$ with $\lambda_n$ large enough so that we keep the same properties for systems living in an unbounded domain $\Omega$.

For $k=0$, we take $\Gamma_{n}$ a minimizer of $\flln{0}{n}(r\expa{n})$ and apply Theorem~\ref{proproi}.

\end{proof}

\subsubsection{The pure states case}


\begin{proof}[Proof of Theorem \ref{juju}]
 We assume $N \ge 1$ and only restrict to $N=1$ at the end of the argument. We define the map
\begin{align*}
\rodii : \begin{array}{rcl}
	H^1\ind{a}(\Omega^N) \cap \acs{ \int_{\Omega^N} \ab{\p}^2 = 1} & \longrightarrow & \pa{\R_+}^{\ab{I}} \\
	\p & \longmapsto & \pa{\int_\Omega \ro_{\p} \wei_i }_{i \in I}. \\
\end{array}
\end{align*}
It is $\cC^{\ii}$ and since $I$ is finite, $\ran \d_{\p} \rodii$ is closed for any $\p \in L^2\ind{a}(\Omega^N)$ because the image lives in a finite-dimensional space. Now $\d_{\p} G$ has closed range for any $\p \in \wedge^N H^1(\Omega)$ because its target space is finite-dimensional. We compute, for any $\vp \in H^1\ind{a}(\Omega^N)$ 
	and any $v \in \R^{\ab{I}}$,
\begin{align*}
	& \sum_{i \in I} v_i \pa{ \pa{\d_{\p} \rodii} \vp}_i = 2 N  \sum_{i \in I} v_i \int_{\Omega^N}  (\p {\vp})(x_1,\dots,x_N) \wei_i(x_1) \d x_1 \cdots \d x_N \\
	& \bhs = 2  \int_{\Omega^N} \sum_{j=1 }^N \pa{\sum_{i \in I} v_i \wei_i}(x_j) (\p {\vp})(x_1,\dots,x_N) \d x_1 \cdots \d x_N \\
	& \bhs = 2   \ps{ \vp, \vv{V(v)} \p}
\end{align*}
	Furthermore, $h : \p \mapsto \ps{\p, \hn(0)\p}$ is $\cC^{\ii}$ with differential 
 \begin{align*}
\pa{\d_{\p} h} \vp = 2  \ps{\vp, \hn(0)\p}.
 \end{align*}
	Let $\p \in H\ind{a}^1(\Omega^N)$ with unit norm be a minimizer of $\fll{0}(r)$, then $\int_\Omega \ro_{\p} \wei_i = r_i$ for any $i \in I$. We apply \cite[Prop 43.19 p291]{Zeidler3}, ensuring the existence of Lagrange multipliers $(v_i)_{i \in I} \in \pa{ \ell^1(I,\R) }^* = \ell^{\ii}(I,\R)$ such that $\hn(V(v)) \p = 0$ weakly, or such that $\vv{V(v)} \p = 0$ weakly and $v \neq 0$. 

	We prove by contradiction that the second case is impossible, so let us assume that $\vv{V(v)} \p = 0$. This is where we need $N=1$, which implies $V(v) \p = 0$ and $0= v_i \int_\Omega \ro_{\p} \wei_i =v_i r_i$, but since $r_i > 0$, we conclude that $v_i = 0$, which is a contradiction. If Conjecture \ref{conjos} holds, then $\vv{V(v)} \p = 0$ implies $\vv{V(v)} = 0$ a.e and $v = 0$, and this avoids the second case for all $N$.

We hence know that 
\begin{align}\label{truc}
\hn(V(v)) \p = 0
 \end{align}
	with $V(v) \in L^{p}(\Omega)$, $p$ as in \eqref{dims}, so $\p$ is in the domain of $\hn(V(v))$ and consequently is an eigenvalue. Since $\Omega$ is bounded, $\p$ is in the discrete spectrum.
\end{proof}

We remark that Conjecture \ref{conjos} also implies that for any $r \in \ell^1(I,\R_+)$, $\rodii^{-1}(r)$ is a manifold, because $\d_{\p} \rodii$ is then surjective and one can apply the preimage theorem \cite[Theorem 73.C p556]{Zeidler4}. Since $\d_{\p} \rodii$ is closed, this mathematical framework can be applied to define the so-called adiabatic connection of DFT, because one can apply the implicit functions theorem and let $w$ decrease while the inverse potential $v$ increases, keeping the density fixed.

\subsection{Proofs of Section \ref{secopt}: Lemma \ref{localp} and Proposition \ref{ksprops}}\label{ssub:proof_secopt}

 \begin{proof}[Proof of Lemma \ref{localp}]\tx{ }

	 $i)$ We define $D_N := \dim \Ker_{\R} \pa{\hn(v)-\exc{k}(v)}$. We choose a basis $(\p_I)_{I \in \cI\ind{tot}}$ of $\Ker_{\R} \pa{\hn(v)-\exc{k}(v)}$ composed of antisymmetric products of one-body real orbitals. We compute
 \begin{align*}
	 & \int_{\Omega^{N-1}} \pa{\p_I\p_J}(x,x_2,\dots,x_N) \d x_2 \cdots \d x_N  \\
	 & \hs\hs\hs\hs\hs\bhs  = \delta_{I, J} N^{-1} \pa{ \sum_{i \in I} \vp_i(x)^2}  + \delta_{I \cupdot J = \acs{i,j}} N^{-1}  \vp_i(x)\vp_j(x). 
 \end{align*}

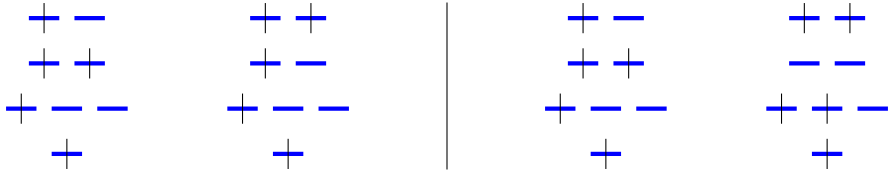
\begin{figure}
\begin{tikzpicture}[scale=0.4]
\draw[level]   (-0.5,0)  -- (0.5,0);
\draw[level]   (-0.5,1.5)  -- (0.5,1.5);
\draw[level]   (-2,1.5)  -- (-1,1.5);
\draw[level]   (1,1.5)  -- (2,1.5);
\draw[level]   (-1.25,3)  -- (-0.25,3);
\draw[level]   (1.25,3)  -- (0.25,3);
\draw[level]   (-1.25,4.5)  -- (-0.25,4.5);
\draw[level]   (1.25,4.5)  -- (0.25,4.5);
	\draw (0,0.5) -- (0,-0.5) (-1.5,1) -- (-1.5,2) (-0.75,2.5) -- (-0.75,3.5)(0.75,2.5) -- (0.75,3.5) (-0.75,4)--(-0.75,5);
\end{tikzpicture}
	\hspace{1cm} 
\begin{tikzpicture}[scale=0.4]
\draw[level]   (-0.5,0)  -- (0.5,0);
\draw[level]   (-0.5,1.5)  -- (0.5,1.5);
\draw[level]   (-2,1.5)  -- (-1,1.5);
\draw[level]   (1,1.5)  -- (2,1.5);
\draw[level]   (-1.25,3)  -- (-0.25,3);
\draw[level]   (1.25,3)  -- (0.25,3);
\draw[level]   (-1.25,4.5)  -- (-0.25,4.5);
\draw[level]   (1.25,4.5)  -- (0.25,4.5);
	\draw (0,0.5) -- (0,-0.5) (-1.5,1) -- (-1.5,2) (-0.75,2.5) -- (-0.75,3.5) (0.75,4) -- (0.75,5) (-0.75,4)--(-0.75,5);
\end{tikzpicture}
	\hspace{1cm} 
\unskip\ \vrule\
	\hspace{1cm} 
\begin{tikzpicture}[scale=0.4]
\draw[level]   (-0.5,0)  -- (0.5,0);
\draw[level]   (-0.5,1.5)  -- (0.5,1.5);
\draw[level]   (-2,1.5)  -- (-1,1.5);
\draw[level]   (1,1.5)  -- (2,1.5);
\draw[level]   (-1.25,3)  -- (-0.25,3);
\draw[level]   (1.25,3)  -- (0.25,3);
\draw[level]   (-1.25,4.5)  -- (-0.25,4.5);
\draw[level]   (1.25,4.5)  -- (0.25,4.5);
	\draw (0,0.5) -- (0,-0.5) (-1.5,1) -- (-1.5,2) (-0.75,2.5) -- (-0.75,3.5)(0.75,2.5) -- (0.75,3.5) (-0.75,4)--(-0.75,5);
\end{tikzpicture}
	\hspace{1cm} 
\begin{tikzpicture}[scale=0.4]
\draw[level]   (-0.5,0)  -- (0.5,0);
\draw[level]   (-0.5,1.5)  -- (0.5,1.5);
\draw[level]   (-2,1.5)  -- (-1,1.5);
\draw[level]   (1,1.5)  -- (2,1.5);
\draw[level]   (-1.25,3)  -- (-0.25,3);
\draw[level]   (1.25,3)  -- (0.25,3);
\draw[level]   (-1.25,4.5)  -- (-0.25,4.5);
\draw[level]   (1.25,4.5)  -- (0.25,4.5);
	\draw (0,0.5) -- (0,-0.5) (-1.5,1) -- (-1.5,2) (0,1) -- (0,2)(0.75,4) -- (0.75,5) (-0.75,4)--(-0.75,5);
\end{tikzpicture}
	\caption{Vanishing crossing terms between two configurations.}\label{figuu}
\end{figure}

The second term does not vanish if there is exactly one of the degenerate one-body energy levels of $I$ and $J$ differing by exactly one particle, the other particles of the outer levels should have the same distribution. Indeed, as illustrated in Figure \ref{figuu}, if the difference belongs to two different levels (left), the energies are different so those terms actually do not appear, and if the energies are equal but differences belong to two different levels, then there are more than one difference. We deduce that
 \begin{align}\label{croc}
 & \int_{\Omega^{N-1}} \pa{\p_I\p_J}(x,x_2,\dots,x_N) \d x_2 \cdots \d x_N   \\
	 & \bhs = \delta_{I, J} N^{-1} \pa{ \sum_{i \in I} \vp_i(x)^2}  +  N^{-1} \delta_{\substack{\exists \ell,i,j \\ I \cupdot J = I\expa{\ell} \cupdot J\expa{\ell} = \acs{i,j}\\I\expa{t} = J\expa{t} \forall t \neq \ell}}\vp_i(x)\vp_j(x).\nonumber 
 \end{align}
 Finally we can rewrite
 \begin{align*}
 {^+}\delta_v \exc{k} (u)= \mu_{k-m_k^v}\pa{P \vv{u} P},
 \end{align*}
where $P$ is the projector onto $\Ker_{\R} \pa{\hn(v)-\exc{k}(v)}$. Using \eqref{croc}, the matrix elements of $P \vv{u} P$ are
 \begin{align*}
 \ps{\p_I, \vv{u} \p_{J}}  = \int_\Omega \pa{\delta_{IJ} \ro\ind{in}  + \cM_{\vp,IJ}}u.
 \end{align*}
Defining the elementwise action of the integral on matrices $\pa{\int_\Omega u \matix}_{IJ} := \int_\Omega u \matixx{IJ}$, we have
 \begin{align*}
	 ^+\delta_v G_{\ro} (u) =  \int_\Omega \pa{\ro\ind{in} - \ro} u + \mu_{k-m_k^v}\pa{ \int_\Omega u \matix }.
 \end{align*}

 $ii)$ As we see in \eqref{croc} and Figure \ref{figuu}, $\int_{\Omega^{N-1}} \p_I \p_J = 0$ when $I$ and $J$ have more than one particle difference.

 $iii)$ In this case, $\matix$ is diagonal by $ii)$.
 \end{proof}

\begin{proof}[Proof of Proposition \ref{ksprops}]\tx{ }

$i)$ The optimality condition $\sup_{u \in (\lpi)(\Omega,\R)} {^+}\delta_v \ger{k} (u) \le 0$ is equivalent to
\begin{align}\label{cela}
	\mu_{k-m_k^v}\pa{ \int_{\Omega} u \matix } \le \int_{\Omega} \pa{\ro-\ro\ind{in}}u  \le \mu_{M_k^v-k}\pa{ \int_{\Omega} u \matix },
 \end{align}
	for all $u \in (L^p+L^{\ii})(\Omega,\R)$. These are the Euler-Lagrange inequalities, replacing equalities because of the degeneracies. Applying this condition \eqref{cela} to the sequences $u_n := n^d \indic_{B_{1/n}(x)}$ and $-u_n$, and taking $n \ra +\ii$, we obtain \eqref{codl}. However, \eqref{codl} does not imply \eqref{cela}. More generally, for a function-valued real symmetric matrix $S$, and for $k,K\in \N$, $\mu_k \pa{S(x)} \le 0 \le \mu_K \pa{S(x)}$ a.e. is local and it does not imply $\mu_k \pa{\int_\Omega u S} \le 0 \le \mu_K \pa{\int_\Omega u S}$ for all $u$, which is global. A counterexample is the $D \times D$ diagonal matrix $S_{ii} = -1$ if $x \in [(i-1)/D[$ and $S_{ii} = 1$ otherwise, indeed with $u = 1$ we obtain $\min \sigma \pa{\int_\Omega u S} = (D-2)/D > 0$.

	$ii)$ We define $N\ind{in} := \int_\Omega \ro\ind{in}$ the number of \apo{inner} particles, and $N\ind{out} := N - N\ind{in}$ the number of outer particles. If $N\ind{out} =1$, we are in the situation where the only degeneracy comes from a one-body degeneracy at the one-body eigenspace $\vect (\phi_i)_{1 \le i \le D_N}$. At almost every $x \in \R^d$, $\matix = \mat{\phi_1 \dots \phi_{D_N}}^{\tx{T}} \mat{\phi_1 \dots \phi_{D_N}}$, all the eigenvalue are $0$ except one, which is $\sum_{i=1}^{D_N} \phi_i^2$. Indeed, the columns are proportional hence the rank is one, and the last eigenvalue is equal to the trace. Applying \eqref{codl} and integrating yields
 \begin{align*}
 N\ind{out} \le \int_\Omega \mu_{M_k^v-k}\pa{\matix} = 
\left\{
\begin{array}{ll}
	\int_\Omega \sum_{i=1}^{D_N} \phi_i^2 = D_N & \tx{if } k = m_k^v  \\
	0 & \tx{otherwise.}
\end{array}
\right.
 \end{align*}
 We deduce that $m_k^v = k$.

	$iii)$ We apply Lemma \ref{localp} $iii)$.
\end{proof}

\section*{Appendix 1: two remarks \\on density functionals for excited states}
In this appendix, we make two remarks on excited states functionals. Among other works, the problem was tackled by Lieb in \cite{Lieb85}. 
We take $\Omega = \R^d$ for simplicity. First, we notice that the inner problem in the Levy-Lieb functional \eqref{lly} is finite and has an optimizer.

 \begin{lemma}\label{inters}
	 Take $k \in \N$ with $k \ge 1$, $A \subset L^2\ind{a}(\R^{dN})$, $\dim_{\C} A = k$ and $\ro \in L^1(\R^d,\R_+)$ such that $\int_{\R^d} \ro = N$ and $\sqrt{\ro} \in H^1(\R^d)$. There exists $\p \in H^1\ind{a}(\R^{dN},\C)$ such that $\p \perp A$ and $\ro_{\p} = \ro$, and the infimum 
 \begin{align*}
	 \myinf{\p \in A^{\perp} \\ \ro_{\p} = \ro} \ps{\p,\hn(0)\p}
 \end{align*}
is finite and attained.
  \end{lemma}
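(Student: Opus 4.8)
The plan is to prove the two assertions separately. The infimum is automatically bounded below, since $w\ge 0$ and $-\Delta\ge 0$ give $\cE_0(\p)=\int\ab{\nabla\p}^2+\sum_{i<j}\int w(x_i-x_j)\ab{\p}^2\ge 0$; hence the only content of ``finite'' is that the admissible set $\acs{\p\in A^{\perp}:\ro_{\p}=\ro}$ is nonempty and carries a state of finite energy. So first I would exhibit a competitor, and then run the direct method to produce a minimizer.

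For the competitor, recall Lieb's equidensity construction \cite{Lieb83b}: since $\sqrt{\ro}\in H^1$ and $\int\ro=N$, one has an orthonormal family of orbitals $\vp_0,\vp_1,\dots$ with $\ab{\vp_j}^2=\ro/N$ and finite Dirichlet energy (for instance $\vp_j=\sqrt{\ro/N}\,e^{ijf}$ with a Harriman phase $f$ satisfying $\int\ro\ab{\nabla f}^2<\infty$), so that each Slater determinant $\Phi_J=\wedge_{i\in J}\vp_i$ with $\ab{J}=N$ lies in $H^1\ind{a}$, has $\ro_{\Phi_J}=\ro$, and the $\Phi_J$ are mutually orthonormal. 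Assume first $N\ge 2$ and pick $k+1$ index sets $J_0,\dots,J_k$ that are pairwise disjoint, so that they differ in all $N\ge 2$ orbitals. Two determinants built on the same orbital set and differing by at least two orbitals have vanishing transition one-body density, so every unit combination $\p=\sum_{i=0}^k c_i\Phi_{J_i}$ still has $\ro_{\p}=\sum_i\ab{c_i}^2\ro=\ro$. Thus $V\df\vect_{\C}\acs{\Phi_{J_0},\dots,\Phi_{J_k}}$ is a space all of whose unit vectors are density-$\ro$ states of finite energy, and since $\dim_{\C}V=k+1>k=\dim_{\C}A$, the restriction to $V$ of the orthogonal projection onto $A$ has nontrivial kernel; any unit vector there is the sought competitor.

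For attainment I would argue exactly as in the proof of Theorem \ref{exiss}. Let $\p_n$ be a minimizing sequence with $\p_n\perp A$ and $\ro_{\p_n}=\ro$. As $w\ge 0$, $\int\ab{\nabla\p_n}^2\le\cE_0(\p_n)$ stays bounded, so $\p_n\wra\p_\infty$ weakly in $H^1\ind{a}(\R^{dN})$ up to a subsequence. The decisive point is that the density constraint is tight for free: $\int_{\max_i\ab{x_i}>R}\ab{\p_n}^2\le\int_{\ab{x}>R}\ro_{\p_n}=\int_{\ab{x}>R}\ro\to 0$ uniformly in $n$, so $H^1$-boundedness together with tightness yields $\p_n\to\p_\infty$ strongly in $L^2$. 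Consequently $\ro_{\p_\infty}=\lim\ro_{\p_n}=\ro$ and $\int\ab{\p_\infty}^2=1$, while $\p_\infty\perp A$ because $A$ is finite-dimensional and the inner products pass to the limit. Weak lower semicontinuity of the convex, nonnegative energy then gives $\cE_0(\p_\infty)\le\liminf\cE_0(\p_n)$, so $\p_\infty$ is a minimizer.

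The real difficulty is the competitor for $N=1$, where one must preserve $\ro_{\p}=\ro$ exactly while imposing the $k$ orthogonality conditions with an $H^1$ state. For $N\ge 2$ the disjoint-block trick converts this into linear algebra on a genuine subspace, but for $N=1$ no such subspace exists, since unit combinations of distinct equidensity phases no longer have density $\ro$. Writing the constraints in the weighted space $L^2(\ro\,dx)$ as $\ps{u}{a_j/\sqrt{\ro}}_{\ro}=0$ for a unimodular $u$ (the $\pa{e^{imf}}_{m\in\Z}$ forming an orthonormal unimodular family there), one can produce such a $u$ in $L^2$ by a Lyapunov convexity or Borsuk--Ulam argument; the genuinely delicate point, and the crux of the whole lemma, is to upgrade this unimodular solution to one with $\sqrt{\ro}\,u\in H^1$, since the natural bang-bang or pointwise-normalized solutions are generally not of finite kinetic energy.
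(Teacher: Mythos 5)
Your argument for $N\ge 2$ is correct and genuinely different from the paper's. The paper takes the single Harriman--Lieb determinant $\wedge_{j=1}^N\vp_j$ with density $\ro$ and then twists each factor (and each frame vector of $A$) by a smooth phase $e^{if_j(x_1)}$ supplied by the Lazarev--Lieb orthonormalization procedure \cite[Corollary 1.3]{LazLie13}, so that the resulting determinant is simultaneously in $H^1\ind{a}$, of density $\ro$, and orthogonal to $A$; this works uniformly in $N$, including $N=1$. You instead exploit the fact that the Harriman family contains infinitely many orthonormal equidensity orbitals, build $k+1$ determinants on pairwise disjoint orbital sets, observe via the Slater--Condon rules that the whole $(k+1)$-dimensional span consists of density-$\ro$ states, and finish by dimension counting against $\dim_{\C}A=k$. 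That is a clean, more elementary route which avoids the phase machinery entirely --- but only when $N\ge2$, since it needs the cross transition densities to vanish, which requires the disjoint configurations to differ in at least two orbitals. Your attainment argument (tightness for free from $\ro_{\p_n}=\ro$, strong $L^2$ convergence, weak lower semicontinuity with $w\ge0$, persistence of the finitely many orthogonality constraints under weak convergence) coincides with the paper's, which simply defers to the adaptation of \cite[Theorem 3.3]{Lieb83b}.

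For $N=1$, however, what you describe is a genuine gap, and you correctly identify where it sits: a Lyapunov/Borsuk--Ulam argument produces a unimodular $u$ with $\sqrt{\ro}\,u\in L^2$ satisfying the $k$ orthogonality constraints, but gives no control on $\int\ro\ab{\nabla u}^2$, so the competitor need not lie in $H^1$. The missing ingredient is precisely the theorem the paper invokes: the smooth complex generalization of the Hobby--Rice theorem of Lazarev and Lieb \cite{LazLie13}, which asserts that for finitely many prescribed $L^2$ functions one can choose the phase $f$ to be $\cC^{\ii}$ with compact support (hence with bounded gradient), so that $\sqrt{\ro}\,e^{if}\in H^1$ and the orthogonality constraints hold exactly. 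Without citing or reproving that result, your proof does not establish the lemma as stated, since the statement places no restriction on $N$.
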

\begin{proof}
We take a frame $\vect \pa{\phi_1,\dots,\phi_k} = A$, and then we consider the Harriman-Lieb \cite{Harriman81,Lieb83b} orbitals $\vp_1,\dots,\vp_N \in H^1(\R^d,\C)$, which are such that $\ro_{\wedge_{j=1}^N \vp_j} = \ro$. We then use the orthonormalization procedure \cite[Corollary 1.3]{LazLie13} with the functions $\phi_1,\dots\phi_k$, $\vp_1,\dots,\vp_N$ so that there exists functions $g_1,\dots,g_k$, $f_1,\dots,f_N \in \cC^{\ii}\ind{c}(\R,\R)$ such that $\phi_1 e^{ig_1(x_1)}$,$\dots\phi_k e^{ig_k(x_1)}$, $\vp_1 e^{if_1(x_1)}$, $\dots$, $\vp_N e^{if_N(x_1)}$ is an orthonormal familly. Eventually, by defining $\p := \wedge_{j=1}^N \vp_j e^{i f_j(x_1)} \in H^1(\R^{dN},\C)$, we have $\ro_{\p} = \ro$ and $\p \in A$. We conclude that the set $\acs{\p \in H^1(\R^{dN},\C) \cap A^{\perp} \st \ro_{\p} = \ro}$ is not empty.

The minimum is attained by an adaptation of the proof of the case $k=0$ which is \cite[Theorem 3.3]{Lieb83b}.
 \end{proof}

 Then, we present a remark about the other possibility of defining the Levy-Lieb functional for excited states, which is
 \begin{align*}
	 \widetilde{F}\ex{k}(\ro) :=  \myinf{A\subset H^1\ind{a}(\R^{dN}) \\ \dim_{\C} A = k+1 \\ \exists \p \in A, \ro_{\p} = \ro}  \mymax{\p \in A \\ \ro_{\p} = \ro} \ps{\p,\hn(0)\p}.
 \end{align*}
The advantage of this one is that we can directly prove that $\widetilde{F}\ex{k}(\ro)$ is finite. However, we now show why it seems to be not of much use. The first Levy-Lieb functional provides an upper bound to the energy
 \begin{align*}
\exc{k}(v) \le  \myinf{\ro \ge 0, \int_{\R^d} \ro = N\\ \sqrt{\ro} \in H^1(\R^d)}  \pa{ \widetilde{F}\ex{k}(\ro) + \int_{\R^d} v\ro}.
 \end{align*}
 However, there are potentials $v \in L^p+L^{\ii}$ such that
 \begin{align*}
\exc{k}(v)  < \mysup{\ro \ge 0, \int_{\R^d} \ro = N\\ \sqrt{\ro} \in H^1(\R^d)}  \pa{ \widetilde{F}\ex{k}(\ro) + \int_{\R^d} v\ro}
 \end{align*}
 and hence $\widetilde{F}\ex{k}(\ro)$ does not provide a lower bound to the energy. Indeed, otherwise we would have 
\begin{align*}
 \widetilde{F}\ex{k}(\ro) \le E\ex{k}(v) - \int_{\R^d} v \ro
\end{align*}
 for all $v \in (L^p+L^{\ii})(\R^d,\R)$ and all $\ro \ge 0$ such that $\int_{\R^d}  \ro =N$, $\sqrt{\ro} \in H^1(\R^d)$, hence 
\begin{align*}
 \widetilde{F}\ex{k}(\ro) \le \inf_{v \in (\lpi)(\R^d,\R)} \pa{ \exc{k}(v) - \int_{\R^d} v \ro}.
\end{align*}
We assumed that $\int_{B_r} \ro > 0$ without loss of generality. With $v_n = n\indic_{B_r}$ for instance, then $\exc{k}(v_n) - \int_{\R^d} v_n \ro \sim - n \int_{B_r} \ro \ra -\ii$ when $n \ra +\ii$, and this would imply that $\widetilde{F}\ex{k}(\ro) = -\ii$ for all $\ro$.

\section*{Appendix 2: maximizing ${^+}\delta_v \ger{k}(u)$ in the two-fold\\degenerate case}
Here we show that in the case $\dim \kerr =2=p=q$, we can reduce the $3$-dimensional optimization problem \eqref{lao} of maximizing ${^+}\delta_v \ger{k}$ to a 1-dimensional problem. The interaction $w$ is general. This would enable to further accelerate the ODA algorithm.

Take $\p$ and $\Phi$ real such that they form a real orthonormal basis of the degenerate level $\kerrr$, we have $M_k^v = m_k^v +1$ and let us define $\ep := 1$ if $k = m_k^v$ and $\ep := -1$ if $k= M_k^v$. Then
 \begin{align}
	 {^+}\delta_v \ger{m_k^v} (u)  = \mymin{a, b \in \C \\ \ab{a}^2+\ab{b}^2 = 1 } \int_\Omega u \pa{\ro_{a \p + b \Phi} -\ro}
 \end{align}
 and ${^+}\delta_v \ger{M_k^v} (u)$ has the same formula but with a maximization. With the parametrization $a = \pa{\cos \alpha} e^{i \eta}$, $b = \pa{ \sin \alpha} e^{i(\eta + \beta)}$, we have
 \begin{align*}
	 \ro_{(\cos \alpha) \p + e^{i \beta} (\sin \alpha) \Phi} & = (\cos \alpha)^2 \ro_{\p} + (\sin \alpha)^2 \ro_{\Phi} + 2 \sin \alpha \cos \alpha \cos \beta \ro_{\p,\Phi} \\
	 & = \f{\ro_{\p}+\ro_{\Phi}}{2} +  \f{\ro_{\p}  -\ro_{\Phi}}{2} \cos 2 \alpha + \sin 2 \alpha  \cos \beta \ro_{\p,\Phi}
 \end{align*}
 and
 \begin{align*}
	 \int_\Omega u \ro_{(\cos \alpha) \p + e^{i \beta} (\sin \alpha) \Phi} = \int_\Omega u \f{\ro_{\p}+\ro_{\Phi}}{2} +  A \cos 2 \alpha + B \sin 2 \alpha  \cos \beta
 \end{align*}
where 
 \begin{align*}
	 A_u := \ud \int_\Omega u \pa{\ro_{\p}-\ro_{\Phi}}, \bhs B_u := \ps{\p,\vv{u}\Phi} = \int_\Omega u \ro_{\p,\Phi}.
 \end{align*}
 This yields
 \begin{multline*}
 ^+\delta_v \ger{m_k^v}(u)  =   \int_\Omega u \pa{\f{\ro_{\p} + \ro_{\Phi}}{2} - \ro} \\
	 + \mymin{\alpha,\beta \in [0,2 \pi] } \pa{A_u \cos 2\alpha  + B_u \cos \beta \sin 2\alpha}.
 \end{multline*}
 Optimizing over $\alpha$ yields the optimal value $2\alpha_u \in \pi \N + \arctan\pa{B_u (\cos \beta)/A_u}$ and using the classical formulas for $\cos \arctan$ and $\sin \arctan$, we get
 \begin{align*}
	 A_u \cos 2\alpha_u  + B_u \cos \beta \sin 2\alpha_u = 
	  \pm \sqrt{ A_u^2 + (\cos \beta)^2 B_u^2}.
 \end{align*}
Finally optimizing over $\beta$ gives $\beta_u = 0$, so
 \begin{align}\label{dunn}
	  {^+}\delta_v \ger{k} (u) 
	 = \int_\Omega \pa{\f{\ro_{\p} + \ro_{\Phi}}{2} -\ro} u - \ep\sqrt{A_u^2 + B_u^2},
 \end{align}
and for $B \neq 0$,
 \begin{align*}
\cos 2\alpha_u =  - \f{\ep A_u}{\sqrt{A_u^2+B_u^2}}, \bhs \sin 2\alpha_u =  -\f{\ep B_u}{\sqrt{A_u^2+B_u^2}}
 \end{align*}

 In order to compute the supremum over directions $u$, we consider the Lagrangian $\cL(u,\lambda) := {^+}\delta_v \ger{k} (u) - \lambda \pa{ \int_\Omega u^2 - 1}$, which is $\cC^{\ii}$ when $u$ is not a constant, and the Euler-Lagrange equation is
 \begin{align*}
	 2\lambda u^* & = \f{\ro_{\p} + \ro_{\Phi}}{2} - \ro - \ep\f{\pa{\ro_{\p}-\ro_{\Phi}}A_{u^*} + 2B_{u^*}\ro_{\p,\Phi}}{2\sqrt{A_{u^*}^2+B_{u^*}^2}} \\
	 & = \f{\ro_{\p} + \ro_{\Phi}}{2} - \ro +  (\cos 2\alpha_{u^*}) \f{\ro_{\p} - \ro_{\Phi}}{2} + (\sin 2\alpha_{u^*}) \ro_{\p,\Phi},
 \end{align*}
hence the optimal direction belongs to the directions
 \begin{align*}
	 u_{\theta} := c_{\theta} \pa{\f{\ro_{\p} + \ro_{\Phi}}{2} - \ro +  (\cos 2\theta) \f{\ro_{\p} - \ro_{\Phi}}{2} + (\sin 2\theta) \ro_{\p,\Phi}},
 \end{align*}
 where $\theta \in \seg{0,2\pi}$. The prefactor $c_{\theta} \ge 0$ is chosed such that $\int_\Omega u^2 = 1$. We could then reintroduce this expression back into \eqref{dunn} and reoptimize over $\theta$ to find an equation that $\theta$ has to verify, to maximize ${^+}\delta_v \ger{k} (u_{\theta})$, but this relation is quite involved. Denoting by $\theta^*$ the optimizing angle, which is the same for $k= m_k^v$ and $k=M_k^v$, we reduced the problem to a circle search
 \begin{align*}
	 \mysup{u \in L^2(\Omega,\R) \\ \nor{u}{L^2}=1} {^+}\delta_v \ger{k} (u) & = \mymax{\theta \in \seg{0,2\pi}} {^+}\delta_v \ger{k} (u_{\theta}) ={^+}\delta_v \ger{k} (u_{\theta^*}) \\
	 & = \int_\Omega \pa{\f{\ro_{\p} + \ro_{\Phi}}{2} -\ro} u_{\theta^*} - \ep\sqrt{A_{u_{\theta^*}}^2 + B_{u_{\theta^*}}^2}.
 \end{align*}
 

\section*{Appendix 3: computations for the implementation}

Here we provide complementary computations for the description of the algorithm of Section \ref{simus}.

\subsection{Optimal direction over mixed states} 
We now provide the details of the computations linked to the problem \eqref{passt}. We express the cost function in terms of the parameters $\Gamma_{IJ}$ and provide its gradient, needed in the implementation of ODA.

Once again we take the notations of Section \ref{defdu}, where we approximate 
 \begin{align*}
	 \kerr \tx{ by } \acs{ \p \in H^1\ind{a}(\Omega^N,\C) \st \ab{\cE_v\pa{\p}-\exc{k}(v)} \le t},  
 \end{align*}
having dimension $\ab{\cI\ind{tot}} = \ab{\cI\ind{out}}$, and we define $Q\ind{out} := \vect_{\R} \bpa{\wedge_{i \in I} \p_I}_{I \in \cI\ind{out}}$. Mixed states are decomposed into $\Gamma = \sum_{I,J \in \cI\ind{tot}} \Gamma_{IJ} \ketbra{\p_I}{\p_J}$, where $(\Gamma_{IJ})_{IJ}$ is a real positive matrix with unit trace, and
 \begin{align*}
	 \ro_{\Gamma} & = N\sum_{I,J \in \cI\ind{tot}} \Gamma_{IJ}\int_{\Omega^{N-1}} \p_I \p_J = \ro\ind{in} + \sum_{I,J \in \cI\ind{out}}\Gamma_{IJ} \ro\expo{out}_{IJ},
 \end{align*}
where $\ro\expo{out}_{IJ} := N\ind{out}\int_{\Omega^{N\ind{out}-1}} \p_I \p_J$. The problem \eqref{passt} can be reformulated by
 \begin{align*}
 \cP(v) = \mymin{\Gamma \in \cS(Q\ind{out}) \\ \Gamma \ge 0, \tr \Gamma = 1} e^{ \f{\pa{e\ind{in}+ \cE_v(\Gamma) - \exc{k}(v)}^2}{2T^2}} \int_{\Omega} \pa{\ro_{\Gamma} + \ro\ind{in} - \ro}^2,
 \end{align*}
where $e\ind{in} := \sum_{i \in I\ind{in}} E_i$. 
For any function $g : \cM \ra \R$, where $\cM$ is a smooth manifold modeled on a Hilbert space $\cH$ with scalar product $\ps{\ps{\cdot,\cdot}}$, we recall that we can define the gradient $\na_x g \in \Td_x \cM \simeq \cH$ by Riesz' theorem, via $(\d_x g) a = \ps{\ps{\na_x g,a}}$.
The energy of a mixed state is
 \begin{align*}
	 \cE_v \pa{\Gamma}  = \sum_{I,J \in \cI\ind{tot} } \Gamma_{IJ} \ps{\p_J, \sum_{i=1}^N \pa{-\Delta_i + v(x_i)} \p_I} 
	 = e\ind{in} + \mysum{I,J \in \cI\ind{out}}{} \Gamma_{IJ} e^v_{IJ},
 \end{align*}
 where for $I,J \in \cI\ind{out}$,
 \begin{align*}
	 e^v_{IJ} &:= \ps{\p_J, \sum_{i=1}^{N\ind{out}} \pa{-\Delta_i + v(x_i)} \p_I} \\
	 & =\delta_{IJ}\sum_{i \in I}E_i + \delta_{I \cupdot J = \acs{i,j}} \int_\Omega  \vp_j (-\Delta + v)\vp_i
 \end{align*}
Hence $\pa{\na_{\Gamma} \cE_v}_{IJ} = e^v_{IJ}$ which does not depend on $\Gamma$. The function we optimize is
 \begin{align*}
	 &  f\pa{\Gamma}= e^{ \f{\pa{e\ind{in} - \exc{k}(v) + \mysum{K,L\in \cI\ind{out}}{} \Gamma_{KL} e^v_{KL} }^2}{2T^2}} \bigg( \int_\Omega (\ro\ind{in} - \ro)^2 \\
	 & \bhs \bhs + 2\mysum{I,J\in \cI\ind{out}}{} \Gamma_{IJ}\int_\Omega (\ro\ind{in} - \ro) \ro_{KL} +  \mysum{I,J \in \cI\ind{out}  \\ K,L\in \cI\ind{out}}{} \Gamma_{IJ}\Gamma_{KL} \int_\Omega \ro_{IJ} \ro_{KL} \bigg),
 \end{align*}
 and has gradient
 \begin{align*}
	 & \bhs \pa{\na_{\Gamma} f}_{IJ} 
	 = e^{ \f{\pa{e\ind{in} - \exc{k}(v) + \mysum{K,L\in \cI\ind{out}}{} \Gamma_{KL} e^v_{KL} }^2}{2T^2}} \Bigg( 2\int_\Omega (\ro\ind{in} - \ro) \ro\expo{out}_{IJ} \\
	 & +  2 \mysum{K,L\in \cI\ind{out}}{} \Gamma_{KL} \int_\Omega \ro\expo{out}_{IJ} \ro\expo{out}_{KL}+ T^{-2} e_{IJ}^v \Big( e\ind{in} - \exc{k}(v) + \mysum{K,L\in \cI\ind{out}}{} \Gamma_{KL} e^v_{KL} \Big) \\
	 & \times \bigg( \int_\Omega (\ro\ind{in}-\ro)^2 + 2\mysum{K,L\in \cI\ind{out}}{} \Gamma_{KL} \int_\Omega (\ro\ind{in}-\ro)\ro\expo{out}_{KL} \\
	 & \bhs\bhs\bhs \bhs \bhs \bhs + \mysum{K,L\in \cI\ind{out} \\ X,Y \in \cI\ind{out}}{} \Gamma_{KL}\Gamma_{XY}\int_\Omega \ro\expo{out}_{KL}\ro\expo{out}_{XY} \bigg) \Bigg).
 \end{align*}
Hence, to launch the ODA, one has to compute
\begin{align*}
e\ind{in} - \exc{k}(v), \int_\Omega (\ro\ind{in}-\ro)^2,   \int_\Omega (\ro\ind{in} - \ro) \ro\expo{out}_{KL}, e_{KL}^v, \int_\Omega \ro\expo{out}_{KL}\ro\expo{out}_{XY}.
\end{align*}

\subsection{Optimization over pure states}
We take the notations of Section \ref{defdu}, and here give an extra computation concerning \eqref{passtp}. The familly $(\wedge_{i \in I\ind{in}} \vp_i \wedge \p_I)_{I \in \cI\ind{out}}$ is a basis of $\kerrr$. We can represent the eigenfunctions of the $k\expo{th}$ $N$-body level by complex columns vectors $C$, 
 \begin{align*}
	 \p = \sum_{I \in \cI\ind{tot}} C_I \p_I = \bigwedge_{j \in I\ind{in}} \vp^j  \sum_{I \in \cI\ind{out}} C_I \bigwedge_{i\in I} \vp^i.
 \end{align*}
The coefficients verify $\sum_{I \in \cI\ind{out}} \ab{C_I}^2 = 1$ and the set of such $C$'s forms a Grassmann manifold. Then the density is
\begin{align*}
	\ro_{\p} = \ro\ind{in} + \sum_{I \in \cI\ind{out}} \ab{C_I}^2 \sum_{i \in I} \vp_i^2+  2 \mysum{I,J \in \cI\ind{out} \\ \exists \ell,i,j \\ I \cupdot J = \acs{i,j}}{}  \pa{\re C_I \ov{C_J}}\vp_i \vp_j,
\end{align*}
and is invariant under $U\bpa{\ab{\cI\ind{out}}}$ transformations of $C$.




\bibliographystyle{siam}
\bibliography{build/inverse_pot_static.bbl}

\begin{thebibliography}{10}

\bibitem{AccBraMar20}
{\sc G.~Accorto, P.~Brandolini, F.~Marino, A.~Porro, A.~Scalesi, G.~Col{\`o},
  X.~Roca-Maza, and E.~Vigezzi}, {\em First step in the nuclear inverse
  {K}ohn-{S}ham problem: From densities to potentials}, Phys. Rev. C, 101
  (2020), p.~024315.

\bibitem{AlfCoyEhrLom19}
{\sc A.~Alfonsi, R.~Coyaud, V.~Ehrlacher, and D.~Lombardi}, {\em Approximation
  of optimal transport problems with marginal moments constraints}, Math.
  Comput,  (2020).

\bibitem{BabMel97}
{\sc I.~Babu{\v{s}}ka and J.~M. Melenk}, {\em The partition of unity method},
  Int. J. Numer. Meth. Eng, 40 (1997), pp.~727--758.

\bibitem{Julia}
{\sc J.~Bezanson, A.~Edelman, S.~Karpinski, and V.~B. Shah}, {\em Julia: A
  fresh approach to numerical computing}, SIAM rev, 59 (2017), pp.~65--98.

\bibitem{CalLatGid20}
{\sc T.~J. Callow, N.~N. Lathiotakis, and N.~I. Gidopoulos}, {\em
  Density-inversion method for the {K}ohn-{S}ham potential: Role of the
  screening density}, J. Chem. Phys, 152 (2020), p.~164114.

\bibitem{Cances00b}
{\sc {\'E}.~Canc{\`e}s}, {\em S{CF} algorithms for {HF} electronic
  calculations}, in Mathematical models and methods for ab initio quantum
  chemistry, vol.~74 of Lecture Notes in Chem, Springer, Berlin, 2000, ch.~2,
  pp.~17--43.

\bibitem{CanBri00}
{\sc {\'E}.~Canc{\`e}s and C.~{Le Bris}}, {\em Can we outperform the {DIIS}
  approach for electronic structure calculations?}, Int. J. Quantum Chem, 79
  (2000), pp.~82--90.

\bibitem{CanBri00a}
\leavevmode\vrule height 2pt depth -1.6pt width 23pt, {\em On the convergence
  of {SCF} algorithms for the {H}artree-{F}ock equations}, M2AN Math. Model.
  Numer. Anal., 34 (2000), pp.~749--774.

\bibitem{CanMou14}
{\sc E.~Canc{\`e}s and N.~Mourad}, {\em A mathematical perspective on density
  functional perturbation theory}, Nonlinearity, 27 (2014), p.~1999.

\bibitem{CerSenRobFro22}
{\sc F.~Cernatic, B.~Senjean, V.~Robert, and E.~Fromager}, {\em Ensemble
  density functional theory of neutral and charged excitations}, Top. Curr.
  Chem, 380 (2022), pp.~1--80.

\bibitem{ChaLie84}
{\sc J.~Chayes, L.~Chayes, and E.~H. Lieb}, {\em The inverse problem in
  classical statistical mechanics}, Comm. Math. Phys, 93 (1984), pp.~57--121.

\bibitem{ChaChaRus85}
{\sc J.~Chayes, L.~Chayes, and M.~B. Ruskai}, {\em Density functional approach
  to quantum lattice systems}, J. Stat. Phys, 38 (1985), pp.~497--518.

\bibitem{ChuWhe06}
{\sc S.-K. Chua and R.~L. Wheeden}, {\em Estimates of best constants for
  weighted {P}oincar{\'e} inequalities on convex domains}, Proc. London Math.
  Soc., 93 (2006), pp.~197--226.

\bibitem{CoyaudPhD}
{\sc R.~Coyaud}, {\em Study of approximations of optimal transport problems and
  application to physics}, PhD thesis, 2021.
\newblock Ph{D} thesis directed by Alfonsi, Aurélien and Ehrlacher, Virginie,
  Mathématiques Paris Est 2021.

\bibitem{EngDre11}
{\sc E.~Engel and R.~Dreizler}, {\em Density Functional Theory: An Advanced
  Course}, Theoretical and Mathematical Physics, Springer, 2011.

\bibitem{EngEng83}
{\sc H.~Englisch and R.~Englisch}, {\em Hohenberg-{K}ohn theorem and
  non-{V}-representable densities}, Physica A Stat. Mech. Appl., 121 (1983),
  pp.~253--268.

\bibitem{FouLewSol18}
{\sc S.~Fournais, M.~Lewin, and J.~P. Solovej}, {\em The semi-classical limit
  of large fermionic systems}, Calc. Var. Partial Differ. Equ, 57 (2018),
  pp.~1--42.

\bibitem{FreLev82}
{\sc K.~F. Freed and M.~Levy}, {\em Direct first principles algorithm for the
  universal electron density functional}, J. Chem. Phys, 77 (1982),
  pp.~396--398.

\bibitem{Garrigue18}
{\sc L.~Garrigue}, {\em Unique continuation for many-body {S}chr{\"o}dinger
  operators and the {H}ohenberg-{K}ohn theorem}, Math. Phys. Anal. Geom, 21
  (2018), p.~27.

\bibitem{Garrigue19}
\leavevmode\vrule height 2pt depth -1.6pt width 23pt, {\em {Unique continuation
  for many-body {S}chr{\"o}dinger operators and the {H}ohenberg-{K}ohn theorem.
  II. The Pauli Hamiltonian}}, Doc. Math,  (2020).

\bibitem{Garrigue21}
\leavevmode\vrule height 2pt depth -1.6pt width 23pt, {\em Some properties of
  the potential-to-ground state map in quantum mechanics}, Commun. Math. Phys,
  386 (2021), pp.~1803--1844.

\bibitem{GauBur04}
{\sc R.~Gaudoin and K.~Burke}, {\em Lack of {H}ohenberg-{K}ohn theorem for
  excited states}, Phys. Rev. Lett, 93 (2004), p.~173001.

\bibitem{GiuVig05}
{\sc G.~Giuliani and G.~Vignale}, {\em Quantum Theory of the Electron Liquid},
  Cambridge University Press, 2005.

\bibitem{GonDan18}
{\sc A.~Gonis and M.~D{\"a}ne}, {\em On the v-representability of ensemble
  densities of electron systems}, J. Phys. Chem. Solids, 116 (2018),
  pp.~100--112.

\bibitem{GroOliKoh88}
{\sc E.~K.~U. Gross, L.~N. Oliveira, and W.~Kohn}, {\em Rayleigh-{R}itz
  variational principle for ensembles of fractionally occupied states}, Phys.
  Rev. A, 37 (1988), pp.~2805--2808.

\bibitem{Harriman81}
{\sc J.~E. Harriman}, {\em Orthonormal orbitals for the representation of an
  arbitrary density}, Phys. Rev. A, 24 (1981), pp.~680--682.

\bibitem{DFTK20}
{\sc M.~F. Herbst, A.~Levitt, G.~Kemlin, S.~Sirajdine, E.~Berquist, L.~Ponet,
  and Tzsuzsi}, {\em Juliamolsim/dftk.jl: v0.2.3}, Dec. 2020.

\bibitem{HohKoh64}
{\sc P.~Hohenberg and W.~Kohn}, {\em Inhomogeneous electron gas}, Phys. Rev,
  136 (1964), pp.~B864--B871.

\bibitem{JenWas18}
{\sc D.~S. Jensen and A.~Wasserman}, {\em Numerical methods for the inverse
  problem of density functional theory}, Int. J. Quantum Chem, 118 (2018),
  p.~e25425.

\bibitem{KanZimGav19}
{\sc B.~Kanungo, P.~M. Zimmerman, and V.~Gavini}, {\em Exact
  exchange-correlation potentials from ground-state electron densities}, Nat.
  Commun, 10 (2019), pp.~1--9.

\bibitem{Kato57}
{\sc T.~Kato}, {\em On the eigenfunctions of many-particle systems in quantum
  mechanics}, Commun. Pure Appl. Math, 10 (1957), pp.~151--177.

\bibitem{KohSha65}
{\sc W.~Kohn and L.~J. Sham}, {\em Self-consistent equations including exchange
  and correlation effects}, Phys. Rev. (2), 140 (1965), pp.~A1133--A1138.

\bibitem{KumSinHar19}
{\sc A.~Kumar, R.~Singh, and M.~K. Harbola}, {\em Universal nature of different
  methods of obtaining the exact {K}ohn-{S}ham exchange-correlation potential
  for a given density}, J. Phys. B, 52 (2019), p.~075007.

\bibitem{LazLie13}
{\sc O.~Lazarev and E.~H. Lieb}, {\em A smooth, complex generalization of the
  {H}obby-{R}ice theorem}, Indiana Univ. Math. J., 62 (2013), pp.~1133--1141.

\bibitem{Leoni17}
{\sc G.~Leoni}, {\em A first course in Sobolev spaces}, American Mathematical
  Soc., 2017.

\bibitem{Levy79}
{\sc M.~Levy}, {\em Universal variational functionals of electron densities,
  first-order density matrices, and natural spin-orbitals and solution of the
  $v$-representability problem}, Proc. Natl. Acad. Sci. USA, 76 (1979),
  pp.~6062--6065.

\bibitem{Lewin11}
{\sc M.~Lewin}, {\em Geometric methods for nonlinear many-body quantum
  systems}, J. Funct. Anal., 260 (2011), pp.~3535--3595.

\bibitem{LewLieSei18}
{\sc M.~Lewin, E.~H. Lieb, and R.~Seiringer}, {\em Statistical mechanics of
  the~uniform~electron~gas}, J. Éc. polytech. Math, 5 (2018), pp.~79--116.

\bibitem{LewLieSei19}
\leavevmode\vrule height 2pt depth -1.6pt width 23pt, {\em The local density
  approximation in density functional theory}, Pure Appl. Anal, 2 (2019),
  pp.~35--73.

\bibitem{LewLieSei20}
\leavevmode\vrule height 2pt depth -1.6pt width 23pt, {\em Universal
  functionals in density functional theory}, arXiv preprint arXiv:1912.10424,
  (2019).

\bibitem{Lieb83b}
{\sc E.~H. Lieb}, {\em Density functionals for {C}oulomb systems}, Int. J.
  Quantum Chem, 24 (1983), pp.~243--277.

\bibitem{Lieb85}
\leavevmode\vrule height 2pt depth -1.6pt width 23pt, {\em Density functional
  methods in physics}, NATO ASI Series B, 123 (1985).

\bibitem{LieLos01}
{\sc E.~H. Lieb and M.~Loss}, {\em Analysis}, vol.~14 of Graduate Studies in
  Mathematics, American Mathematical Society, Providence, RI, 2nd~ed., 2001.

\bibitem{LieThi76}
{\sc E.~H. Lieb and W.~E. Thirring}, {\em Inequalities for the moments of the
  eigenvalues of the {S}chr{\"o}dinger {H}amiltonian and their relation to
  {S}obolev inequalities}, Studies in Mathematical Physics, Princeton
  University Press, 1976, pp.~269--303.

\bibitem{MelBab96}
{\sc J.~M. Melenk and I.~Babu{\v{s}}ka}, {\em The partition of unity finite
  element method: basic theory and applications}, in Research Report/Seminar
  f{\"u}r Angewandte Mathematik, vol.~1996, Eidgen{\"o}ssische Technische
  Hochschule, Seminar f{\"u}r Angewandte Mathematik, 1996.

\bibitem{MorCarGeo20}
{\sc J.~R. Moreno, G.~Carleo, and A.~Georges}, {\em Deep learning the
  {H}ohenberg-{K}ohn maps of density functional theory}, Phys. Rev. Lett, 125
  (2020), p.~076402.

\bibitem{NaiOhaLia19}
{\sc T.~Naito, D.~Ohashi, and H.~Liang}, {\em Improvement of functionals in
  density functional theory by the inverse {K}ohn-{S}ham method and density
  functional perturbation theory}, J. Phys. B, 52 (2019), p.~245003.

\bibitem{PenLaeTelRug19}
{\sc M.~Penz, A.~Laestadius, E.~I. Tellgren, and M.~Ruggenthaler}, {\em
  Guaranteed convergence of a regularized {K}ohn-{S}ham iteration in finite
  dimensions}, Phys. Rev. Lett, 123 (2019), p.~037401.

\bibitem{ErrPenLaeTel20}
{\sc M.~Penz, A.~Laestadius, E.~I. Tellgren, M.~Ruggenthaler, and P.~E.
  Lammert}, {\em Erratum: Guaranteed convergence of a regularized {K}ohn-{S}ham
  iteration in finite dimensions}, Phys. Rev. Lett, 125 (2020), p.~249902.

\bibitem{PolRoc96}
{\sc R.~A. Poliquin and R.~T. Rockafellar}, {\em Generalized hessian properties
  of regularized nonsmooth functions}, SIAM J. Optim, 6 (1996), pp.~1121--1137.

\bibitem{SchNeu18}
{\sc D.~Schnieders and J.~Neugebauer}, {\em Accurate embedding through
  potential reconstruction: A comparison of different strategies}, J. Chem.
  Phys, 149 (2018), p.~054103.

\bibitem{ShaFan95}
{\sc A.~Shapiro and M.~K. Fan}, {\em On eigenvalue optimization}, SIAM J.
  Optim, 5 (1995), pp.~552--569.

\bibitem{WagBakSto14}
{\sc L.~O. Wagner, T.~E. Baker, E.~Stoudenmire, K.~Burke, and S.~R. White},
  {\em {K}ohn-{S}ham calculations with the exact functional}, Phys. Rev. B, 90
  (2014), p.~045109.

\bibitem{WuYan03}
{\sc Q.~Wu and W.~Yang}, {\em A direct optimization method for calculating
  density functionals and exchange--correlation potentials from electron
  densities}, J. Chem. Phys, 118 (2003), pp.~2498--2509.

\bibitem{Zeidler3}
{\sc E.~Zeidler}, {\em Nonlinear functional analysis and its applications. III
  : Variational methods and optimization}, Springer Science \& Business Media,
  2013.

\bibitem{Zeidler4}
\leavevmode\vrule height 2pt depth -1.6pt width 23pt, {\em Nonlinear functional
  analysis and its applications. IV : Applications to mathematical physics},
  Springer Science \& Business Media, 2013.

\end{thebibliography}
\end{document}